\renewcommand*{\backref}[1]{}
\renewcommand*{\backrefalt}[4]{%
 \ifcase #1 %
  \relax%
 \or
  ~{\small [\textsc{p.~\footnotesize{\!#2}}]}%
 \else
  ~{\small [\textsc{p.~\footnotesize{\!#2}}]}%
 \fi
}
\numberwithin{equation}{section}
\def\be{\begin{equation}}
\def\ee{\end{equation}}
\def\ba{\begin{aligned}}
\def\ea{\end{aligned}}
\newcommand{\mathd}{\mathrm{d}}
\newcommand{\mathe}{\mathrm{e}}
\newcommand{\GBE}{G$\beta$E}
\newcommand{\WLBE}{WL$\beta$E}
\newcommand{\Tr}{\mathrm{Tr}}
\newcommand{\Aut}{\mathrm{Aut}}
\newcommand{\genus}{z}
\newcommand{\bu}{\mathbf{u}}
\newcommand{\bv}{\mathbf{v}}
\newcommand{\bx}{\mathbf{x}}
\newcommand{\by}{\mathbf{y}}
\newcommand{\bz}{\mathbf{z}}
\theoremstyle{plain}
\newtheorem{lemma}{Lemma}[section]
\newtheorem{ansatz}{Ansatz}[section]
\newtheorem{proposition}{Proposition}[section]
\newtheorem{theorem}{Theorem}[section]
\theoremstyle{definition}
\newtheorem{definition}{Definition}[section]
\theoremstyle{remark}
\newtheorem{remark}{Remark}[section]
\begin{document}

\NewTblrTheme{fancy}{
\SetTblrStyle{head}{font=\normalsize}
\SetTblrStyle{foot}{font=\normalsize}
\SetTblrStyle{caption-tag}{font=\scshape}
\SetTblrStyle{conthead}{\itshape}
\SetTblrStyle{contfoot}{\itshape}
}

\title{$\beta$-Ensembles and higher genera Catalan numbers}

\author[a]{Luca Cassia\footnote{\texttt{luca.cassia@unimelb.edu.au}}}
\author[b]{Vera Posch\footnote{\texttt{poschv@tcd.ie}}}
\author[c]{Maxim Zabzine\footnote{\texttt{maxim.zabzine@physics.uu.se}}}
\affil[a]{School of Mathematics and Statistics, The University of Melbourne Parkville, Melbourne, VIC 3010, Australia}
\affil[b]{School of Mathematics and Hamilton Mathematics Institute, Trinity College Dublin, Dublin, Ireland}
\affil[c]{Department of Physics and Astronomy, Uppsala University, Box 516, 75120 Uppsala, Sweden}

\date{ }

\maketitle

\begin{abstract}
We propose formulas for the large $N$ expansion of the generating function of connected correlators of the $\beta$-deformed Gaussian and Wishart--Laguerre matrix models.
We show that our proposal satisfies the known transformation properties under the exchange of $\beta$ with $1/\beta$ and, using Virasoro constraints, we derive a recursion formula for the coefficients of the expansion.
In the undeformed limit $\beta=1$, these coefficients are integers and they have the combinatorial interpretation of generalized Catalan numbers.
For generic $\beta$, we define the higher genus Catalan polynomials $C_{g,\nu}(\beta)$ whose coefficients are integer numbers.
\end{abstract}

\tableofcontents

\parskip=4pt

\section{Introduction}

It is by now a well-known fact that the generating functions of connected correlators of matrix models often admit a topological expansion \cite{Borot:2013asy} which contains information about some enumerative geometric problems, such as map enumeration, Hurwitz theory, intersection theory on moduli spaces and Gromov--Witten theory \cite{Ekedahl:2001hur,Okounkov:2000gx,Okounkov:2002cja,Eynard:2016yaa}.

On the other hand, matrix models satisfy Virasoro (or W-algebra) constraints associated to reparametrization invariance of the integrals under certain infinitesimal deformations \cite{MIRONOV199047}. Such constraints arise as Ward identities for the correlations functions and can be recast in the form of linear differential equations for the formal generating function of all correlation functions.
The generating function of connected correlators of the matrix model, being the logarithm of the generating function of all correlators, satisfies a related set of nonlinear equations which follow directly from Virasoro constraints.

In this article, we will be studying the interplay between these two aspects of the theory of random matrix models, namely, we will be interested in exploiting Virasoro constraints to derive the topological expansion of the generating function of connected correlators.
By a slight abuse of notation, we will refer to this generating function as the \emph{time-dependent free energy} which is a formal power series in higher times.

More specifically, we consider $\beta$-deformed ensembles of random Hermitian matrices with polynomial potentials,
\be
\label{eq:generating-function-intro}
 Z(N,\beta,\lambda,\bu) =
 \frac1{N!} \int \prod_{i=1}^N \mathd x_i \prod_{i<j} |x_i-x_j|^{2\beta}
 \mathe^{-\sum_{i=1}^NV(x_i)+\frac1{N}\sum_{k=1}^\infty u_k\sum_{i=1}^N x_i^k}
\ee
where $V(x)=\frac{N}{m\lambda}x^m$ and $\beta$ is an arbitrary complex deformation parameter which allows to interpolate between the various types of standard ensembles such as orthogonal ($\beta=1/2$), unitary ($\beta=1$) and symplectic ($\beta=2$) \cite{Mehta:1990,Forrester:2010,Akemann:10.1093}.
This generating function provides an highest weight representation of the Virasoro algebra whose generators are represented by differential operators in the \emph{higher times} variables $\bu=\{u_1,u_2,\dots\}$, such that \eqref{eq:generating-function-intro} is annihilated by a parabolic subalgebra,
\be
 \left(
 \frac{N^2}{\lambda} \frac{\partial}{\partial u_{n+m}}
 - L_n\right) Z(N,\beta,\lambda,\bu) = 0\, , \qquad n \geq 1-m
\ee
for certain operators $L_n$ defined as in \eqref{eq:virasoro_generators}.
In the case of $m=1,2$, the constraints admit a unique solution which can be recast either in the form of exponentials of W-operators \cite{Morozov:2009xk,Cassia:2020uxy} or in the form of \emph{superintegrability} formulas for Jack polynomials \cite{Cassia:2020uxy,Mironov:2022fsr,Mishnyakov:2022bkg,Bawane:2022cpd}.

We restrict ourselves to considering these two cases, which correspond to the Gaussian $\beta$-ensemble (\GBE{}) for $m=2$ and the Wishart--Laguerre $\beta$-ensemble (\WLBE{}) for $m=1$. In order to derive the topological expansion of the time-dependent free energy,
we make use of the following key ingredients:
\begin{itemize}
\item nonlinear Virasoro constraints for the generating function
\begin{equation*}
 F(N,\beta,\lambda,\bu):=\log \frac{Z(N,\beta,\lambda,\bu)}{Z(N,\beta,\lambda,0)}\,;
\end{equation*}
\item superintegrability formulas for the average of characters \cite{Mironov:2022fsr}, i.e., the property of some matrix models that
\begin{equation*}
 \langle\mathrm{Jack}_\mu\rangle\sim\mathrm{Jack}_\mu \,;
\end{equation*}
\item symmetries of the constraint equations under the involution that sends $\beta$ to $1/\beta$.
\end{itemize}
Combining these properties of the matrix model, we are led to the following \emph{ansatz} for the time-dependent free energy,
\begin{multline}
 F(N,\beta,\lambda,\bu)
 = \sum_{\ell=1}^\infty \sum_{g\in\frac12\mathbb{N}}
 \frac{N^{2-2g-2\ell} \beta^{1-\ell-2g}}{\ell!}
 \sum_{k_1,\dots,k_\ell=1}^\infty (\beta\lambda)^{\sum_{j=1}^\ell \frac{k_j}{m}} \\
 \times\sum_{i_1+i_2=2g}(-\beta)^{i_1} C_{g,[k_1,\dots,k_\ell]}^{(i_1,i_2)}
 \prod_{j=1}^\ell u_{k_j}
\end{multline}
where the sum over half-integers $g$ is interpreted as a \emph{genus expansion}.
Correspondingly, the coefficients $C_{g,[k_1,\dots,k_\ell]}^{(i_1,i_2)}$ take the role of enumerative invariants associated to the $\beta$-deformed model. By analogy with the undeformed case, we define the sum
\be
 C_{g,[k_1,\dots,k_\ell]}(\beta) :=
 \sum_{i_1+i_2=2g}(-\beta)^{i_1} C_{g,[k_1,\dots,k_\ell]}^{(i_1,i_2)}
\ee
to be the \emph{Catalan polynomial} of genus $g$ associated to the tuple $[k_1,\dots,k_\ell]$. Plugging the ansatz into the Virasoro constraint equations gives a set of (\emph{cut-and-join}) recursion relations for the polynomials $C_{g,[k_1,\dots,k_\ell]}(\beta)$ which can be solved uniquely. Remarkably, the coefficients $C_{g,[k_1,\dots,k_\ell]}^{(i_1,i_2)}$ are all \emph{integer} numbers and they provide a refinement of the ordinary higher genus Catalan numbers described in \cite{WALSH1972192,Carlet:2020enf}.

The organization of the article is as follows.
\begin{itemize}
\item In \cref{sec:GBE,sec:virasoro,sec:symmetry,sec:genus,sec:recursion,sec:limit} we give the definition of the generating function of (connected) correlators of the \GBE{} as a function of higher times, the rank $N$, the coupling $\lambda$ and the deformation parameter $\beta$.
We derive the Virasoro constraints satisfied by the time-dependent free energy, and we observe the symmetry of these objects under the involution that exchanges $\beta$ and $1/\beta$.
Making use of these properties, we then derive an ansatz for the $1/N$ expansion of the time-dependent free energy, and we identify the coefficients $C_{g,\nu}(\beta)$ as $\beta$-deformations of the higher genus Catalan numbers that appear in the topological expansion of the ordinary Gaussian matrix model. We show that the $\beta$-dependence of the Catalan polynomials $C_{g,\nu}(\beta)$ can be reabsorbed into Schur polynomials of two variables $s_\lambda(1,-\beta)$ and this leads to the definition of a secondary set of integer invariants, the $n_{\nu,\lambda}$.
A cut-and-join recursion formula for the Catalan polynomials is then obtained from Virasoro constraints and the undeformed limit $\beta\to1$ is discussed.
\item In \cref{sec:WLBE} we repeat the analysis for the case of linear potential, i.e., the \WLBE{} matrix model, and we obtain similar formulas for the genus expansion and the recursion relations.
\item In \cref{sec:outlook} we comment on our results and identify some of the open questions that deserve further investigation.
\end{itemize}
Finally, in \cref{sec:Schur} we collect some useful facts about Schur polynomials in two variables, in \cref{sec:hypermaps} we provide a formula to relate the Catalan polynomials of the \GBE{} to the marginal $b$-polynomials that appear in the $b$-conjectures of Goulden and Jackson, and in \cref{sec:tables} we tabulate some of the polynomials $C_{g,\nu}(\beta)$ and integer invariants $n_{\nu,\lambda}$ obtained by solving the recursion up to finite order in the genus and in higher times.

\subsubsection*{Acknowledgements}
We are grateful to Peter J.\ Forrester, Anas A.\ Rahman and Marvin Anas Hahn for useful comments on an earlier draft of this manuscript.
L.C.\ was supported by the ARC Discovery Grant DP210103081.
V.P.\ has received funding from the European Research Council (ERC) under
the European Union’s Horizon 2020 research and innovation programme
(Grant Agreement 101088193).
M.Z.\ was supported by the VR excellence center grant ``Geometry and Physics'' 2022-06593.

\section{The \texorpdfstring{$\beta$}{beta}-deformed Gaussian ensemble}
\label{sec:GBE}

Recall that the classical Gaussian unitary ensemble (GUE) is defined by the matrix integral
\be
 \int_{N\times N} \mathd M ~ \mathe^{-\frac{N}{2\lambda}\Tr M^2}
\ee
where the integration is over the space of Hermitian $N\times N$ matrices and the parameter $N$ is the rank.
Here we consider a quadratic potential with coupling constant $\lambda$ such that $\Re(\lambda)>0$.

The measure and the potential are invariant under the adjoint action of the group $U(N)$, therefore, it is possible to rewrite this matrix integral as an integral over eigenvalues $x_i$ as
\be
\label{eq:integral-representation}
 \frac1{N!}\int_{\mathbb{R}^N} \prod_{i=1}^N \mathd x_i ~ \prod_{i<j}|x_i-x_j|^2
 ~ \mathe^{ - \frac{N}{2\lambda} \sum_i x_i^2}
\ee
where $\prod_{i<j}(x_i-x_j)$ is the Vandermonde determinant.
The $\beta$-deformation of the GUE is defined as a 1-parameter deformation of the eigenvalue integral by substituting the Vandermonde determinant with its $\beta$ power. This is a very natural and well-studied deformation which is known to have a matrix integral representation via tridiagonal matrices as shown in \cite{Dumitriu:2002mat}. We refer to this matrix model as the Gaussian $\beta$-deformed ensemble.

The generating function of all polynomial expectation values is defined by the formal power series in higher times $\bu=\{u_1,u_2,\dots\}$ as
\be
\label{eq:generating-function}
 Z(N,\beta,\lambda,\bu) := \frac1{N!}\int_{\mathbb{R}^N} \prod_{i=1}^N \mathd x_i ~ \prod_{i<j}|x_i-x_j|^{2\beta} ~
 \mathe^{ - \frac{N}{2\lambda} \sum_i x_i^2
 + \frac{1}{N} \sum_{k=1}^\infty u_k \sum_i x_i^k}
\ee
This is the main object that we will study in the following. Its logarithm is the generating function of all connected correlation functions and is known as the time-dependent free energy of the matrix model. We denote this function as
\be
 F(N,\beta,\lambda,\bu) := \log \frac{Z(N,\beta,\lambda,\bu)}{Z(N,\beta,\lambda,0)}
\ee
where we normalized the time-dependent free energy so that $F(N,\beta,\lambda,0)=0$.

It is a well-known fact that the time-dependent free energy of a matrix model admits a simultaneous expansion in the parameter $N$ and higher times $u_k$ which can be interpreted as a genus expansion, i.e., a sum over contributions coming from surfaces of arbitrary genus $g$ and punctures.
In \cref{sec:genus} we show that a similar expansion also exists after the $\beta$-deformation.

We conclude this section by observing that the dependence on the coupling parameter $\lambda$ is quite simple, and it can be easily reabsorbed via a rescaling of times, as we show in the following lemma.
\begin{lemma}
\label{lm:homogeneity}
The function $F(N,\beta,\lambda,\bu)$ satisfies the following homogeneity equation
\be
 F(N,\beta,\lambda,\bu) = \lambda^{\frac{D_{\bu}}{2}} F(N,\beta,1,\bu)
\ee
with the dilation operator defined as
\be
\label{eq:dilatation}
 D_{\bu} = \sum_{k\geq1} k u_k \frac{\partial}{\partial u_k}~.
\ee
\end{lemma}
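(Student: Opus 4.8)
The plan is to prove the identity by a simple rescaling of the integration variables in \eqref{eq:generating-function}, followed by the observation that substituting $u_k\mapsto\lambda^{k/2}u_k$ into a formal power series is precisely the action of the operator $\lambda^{D_{\bu}/2}$.

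First I would set $x_i=\sqrt\lambda\,y_i$ in \eqref{eq:generating-function} (for $\Re(\lambda)>0$ we take the principal branch of the square root; at the formal level, where $Z$ is understood as the power series in $\bu$ obtained by expanding the $u$-dependent exponential, the substitution is performed term by term in the resulting moment integrals and no analytic subtlety arises). The Jacobian contributes $\lambda^{N/2}$; the Vandermonde factor contributes $\lambda^{\beta\binom N2}$, since $\prod_{i<j}|x_i-x_j|^{2\beta}=\lambda^{\beta\binom N2}\prod_{i<j}|y_i-y_j|^{2\beta}$; the Gaussian weight becomes $\lambda$-independent, $\mathe^{-\frac{N}{2\lambda}\sum_i x_i^2}=\mathe^{-\frac N2\sum_i y_i^2}$; and each higher-time contribution picks up a power of $\lambda$, $\mathe^{\frac1N\sum_k u_k\sum_i x_i^k}=\mathe^{\frac1N\sum_k(\lambda^{k/2}u_k)\sum_i y_i^k}$. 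Collecting these factors gives
\be
 Z(N,\beta,\lambda,\bu)=\lambda^{\frac N2+\beta\binom N2}\,Z(N,\beta,1,\{\lambda^{k/2}u_k\})\,.
\ee

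Next I would divide by $Z(N,\beta,\lambda,0)=\lambda^{\frac N2+\beta\binom N2}\,Z(N,\beta,1,0)$, so that the $\bu$-independent prefactor cancels and
\be
 F(N,\beta,\lambda,\bu)=F(N,\beta,1,\{\lambda^{k/2}u_k\})\,.
\ee
Finally, since $F(N,\beta,1,\bu)$ is a formal power series in the variables $u_k$, it suffices to check the claim on a monomial $\prod_j u_{k_j}$: the operator $D_{\bu}$ of \eqref{eq:dilatation} acts on it as multiplication by $\sum_j k_j$, hence $\lambda^{D_{\bu}/2}$ multiplies it by $\lambda^{\frac12\sum_j k_j}=\prod_j\lambda^{k_j/2}$, which is exactly the effect of the substitution $u_{k_j}\mapsto\lambda^{k_j/2}u_{k_j}$. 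By linearity this extends to all of $F$, giving $F(N,\beta,1,\{\lambda^{k/2}u_k\})=\lambda^{D_{\bu}/2}F(N,\beta,1,\bu)$ and completing the proof.

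There is essentially no serious obstacle here; the only point requiring a word of care is the status of the change of variables when $\lambda$ is complex, which is harmless because one may either deform the contour using the Gaussian decay or, more simply, work entirely at the level of formal power series, where the statement becomes an identity between coefficients of $\prod_j u_{k_j}$ on both sides.
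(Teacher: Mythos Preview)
Your proof is correct and takes essentially the same approach as the paper: both perform the change of variables $x_i=\lambda^{1/2}y_i$ in the defining integral. The paper's own proof is a one-line remark to this effect, so you have simply spelled out the details (Jacobian, Vandermonde scaling, cancellation of the normalization, and the action of $\lambda^{D_{\bu}/2}$ on monomials).
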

\begin{proof}
This follows from the change of variables $x_i=\lambda^{\frac12}y_i$ in the integral \eqref{eq:integral-representation}.
\end{proof}

\section{Virasoro constraints}
\label{sec:virasoro}

The generating function of a matrix model satisfies an infinite set of differential equations known as Virasoro constraints \cite{MIRONOV199047} which encode all the linear relations among the correlation functions. These constraints are a consequence of invariance of the integral under infinitesimal reparametrizations generated by the vector fields $\sum_i x_i^{n+1} \frac{\partial}{\partial x_i}$ for $n\in\mathbb{Z}$ which provide a representation of the Virasoro algebra.

In the case of the $\beta$-ensemble in \eqref{eq:generating-function}, the constraints can be written explicitly as
\be
\label{eq:virasoro_constraints}
 \left(
 \frac{N^2}{\lambda} \frac{\partial}{\partial u_{n+2}}
 - L_n\right) Z(N,\beta,\lambda,\bu) = 0\, , \qquad n \geq - 1
\ee
with the Virasoro generators $L_n$ defined as
\be
\label{eq:virasoro_generators}
\begin{aligned}
 L_{n>0} &= 2\beta N^2 \frac{\partial}{\partial u_n}
 + \beta N^2\sum_{i+j=n} \frac{\partial^2}{\partial u_i \partial u_j}
 + (1-\beta)N (n+1) \frac{\partial}{\partial u_n}
 + \sum_{k>0} k u_k \frac{\partial}{\partial u_{k+n}} \\
 L_0 &= \beta N^2 + (1-\beta) N + \sum_{k>0} k u_k \frac{\partial}{\partial u_k} \\
 L_{-1} &= u_1 +\sum_{k>0} k u_k \frac{\partial}{\partial u_{k-1}}~,
\end{aligned}
\ee
While these constraints are linear for the function $Z(N,\beta,\lambda,\bu)$, they are nonlinear and non-homogeneous for $F(N,\beta,\lambda,\bu)$; in fact, we have
\begin{multline}
\label{eq:virasoro_constraints_F}
 \Big(
   \frac{N^2}{\lambda} \frac{\partial}{\partial u_{n+2}}
 - \left(2\beta N^2+(1-\beta)N(n+1)\right) \frac{\partial}{\partial u_{n}}
 - \beta N^2\sum_{i+j=n}\frac{\partial^2}{\partial u_i\partial u_j}
 - \sum_{k>0} k u_k \frac{\partial}{\partial u_{k+n}}
 \Big) F(N,\beta,\lambda,\bu) \\
 - \beta N^2 \sum_{i+j=n} \frac{\partial F(N,\beta,\lambda,\bu)}{\partial u_i}
 \frac{\partial F(N,\beta,\lambda,\bu)}{\partial u_j}
 = \left(\beta N^2+(1-\beta)N\right)\delta_{n,0} + u_1 \delta_{n,-1}
\end{multline}
Later, we will use these equations to derive a recursion relation for the coefficients of the series expansion of $F(N,\beta,\lambda,\bu)$.

\begin{remark}
The rank $N$ of a matrix model, even $\beta$-deformed, is by definition the number of eigenvalues (integration variables) and therefore it is a positive integer number.
We remark, however, that the Virasoro constraint \cref{eq:virasoro_constraints_F} make sense not just for positive integer rank but can be analytically continued to arbitrary complex values. The corresponding solutions will then interpolate between actual matrix models and more general functions that can be interpreted as analytic continuations away from integer rank $N$. From now on, we will therefore assume that $N$ is just another complex variable on which the coefficients of the generating function $F(N,\beta,\lambda,\bu)$ depend analytically.
\end{remark}

\section{Symmetries of the \texorpdfstring{\GBE{}}{GBE}}
\label{sec:symmetry}

Before discussing the dependence of $F(N,\beta,\lambda,\bu)$ on the parameter $N$ as a power series, we pause to observe that the $\beta$-ensemble has a non-trivial symmetry under the exchange of $\beta$ and $\beta^{-1}$, \cite{Dumitriu_2006,Dumitriu:2012glo,Cassia:2021dpd}. This is not a symmetry which is manifest at the level of the integral representation of the model, and in fact, it is better understood as a non-perturbative duality known as Langlands duality \cite{Cassia:2021dpd}
or, alternatively, as high-low temperature duality for classical $\beta$-ensembles as suggested in \cite{Forrester:2021hig}.
Namely, one notices that the Virasoro constraints \eqref{eq:virasoro_constraints} are invariant under the duality transformation
\be
 \beta \mapsto 1/\beta,
\ee
together with the rescalings
\be
 N \mapsto -\beta N,
 \quad\quad\quad
 u_k \mapsto u_k,
 \quad\quad\quad
 \lambda \mapsto \beta^2\lambda
\ee
Since the solution of the constraints is unique (up to normalization), it must follow that the (normalized) generating function is invariant under this symmetry.
This produces the identity
\be
\label{eq:symmetry-F}
 F(-\beta N,\beta^{-1},\beta^{2}\lambda,\bu) = F(N,\beta,\lambda,\bu)
\ee
for the time-dependent free energy. We will use this identity to constrain the form of $F(N,\beta,\lambda,\bu)$ as a power series in $N$.

In addition to the inversion symmetry of $\beta$, the \GBE{} just like the GUE is symmetric under the transformation $x_i\mapsto-x_i$.
Then it is well-known that there is a corresponding Ward identity that sets all odd correlation functions to zero, namely,
\be
 \frac{\partial}{\partial u_{k_1}} \dots \frac{\partial}{\partial u_{k_n}} Z(N,\beta,\lambda,\bu) = 0
\ee
when $k_1+\dots+k_n$ is odd. Correspondingly, $F(N,\beta,\lambda,\bu)$ must be a formal power series in times such that each monomial is even w.r.t.\ the degree induced by the dilation operator $D_{\bu}$ in \eqref{eq:dilatation}.

\section{Large \texorpdfstring{$N$}{N} behavior and genus expansion}
\label{sec:genus}

The Virasoro constraints of the \GBE{} induce a recursion for the correlation functions, and this recursion is known to have a unique solution \cite{Morozov:2009xk} up to normalization.
Using the superintegrability formula for averages of Jack polynomials \cite{Morozov:2019gbt,Cassia:2020uxy,Cassia:2021dpd,Mishnyakov:2022bkg,Bawane:2022cpd},
\be
 \frac{\langle\mathrm{JackP}_\mu(x_1,\dots,x_N)\rangle_{\text{\GBE}}}{\langle 1 \rangle_{\text{\GBE}}}
 = \frac{ \mathrm{JackP}_\mu(p_k=N)
 \mathrm{JackP}_\mu(p_k=(\tfrac{\beta\lambda}{N})^{\frac{k}{2}}\delta_{k,2})}
 {\mathrm{JackP}_\mu(p_k=\delta_{k,1})}
\ee
together with Cauchy's identity for Jack polynomials
\be
 \exp\Big(\beta\sum_{k\geq1}\frac{p_kt_k}{k}\Big)
 = \sum_\mu \mathrm{JackP}_\mu(p_k)\mathrm{JackQ}_\mu(t_k)
\ee
we can rewrite the generating function as follows
\be
\label{eq:characterexpansionGBE}
 \frac{Z(N,\beta,\lambda,\bu)}{Z(N,\beta,\lambda,0)}
 = \sum_\mu \frac{ \mathrm{JackP}_\mu(p_k=N)
 \mathrm{JackP}_\mu(p_k=\delta_{k,2})}
 {\mathrm{JackP}_\mu(p_k=\delta_{k,1})}
 \mathrm{JackQ}_\mu\Big(p_k=(\tfrac{\beta\lambda}{N})^{\frac{k}{2}}
 \frac{ku_k}{\beta N}\Big)
\ee
This implies that $Z(N,\beta,\lambda,\bu)$ admits a Taylor series expansion in $N$ around $\infty$ and,
after taking the logarithm, the same is also true of $F(N,\beta,\lambda,\bu)$,
so that we can write the series expansion
\be
 F(N,\beta,\lambda,\bu) = \sum_{s=0}^\infty N^{-s} F_{s}(\beta,\lambda,\bu)~.
\ee
The Virasoro constraints for the coefficients $F_s(\beta,\lambda,\bu)$ read
\begin{multline}
\label{eq:newVirasoro}
 \Big(
 \lambda^{-1} \frac{\partial}{\partial u_{n+2}}
 - 2\beta \frac{\partial}{\partial u_n}
 - \beta\sum_{n_1+n_2=n}\frac{\partial^2}{\partial u_{n_1}\partial u_{n_2}}
 \Big) F_s(\beta,\lambda,\bu) \\
 = (1-\beta)(n+1) \frac{\partial}{\partial u_n} F_{s-1}(\beta,\lambda,\bu)
 + \sum_{k>0} k u_k \frac{\partial}{\partial u_{k+n}} F_{s-2}(\beta,\lambda,\bu) \\
 + \beta\sum_{\substack{n_1+n_2=n\\s_1+s_2=s}}
 \frac{\partial F_{s_1}(\beta,\lambda,\bu)}{\partial u_{n_1}}
 \frac{\partial F_{s_2}(\beta,\lambda,\bu)}{\partial u_{n_2}}
 + \beta\delta_{n,0}\delta_{s,0}
 + (1-\beta)\delta_{n,0}\delta_{s,1}
 + u_1 \delta_{n,-1}\delta_{s,2}
\end{multline}

Let us assume we know the functions $F_0(\beta,\lambda,\bu),\dots,F_{s-1}(\beta,\lambda,\bu)$ and we want to solve the constraints with respect to $F_s(\beta,\lambda,\bu)$. Then \eqref{eq:newVirasoro} gives the following
\begin{multline}
\label{eq:non-homogeneous-virasoro}
 \Big(
 \lambda^{-1} \frac{\partial}{\partial u_{n+2}}
 - 2\beta \frac{\partial}{\partial u_n}
 - \beta\sum_{n_1+n_2=n}\frac{\partial^2}{\partial u_{n_1}\partial u_{n_2}}
 - 2\beta\sum_{n_1+n_2=n} \frac{\partial F_0(\beta,\lambda,\bu)}{\partial u_{n_1}}
 \frac{\partial}{\partial u_{n_2}}
 \Big) F_s(\beta,\lambda,\bu) \\
 = B_{s,n}(\beta,\lambda,\bu)
\end{multline}
where $B_{s,n}(\beta,\lambda,\bu)$ is given by
\be
\label{eq:frhs}
\begin{tblr}{rowspec={lcr},cells={mode=dmath}}
 B_{s,n}(\beta,\lambda,\bu) :=
 (1-\beta)(n+1) \frac{\partial F_{s-1}(\beta,\lambda,\bu)}{\partial u_n} \\
 + \sum_{k>0} k u_k \frac{\partial F_{s-2}(\beta,\lambda,\bu)}{\partial u_{k+n}}
 + \beta \sum_{a=1}^{n-1} \sum_{j=1}^{s-1} \frac{\partial F_j(\beta,\lambda,\bu)}{\partial u_a}
 \frac{\partial F_{s-j}(\beta,\lambda,\bu)}{\partial u_{n-a}} \\
 + \beta\delta_{n,0}\delta_{s,0}
 + (1-\beta)\delta_{n,0}\delta_{s,1}
 + u_1 \delta_{n,-1}\delta_{s,2}
\end{tblr}
\ee
Observe that \eqref{eq:non-homogeneous-virasoro} as an equation for $F_s(\beta,\lambda,\bu)$
is now non-homogeneous but linear (for $s>0$). This fact now allows us to solve the Virasoro constraints in a fashion similar to that of \cite{Cassia:2021dpd} in the case of matrix models with boundaries. A formal solution can be derived as follows.
We multiply \eqref{eq:non-homogeneous-virasoro} by $(n+2)u_{n+2}$ and sum over $n=-1,0,\dots$, to get
\be
 \Big( D_{\bu} - (\beta\lambda) W \Big) F_s(\beta,\lambda,\bu)
 = \lambda \sum_{n=1}^\infty n u_n B_{s,n-2}(\beta,\lambda,\bu)
\ee
with
\be
 W := 2\sum_{n=1}^\infty (n+2) u_{n+2} \frac{\partial}{\partial u_n}
 + \sum_{n_1,n_2=1}^\infty (n_1+n_2+2) u_{n_1+n_2+2}
 \left( \frac{\partial^2}{\partial u_{n_1}\partial u_{n_2}}
 + 2 \frac{\partial F_0(\beta,\lambda,\bu)}{\partial u_{n_1}}
 \frac{\partial}{\partial u_{n_2}} \right)
\ee
and $D_{\bu}$ is the dilation operator in \eqref{eq:dilatation}.
Since both $F_s(\beta,\lambda,\bu)$ and the function in the r.h.s.\ have no constant terms in $\bu$, we can invert the operator $D_{\bu}$ to get
\be
\label{eq:solutionW}
\begin{aligned}
 F_s(\beta,\lambda,\bu) = \sum_{k=0}^\infty (\beta\lambda)^k (D_{\bu}^{-1}W)^k
 \lambda D_{\bu}^{-1} \sum_{n=1}^\infty n u_n B_{s,n-2}(\beta,\lambda,\bu)
\end{aligned}
\ee
With enough computational power, this formula allows to derive recursively all the functions $F_s(\beta,\lambda,\bu)$ by repeatedly applying the operators $W$ and $D_{\bu}^{-1}$.
Unfortunately, we do not know how to use \eqref{eq:solutionW} to give a closed formula for the solution of the constraints. In the next sections, however, we will make use of this formal expression to argue some properties about the polynomial dependence on the times $\bu$. In fact, equation \eqref{eq:solutionW} is instrumental in the proof of \cref{{prop:boundedness}}.

\subsection{Order zero}

The order-zero term $F_0(\beta,\lambda,\bu)$ can be easily computed by solving Virasoro constraints in that limit.
The constraints \eqref{eq:newVirasoro} for $s=0$ yield the relations
\begin{multline}
 \Big(
 (\beta\lambda)^{-1} \frac{\partial}{\partial u_{n+2}}
 - 2 \frac{\partial}{\partial u_n}
 - \sum_{n_1+n_2=n}\frac{\partial^2}{\partial u_{n_1}\partial u_{n_2}}
 \Big) F_0(\beta,\lambda,\bu) \\
 = \sum_{n_1+n_2=n} \frac{\partial F_0(\beta,\lambda,\bu)}{\partial u_{n_1}}
 \frac{\partial F_0(\beta,\lambda,\bu)}{\partial u_{n_2}}
 + \delta_{n,0}
\end{multline}
In order to solve this equation, we first come up with an ansatz for the function $F_0(\beta,\lambda,\bu)$ and we plug it in the constraint.
If we can fix the parameters of the ansatz so that the constraints are satisfied, then it must follow that the ansatz is correct, since we know that the solution is unique. Let us consider the ansatz
\be
\label{eq:F0}
 F_0(\beta,\lambda,\bu) = \sum_{k=1}^\infty (\beta\lambda)^\frac{k}{2}
 F_{0,[k]}(\beta) \, u_{k}
\ee
with $F_{0,[k]}(\beta)$ some coefficients to be determined.
From the Virasoro constraints, we get the recursion relation
\be
\label{eq:recursion-catalan0}
 F_{0,[n+2]}(\beta) = 2F_{0,[n]}(\beta)
 + \sum_{n_1+n_2=n} F_{0,[n_1]}(\beta) F_{0,[n_2]}(\beta) + \delta_{n,0}
\ee
which admits the unique solution
\be
 F_{0,[k]}(\beta) = \frac{(1+(-1)^k)}{2} \frac{1}{k/2+1} \binom{k}{k/2}
\ee
for $k\geq 1$. We then have that, at order zero, the coefficients $F_{0,[k]}(\beta)$
are constant that do not depend on the deformation parameter $\beta$.

Observe that naively it would appear that $F_0(\beta,\lambda,\bu)$ is not
polynomial in $\beta$ because of the fractional power in the term
$(\beta\lambda)^{\frac{k}{2}}$.
However, by explicitly solving the recursion \eqref{eq:recursion-catalan0},
we obtain that $F_{0,[k]}(\beta)=0$ if $k$ is odd, so that $F_0(\bu)$ is indeed polynomial in $\beta$ (at each order in the expansion in times $\bu$).
It then follows that the numbers $F_{0,[2k]}(\beta)$ coincide with the ordinary Catalan numbers.

\subsection{Order one}

The order-one term $F_1(\beta,\lambda,\bu)$ satisfies the constraints
\begin{multline}
 \Big(
 \lambda^{-1} \frac{\partial}{\partial u_{n+2}}
 - 2\beta \frac{\partial}{\partial u_n}
 - \beta\sum_{n_1+n_2=n}\frac{\partial^2}{\partial u_{n_1}\partial u_{n_2}}
 - 2\beta\sum_{n_1+n_2=n} \frac{\partial F_0(\beta,\lambda,\bu)}{\partial u_{n_1}}
 \frac{\partial}{\partial u_{n_2}}
 \Big) F_1(\beta,\lambda,\bu) = \\
 = (1-\beta)(n+1) \frac{\partial}{\partial u_n} F_{0}(\beta,\lambda,\bu)
 + (1-\beta)\delta_{n,0}
\end{multline}
which also involve the order-zero function $F_0(\beta,\lambda,\bu)$.
In this case we consider the following ansatz
\be
\label{eq:F1}
 F_1(\beta,\lambda,\bu) = \beta^{-1} \sum_{k=1}^\infty
 (\beta\lambda)^\frac{k}{2} F_{1,[k]}(\beta)\, u_{k}
\ee
with $F_{1,[k]}(\beta)$ some polynomial function of $\beta$.
Then from the Virasoro constraints, we get the recursion relation
\be
\label{eq:recursion-catalan1}
 F_{1,[n+2]}(\beta) = 2F_{1,[n]}(\beta) + (1-\beta)(n+1) F_{0,[n]}(\beta)
 + 2\sum_{n_1+n_2=n} F_{0,[n_1]}(\beta) F_{1,[n_2]}(\beta) + (1-\beta)\delta_{n,0}
\ee
which admits the unique solution
\be
 F_{1,[k]}(\beta) = (1-\beta) \frac{(1+(-1)^k)}{2} \left(2^{k-1}-\binom{k-1}{k/2}\right)
\ee
for $k\geq 1$. As in the previous case, the ansatz allows to solve the recursion
uniquely, and therefore, it is the correct formula for the function $F_1(\beta,\lambda,\bu)$.

It would appear at this point that $F_s(\beta,\lambda,\bu)$ is always a polynomial of degree 1 in the times (w.r.t.\ the grading operator $\sum_{k\geq1} u_k\frac{\partial}{\partial u_k}$); however, this is not true for $s\geq 2$, as we will show in the next section.

\subsection{Higher orders}

Next, we want to compute all higher-order terms in the $1/N$ expansion.
While the constraint equations for the leading order are self-contained and can be solved independently of other orders,
the constraints for higher-order functions $F_s(\beta,\lambda,\bu)$ do depend explicitly on lower orders as well.
Nevertheless, we would like to use the same strategy to solve the constraints at all orders.
Namely, we come up with an ansatz for the full time-dependent free energy, and we use the constraints to fix the coefficients.
If a solution exists, then it must follow that the ansatz gives the correct answer for $F(N,\beta,\lambda,\bu)$.
Moreover, the Virasoro constraints will give a recursive definition of the coefficients.
By analogy with the order-zero case, we name these coefficients \textit{$\beta$-deformed generalized Catalan numbers}.
This is in fact compatible with the definition of (undeformed) generalized Catalan numbers of \cite{WALSH1972192,Carlet:2020enf} when $\beta=1$.

We first want to fix the polynomial dependence on times of the functions $F_s(\beta,\lambda,\bu)$. Making use of the formal solution \eqref{eq:solutionW}, we obtain the following.
\begin{proposition}
\label{prop:boundedness}
The function $F_s(\beta,\lambda,\bu)$ is polynomial in the time variables $\bu$ of degree at most $\lfloor s/2\rfloor+1$ w.r.t.\ the grading operator $\sum_{k>0} u_k \frac{\partial}{\partial u_k}$.
\end{proposition}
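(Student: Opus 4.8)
The plan is to prove the bound by induction on $s$ using the explicit formal solution \eqref{eq:solutionW}. Throughout, let $\deg$ denote the degree in the times $\bu$ measured by the grading operator $\sum_{k>0}u_k\frac{\partial}{\partial u_k}$, so that $\deg(u_{k_1}\cdots u_{k_\ell})=\ell$. The base cases $s=0$ and $s=1$ follow from the explicit formulas \eqref{eq:F0} and \eqref{eq:F1}, both of which are linear in $\bu$, consistently with $\lfloor 0/2\rfloor+1=\lfloor 1/2\rfloor+1=1$. So assume $s\geq 2$ and that $\deg F_j\leq\lfloor j/2\rfloor+1$ for all $0\leq j\leq s-1$.

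First I would bound $\deg B_{s,n}(\beta,\lambda,\bu)$ term by term from \eqref{eq:frhs}. The term $(1-\beta)(n+1)\frac{\partial F_{s-1}}{\partial u_n}$ has degree at most $\lfloor(s-1)/2\rfloor+1-1=\lfloor(s-1)/2\rfloor\leq\lfloor s/2\rfloor$; the term $\sum_{k>0}ku_k\frac{\partial F_{s-2}}{\partial u_{k+n}}$ has degree at most $\lfloor(s-2)/2\rfloor+1=\lfloor s/2\rfloor$; and each summand of $\beta\sum_{a,j}\frac{\partial F_j}{\partial u_a}\frac{\partial F_{s-j}}{\partial u_{n-a}}$ with $1\leq j\leq s-1$ has degree at most $(\deg F_j-1)+(\deg F_{s-j}-1)\leq\lfloor j/2\rfloor+\lfloor(s-j)/2\rfloor\leq\lfloor s/2\rfloor$, the last step because $\lfloor j/2\rfloor+\lfloor(s-j)/2\rfloor$ is an integer not exceeding $j/2+(s-j)/2=s/2$. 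The only inhomogeneous term surviving for $s\geq 2$ is $u_1\delta_{n,-1}\delta_{s,2}$, of degree $1=\lfloor 2/2\rfloor$. Hence $\deg B_{s,n}\leq\lfloor s/2\rfloor$ for all $n$, and therefore $\deg\bigl(\lambda D_{\bu}^{-1}\sum_{n\geq1}nu_nB_{s,n-2}\bigr)\leq\lfloor s/2\rfloor+1$, since multiplication by $u_n$ raises the degree by one and $D_{\bu}^{-1}$, being diagonal on monomials, preserves it.

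It then remains to check that the operator $D_{\bu}^{-1}W$ in \eqref{eq:solutionW} does not raise $\deg$, so that iterating it preserves the bound $\lfloor s/2\rfloor+1$. Since $D_{\bu}^{-1}$ is degree-preserving, the point is that $W$ is degree-non-increasing. Of its three pieces, $2\sum_n(n+2)u_{n+2}\frac{\partial}{\partial u_n}$ preserves the degree (one explicit $u$, one derivative), $\sum_{n_1,n_2}(n_1+n_2+2)u_{n_1+n_2+2}\frac{\partial^2}{\partial u_{n_1}\partial u_{n_2}}$ lowers it (one $u$, two derivatives), and the last piece $2\sum_{n_1,n_2}(n_1+n_2+2)u_{n_1+n_2+2}\frac{\partial F_0}{\partial u_{n_1}}\frac{\partial}{\partial u_{n_2}}$ preserves it: this is precisely where linearity of $F_0$ in $\bu$ enters, since then $\frac{\partial F_0}{\partial u_{n_1}}=(\beta\lambda)^{n_1/2}F_{0,[n_1]}(\beta)$ is independent of $\bu$, so this piece again carries one explicit $u$ and one derivative. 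Thus each $(D_{\bu}^{-1}W)^k$ maps power series of degree $\leq d$ to power series of degree $\leq d$, and \eqref{eq:solutionW} yields $\deg F_s\leq\lfloor s/2\rfloor+1$.

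Finally, one should note that the sum over $k$ in \eqref{eq:solutionW} is a well-defined formal power series: every term of $W$ raises the $D_{\bu}$-weight (its eigenvalue under $D_{\bu}$) by at least two while $D_{\bu}^{-1}$ leaves it unchanged, so for a fixed monomial in $\bu$ only finitely many $k$ contribute; alternatively one may simply invoke that $F_s$ is already a well-defined formal power series via the character expansion \eqref{eq:characterexpansionGBE}, so that \eqref{eq:solutionW} is just a rewriting of it. The only mildly delicate point is the observation that $W$ is degree-non-increasing — which would fail without the linearity of $F_0$ — together with the floor-function bookkeeping in the estimate for the quadratic term of $B_{s,n}$; everything else is routine.
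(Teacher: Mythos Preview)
Your proof is correct and follows essentially the same inductive strategy as the paper, using the formal solution \eqref{eq:solutionW} and the fact that $D_{\bu}$ is degree-preserving while $W$ does not raise the degree. Your write-up is more explicit than the paper's on two points: the term-by-term degree count for $B_{s,n}$ (which the paper summarizes in a footnote via the inequality $\lfloor i/2\rfloor+\lfloor j/2\rfloor\leq\lfloor s/2\rfloor$), and the observation that the third piece of $W$ is degree-preserving precisely because $\partial F_0/\partial u_{n_1}$ is constant in $\bu$ --- the paper just asserts that $W$ is the sum of a degree-$0$ and a degree-$(-1)$ operator without isolating this point.
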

\begin{proof}
The proposition can be proven by induction on $s$.
The functions $F_0(\beta,\lambda,\bu)$ and $F_1(\beta,\lambda,\bu)$ are both of degree 1 in times as shown in the previous sections; hence, they satisfy the statement of the proposition.
For $s\geq 2$, we assume that $F_r(\beta,\lambda,\bu)$ is of degree at most $\lfloor r/2\rfloor+1$ for $0\leq r < s$; then, from \eqref{eq:frhs} one observes that the function
\be
 \sum_{n=1}^\infty nu_{n}B_{s,n-2}(\beta,\lambda,\bu)
\ee
is of degree at most $\lfloor s/2\rfloor+1$.%
\footnote{To see this, we need to use the inequality $\lfloor i/2 \rfloor+\lfloor j/2 \rfloor\leq\lfloor s/2 \rfloor$ for all $0<i,j<s$ such that $i+j=s$.}
Since the operator $D_{\bu}$ is degree 0 and $W$ is the sum of a degree 0 and a degree -1 operator, it follows from \eqref{eq:solutionW} that $F_s(\beta,\lambda,\bu)$ is a polynomial of degree at most $\lfloor s/2\rfloor+1$.
\end{proof}

Next, we want to fix the $\lambda$ and $\beta$ dependence. The former is dictated by the homogeneity property in \cref{lm:homogeneity}, while the latter follows from the fact that
$F_s(\beta,\lambda,\bu)$ is polynomial in $\beta$ at every order in times.
We can then write an explicit formula for $F_s(\beta,\lambda,\bu)$,
\be
\label{eq:FsAnsatz}
 F_s(\beta,\lambda,\bu) = \sum_{\ell=1}^{\lfloor s/2\rfloor+1}
 \frac{\beta^{m(s,\ell)}}{\ell!}
 \sum_{k_1,\dots,k_\ell=1}^\infty (\beta\lambda)^{\frac{k_1+\dots+k_\ell}{2}}
 F_{s,[k_1,\dots,k_\ell]}(\beta)
 \prod_{j=1}^{\ell} u_{k_j}
\ee
where the coefficients $F_{s,[k_1,\dots,k_\ell]}(\beta)$ and the exponent $m(s,\ell)$ are yet to be determined (via the constraints). 
Our analysis now indicates that the coefficients $F_{s,[k_1,\dots,k_\ell]}(\beta)$ are polynomial functions of $\beta$.
Moreover, the exponent $m(s,\ell)$ is an integer-valued function of $s,\ell$
such that $m(s,\ell)\geq0$ if $F_{s,[k_1,\dots,k_\ell]}(\beta)\neq0$\footnote{This positivity constraint on $m(s,\ell)$
follows from the requirement that $F_s(\beta,\lambda,\bu)$ must be polynomial in $\beta$.}.

At this point of our derivation, we would like to find a concrete formula for $m(s,\ell)$.
As this is not fully specified by just polynomiality or symmetry arguments,
we propose the following ansatz.
\begin{ansatz}
The integer function $m(s,\ell)$ is given by
\be
\label{eq:ansatz-m}
 m(s,\ell) = \ell-s-1
\ee
\end{ansatz}
A proof that this ansatz is indeed the correct choice for the function $m(s,\ell)$ will follow from the recursion relations that the Virasoro constraint impose on the functions $F_{s,[k_1,\dots,k_\ell]}(\beta)$.

Let us make sure that our formulas satisfy the symmetry discussed in \cref{sec:symmetry}.
In the new variables, the symmetry acts trivially on the times $\bu$, while it transforms the coupling as 
\be
 \lambda\mapsto\beta^2\lambda,
\ee
so that from \eqref{eq:symmetry-F} it must follow that
\be
 F_s(\beta^{-1},\beta^2\lambda,\bu) = (-\beta)^s F_s(\beta,\lambda,\bu)
\ee
This can be checked explicitly for $s=0,1$ from \eqref{eq:F0} and \eqref{eq:F1}.
More generally, we get the non-trivial relation
\be
\label{eq:symmetry-constraint}
 F_{s,[k_1,\dots,k_\ell]}(\beta^{-1})
 = (-\beta)^{2\ell-s-2} F_{s,[k_1,\dots,k_\ell]}(\beta)
\ee
Let $\deg F_{s,[k_1,\dots,k_\ell]}(\beta)$ denote the polynomial degree in $\beta$,
then we can write
\be
 F_{s,[k_1,\dots,k_\ell]}(\beta) = \sum_{i=0}^{\deg F_{s,[k_1,\dots,k_\ell]}(\beta)} (-\beta)^i F_{s,[k_1,\dots,k_\ell],i}
\ee
for some constant coefficients $F_{s,[k_1,\dots,k_\ell],i}$.
Then the symmetry constraint \eqref{eq:symmetry-constraint} implies the following:
\be
 \deg F_{s,[k_1,\dots,k_\ell]}(\beta)
 = 2-2\ell+s
\ee
\be
\label{eq:palindrome}
 F_{s,[k_1,\dots,k_\ell],2-2\ell+s-i} = F_{s,[k_1,\dots,k_\ell],i}
\ee
where the second equation can be equally stated as saying that $F_{s,[k_1,\dots,k_\ell]}(\beta)$ is a palindromic polynomial in $-\beta$.

\begin{remark}\label{rmk:palyndromes}
Observe that any polynomial $P(x,y) = \sum_{i=0}^r P_i x^i y^{r-i}$ such that the coefficients are palindromic (i.e., $P_{r-i}=P_i$), is symmetric in the exchange of $x$ and $y$, and therefore can be written as a linear combination of homogeneous degree-$r$ symmetric polynomials in two variables. The space of such symmetric functions has a special linear basis consisting of Schur polynomials $s_\lambda(x,y)$ associated to partitions $\lambda$ with $r$ boxes (see \cref{sec:Schur}). If the coefficients $P_i$ are integers, then the coefficients in the expansion of $P(x,y)$ over the Schur basis are also integer numbers. The polynomial functions $F_{s,[k_1,\dots,k_\ell]}(\beta)$ can be regarded as special cases of polynomials $P(x,y)$ where $x=-\beta$, $y=1$ and $r=2-2\ell+s$.
\end{remark}

\begin{lemma}\label{lem:palindrome-property}
 If $s$ is odd, the polynomial $F_{s,[k_1,\dots,k_\ell]}(\beta)$ has a zero at $\beta=1$.
\end{lemma}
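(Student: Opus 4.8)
The plan is to read off the statement directly from the duality identity \eqref{eq:symmetry-constraint}, with essentially no further computation. Recall that this identity, itself a consequence of the $\beta\mapsto1/\beta$ symmetry \eqref{eq:symmetry-F}, reads
\[
 F_{s,[k_1,\dots,k_\ell]}(\beta^{-1}) = (-\beta)^{2\ell-s-2}\, F_{s,[k_1,\dots,k_\ell]}(\beta),
\]
and holds as an identity of polynomials in $\beta^{\pm1}$ (equivalently, after clearing the obvious power of $\beta$, as an identity of polynomials).

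The one and only step is to specialize this to the fixed point $\beta=1$ of the involution. There the left-hand side is $F_{s,[k_1,\dots,k_\ell]}(1)$, while the prefactor on the right is $(-1)^{2\ell-s-2}$; since $2\ell$ and $2$ are both even, this sign equals $(-1)^{-s}=(-1)^s$. Hence
\[
 F_{s,[k_1,\dots,k_\ell]}(1) = (-1)^s\, F_{s,[k_1,\dots,k_\ell]}(1),
\]
and when $s$ is odd this forces $2F_{s,[k_1,\dots,k_\ell]}(1)=0$, i.e.\ $\beta=1$ is a root, as claimed.

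As a consistency check one can reach the same conclusion through the palindromy statement \eqref{eq:palindrome}: the polynomial $F_{s,[k_1,\dots,k_\ell]}(\beta)$ is palindromic in the variable $-\beta$ of degree $r=2-2\ell+s$, which is nonnegative by \cref{prop:boundedness}; for $s$ odd this degree is odd, and an integer-coefficient palindromic polynomial of odd degree always vanishes at $-1$, which for the variable $-\beta$ means vanishing at $\beta=1$. I do not anticipate any genuine obstacle here — the statement is immediate once the symmetry constraint is in hand; the only point requiring a moment's care is tracking the parity of the exponent $2\ell-s-2$, which is governed entirely by $s$.
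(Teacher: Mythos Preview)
Your proof is correct, and your primary argument is more direct than the paper's. The paper does not simply evaluate the duality \eqref{eq:symmetry-constraint} at $\beta=1$; instead it works from the palindrome relation \eqref{eq:palindrome} and explicitly splits the sum $\sum_{i=0}^{2h+1}(-\beta)^iF_{s,\nu,i}$ into two halves, pairs the terms using $F_{s,\nu,2h+1-i}=F_{s,\nu,i}$, and shows that each pair contributes a factor $1-\beta^{1+2(h-i)}$, thereby extracting $(1-\beta)$ as an explicit factor with a visible polynomial quotient. Your specialization $\beta=1$ in \eqref{eq:symmetry-constraint} reaches the vanishing conclusion in one line, and your ``consistency check'' is exactly the paper's computation phrased abstractly (a palindromic polynomial of odd degree in the variable $-\beta$ vanishes at $-\beta=-1$). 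What the paper's longer route buys is the explicit divisibility statement $F_{s,\nu}(\beta)=(1-\beta)\cdot(\text{polynomial})$ rather than merely the evaluation $F_{s,\nu}(1)=0$; for the lemma as stated, your argument is entirely sufficient and arguably preferable.
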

\begin{proof}
Let us use the symbol $\nu$ to indicate the tuple $[k_1,\dots,k_\ell]$.
If $s$ is odd, we can assume that there is an integer $h$ such that $2-2\ell+s=2h+1$.
Then we can use the identity \eqref{eq:palindrome} to write
\be
\begin{aligned}
 F_{s,\nu}(\beta)
 &= \sum_{i=0}^{2h+1} (-\beta)^i F_{s,\nu,i} \\
 &= \sum_{i=0}^{h} (-\beta)^i F_{s,\nu,i} + \sum_{i=h+1}^{2h+1} (-\beta)^i F_{s,\nu,i} \\
 &= \sum_{i=0}^{h} (-\beta)^i F_{s,\nu,i} + \sum_{i=0}^{h} (-\beta)^{2h+1-i} F_{s,\nu,2h+1-i} \\
 &= \sum_{i=0}^{h} (-\beta)^i F_{s,\nu,i} \left(1-\beta^{1+2(h-i)}\right) \\
 &= (1-\beta) \sum_{i=0}^{h} (-\beta)^i F_{s,\nu,i} \frac{1-\beta^{1+2(h-i)}}{1-\beta} \\
\end{aligned}
\ee
where $\frac{1-\beta^{1+2(h-i)}}{1-\beta}=1+\beta+\dots+\beta^{2(h-i)}$ is a polynomial for all $i$ s.t.\ $0\leq i\leq h$.
\end{proof}
Putting everything together, we get the expansion
\be
 F(N,\beta,\lambda,\bu) = \sum_{s=0}^\infty N^{-s} \sum_{\ell=1}^{\lfloor s/2\rfloor+1}
 \frac{\beta^{\ell-s-1}}{\ell!} \sum_{k_1,\dots,k_\ell=1}^\infty
 (\beta\lambda)^{\sum_{j=1}^\ell\frac{k_j}{2}}
 \sum_{i=0}^{2-2\ell+s} (-\beta)^i F_{s,[k_1,\dots,k_\ell],i}
 \prod_{j=1}^\ell u_{k_j}
\ee

\subsection{Genus expansion}

Recall that for any function $f(s,\ell)$ we have the identity
\be
 \sum_{s=0}^\infty N^{-s} \sum_{\ell=1}^{\lfloor s/2\rfloor+1} f(s,\ell)
 = \sum_{\ell=1}^\infty \sum_{g\in\frac12\mathbb{N}} N^{2-2g-2\ell} f(2g-2+2\ell,\ell)
\ee
which is just a reorganization of the sum on the left.
Hence, we can rewrite the time-dependent free energy as
\begin{multline}
\label{eq:genus-expansion-F(u)}
 F(N,\beta,\lambda,\bu)
 = \sum_{\ell=1}^\infty \sum_{g\in\frac12\mathbb{N}}
 \frac{N^{2-2g-2\ell} \beta^{1-\ell-2g}}{\ell!}
 \sum_{k_1,\dots,k_\ell=1}^\infty (\beta\lambda)^{\sum_{j=1}^\ell \frac{k_j}{2}} \\
 \times \sum_{i=0}^{2g}(-\beta)^i F_{2g-2+2\ell,[k_1,\dots,k_\ell],i}
 \prod_{j=1}^\ell u_{k_j}
\end{multline}
where, by analogy with the undeformed case, we regard the sum over the half-integer $g$ as the genus expansion of the time-dependent free energy.
\begin{definition}
Let $\nu$ be an integer partition. We define the genus-$g$ Catalan polynomial associated to $\nu$ as
\be
 C_{g,\nu}(\beta) := F_{2g-2+2\ell(\nu),\nu}(\beta)
\ee
where $\ell(\nu)$ is the length of the partition and the polynomial in the r.h.s.\ is defined as in \cref{eq:FsAnsatz}.
Moreover, from \eqref{eq:symmetry-constraint} it follows that
\be
\label{eq:symmetry-for-Catalan-poly}
 C_{g,\nu}(\beta^{-1}) = (-\beta)^{-2g} C_{g,\nu}(\beta)
\ee
and we can use this symmetry to write
\be
 C_{g,\nu}(\beta) = \sum_{i_1+i_2=2g} C_{g,\nu}^{(i_1,i_2)} (1)^{i_1} (-\beta)^{i_2}
\ee
with coefficients $C_{g,\nu}^{(i_1,i_2)}$ symmetric in $i_1,i_2$.
\end{definition}
In the next section, we will argue that Catalan polynomials satisfy the two crucial properties:
\begin{itemize}
\item the coefficients of $C_{g,\nu}(\beta)$ are integers;
\item the polynomials $C_{g,\nu}(\beta)$ evaluate to the generalized Catalan numbers of \cite{WALSH1972192} at $\beta=1$.
\end{itemize}

Observe that the summation label $g$ in \eqref{eq:genus-expansion-F(u)} can now be interpreted as the genus of some surface.
Remarkably, the sum over $g$ ranges over both integer and half-integer (positive) numbers,
which suggests that the combinatorial interpretation of the time-dependent free energy is not as simple as in the undeformed case.
Namely, the fact that the genus is allowed to take half-integer values is an indication that what the \GBE{} is counting
are not just orientable maps but more generally all locally orientable\footnote{
Observe that the genus of an orientable surface with Euler characteristic $\chi$ is defined as $g=\frac12(2-\chi)$ where $\chi$ is always an even integer. For a non-orientable surface, however, the Euler number $\chi$ can be odd, therefore $\frac12(2-\chi)$ can be half-integer.} maps as already recognized in works of Goulden and Jackson (see \cite{Goulden:1996con,LaCroix:2009}).
We provide a more in-depth discussion of the combinatorial interpretation of our results in \cref{sec:hypermaps}.

In order to solve for the coefficients $C_{g,\nu}(\beta)$, we will now simplify the genus expansion of the time-dependent free energy as much as possible. To this end, we consider the change of time variables\footnote{Under the inversion symmetry $\beta\mapsto1/\beta$, the new times $\bv$ transform trivially, as the combinations of parameters $\beta N^2$ and $\beta\lambda$ are both invariant.}
\be
 u_k = \beta N^2(\beta\lambda)^{-\frac{k}{2}} v_k
\ee
which leads to the following expression
\begin{multline}
 F(N,\beta,\lambda,\{u_k = \beta N^2(\beta\lambda)^{-\frac{k}{2}} v_k\}) \\
 = \sum_{\ell=1}^\infty \sum_{g\in\frac12\mathbb{N}}
 \frac{ (\beta N^2)^{1-g} \beta^{-g} }{\ell!}
 \sum_{k_1,\dots,k_\ell=1}^\infty
 C_{g,[k_1,\dots,k_\ell]}(\beta) \prod_{j=1}^\ell v_{k_j} \\
 = \beta N^2 \sum_{g\in\frac12\mathbb{N}} (\beta N)^{-2g}
 \sum_{\nu\neq\emptyset}
 \frac{1}{|{\Aut}(\nu)|} C_{g,\nu}(\beta) \, p_\nu(\bv) \,,
\end{multline}
where $p_\nu(\bv)=\prod_{k\in\nu}v_k$ are power-sum polynomials in times $\bv$,
and we used the combinatorial identity
\be
 \sum_{\ell=1}^\infty \frac{1}{\ell!}
 \sum_{k_1,\dots,k_\ell=1}^\infty
 f_{[k_1,\dots,k_\ell]}
 = \sum_{\nu\neq\emptyset} \frac{1}{|{\Aut}(\nu)|} f_\nu\,,
\ee
with $|\Aut(\nu)|=p_\nu(\{\frac{\partial}{\partial v_k}\}) p_\nu(\bv)$ being the order of the automorphism group of the partition~$\nu$.

Let us define the function
\be
\label{eq:Gdef}
 G(\genus,\beta,\bv) := \beta\genus^2 F((\beta\genus)^{-1},\beta,\lambda,
 \{u_k = (\beta\genus^2)^{-1} (\beta\lambda)^{-\frac{k}{2}} v_k\})
\ee
then we have that
\be
\label{eq:seriesG}
 G(\genus,\beta,\bv) = \sum_{r=0}^\infty \genus^r G_r(\beta,\bv)
 = \sum_{r=0}^\infty \genus^r
 \sum_{\nu\neq\emptyset} \frac{1}{|{\Aut}(\nu)|} C_{r/2,\nu}(\beta)
 \, p_\nu(\bv)
\ee
is the generating function of all Catalan polynomials, with $\genus$ being the genus counting variable.
The Virasoro constraints for the function $G(\genus,\beta,\bv)$ become
\begin{multline}
\label{eq:virasoro_constraints_G}
 \Big(
   \frac{\partial}{\partial v_{n+2}}
 - \left(2+\genus(1-\beta)(n+1)\right) \frac{\partial}{\partial v_{n}}
 - \genus^2 \beta \sum_{n_1+n_2=n}\frac{\partial^2}{\partial v_{n_1}\partial v_{n_2}}
 - \sum_{k>0} k v_k \frac{\partial}{\partial v_{k+n}}
 \Big) G(\genus,\beta,\bv) \\
 - \sum_{n_1+n_2=n} \frac{\partial G(\genus,\beta,\bv)}{\partial v_{n_1}}
 \frac{\partial G(\genus,\beta,\bv)}{\partial v_{n_2}}
 = \left(1+\genus(1-\beta)\right)\delta_{n,0} + v_1 \delta_{n,-1}
\end{multline}
and expanding in powers of $\genus$, we get
\begin{multline}
\label{eq:virasoroGr}
 \Big( \frac{\partial}{\partial v_{n+2}}
 - 2 \frac{\partial}{\partial v_n}
 - \sum_{k>0} k v_k \frac{\partial}{\partial v_{k+n}} \Big) G_{r}(\beta,\bv) =
  (1-\beta)(n+1) \frac{\partial G_{r-1}(\beta,\bv)}{\partial v_n} \\
 + \beta \sum_{n_1+n_2=n} \frac{\partial^2 G_{r-2}(\beta,\bv)}
 {\partial v_{n_1}\partial v_{n_2}}
 + \sum_{n_1+n_2=n} \sum_{r_1+r_2=r} \frac{\partial G_{r_1}(\beta,\bv)}{\partial v_{n_1}}
 \frac{\partial G_{r_2}(\beta,\bv)}{\partial v_{n_2}} \\
 + (\delta_{n,0} + v_1 \delta_{n,-1}) \delta_{r,0}
 + (1-\beta)\delta_{n,0} \delta_{r,1}\,.
\end{multline}
\begin{remark}
Let us consider the case when $\beta=-\frac{\epsilon_2}{\epsilon_1}$. Then, we can write
\be
\label{eq:epsilon-poly}
 \epsilon_1^{2g}C_{g,\nu}(-\epsilon_2/\epsilon_1)
 = \sum_{i_1+i_2=2g} \epsilon_1^{i_1} \epsilon_2^{i_2} C_{g,\nu}^{(i_1,i_2)}
\ee
which is a homogeneous degree-$2g$ symmetric polynomial in $\epsilon_1,\epsilon_2$, due to the property \eqref{eq:symmetry-for-Catalan-poly}.
Since any such symmetric polynomial can be expressed as an integer linear combination of Schur polynomials $s_\lambda(\epsilon_1,\epsilon_2)$ for $|\lambda|=2g$ (see \cref{sec:Schur}), we can rewrite \eqref{eq:epsilon-poly} as
\be
 \epsilon_1^{2g}C_{g,\nu}(-\epsilon_2/\epsilon_1)
 = \sum_{\lambda\vdash 2g} n_{\nu,\lambda} s_\lambda(\epsilon_1,\epsilon_2)~,
\ee
where $n_{\nu,\lambda}\in\mathbb{Z}$.
Then we can define the generating function of all $n_{\nu,\lambda}$ as
\be
\label{eq:GEnergySchur}
 n(\epsilon_1,\epsilon_2,\bv):=G(\epsilon_1,-\epsilon_2/\epsilon_1,\bv)
\ee
or, equivalently, we can write $G(\genus,\beta,\bv)=n(\genus,-\beta\genus,\bv)$.
The constraints for the function $n(\epsilon_1,\epsilon_2,\bv)$ are
\begin{multline}
 \Big(
   \frac{\partial}{\partial v_{n+2}}
 - \left(2+(\epsilon_1+\epsilon_2)(n+1)\right) \frac{\partial}{\partial v_{n}}
 + (\epsilon_1\epsilon_2) \sum_{n_1+n_2=n}
 \frac{\partial^2}{\partial v_{n_1}\partial v_{n_2}}
 - \sum_{k>0} k v_k \frac{\partial}{\partial v_{k+n}}
 \Big) n(\epsilon_1,\epsilon_2,\bv) \\
 - \sum_{n_1+n_2=n} \frac{\partial n(\epsilon_1,\epsilon_2,\bv)}{\partial v_{n_1}}
 \frac{\partial n(\epsilon_1,\epsilon_2,\bv)}{\partial v_{n_2}}
 = \left(1+(\epsilon_1+\epsilon_2)\right)\delta_{n,0} + v_1 \delta_{n,-1}
\end{multline}

Observe that Schur functions of two variables vanish identically for $\ell(\lambda)>2$;
therefore, the sum over $\lambda$ collapses to a sum over partitions of the form
$\lambda=[h+d,d]$ for $h,d\geq0$. We then have
\be
\begin{aligned}
 n(\epsilon_1,\epsilon_2,\bv)
 &= \sum_{\nu\neq\emptyset} \frac{p_\nu(\bv)}{|\Aut(\nu)|}
 \sum_{r=0}^\infty \sum_{d=0}^{\lfloor r/2 \rfloor}
 n_{\nu,[r-d,d]} s_{[r-d,d]}(\epsilon_1,\epsilon_2) \\
 &= \sum_{\nu\neq\emptyset} \frac{p_\nu(\bv)}{|\Aut(\nu)|}
 \sum_{h=0}^\infty \sum_{d=0}^\infty
 n_{\nu,[h+d,d]} s_{[h+d,d]}(\epsilon_1,\epsilon_2) \\
 &= \sum_{\nu\neq\emptyset} \frac{p_\nu(\bv)}{|\Aut(\nu)|}
 \sum_{i,j,d\geq0}
 n_{\nu,[i+j+d,d]} \epsilon_1^{d+i}\epsilon_2^{d+j} \\
 &= \sum_{\nu\neq\emptyset} \frac{p_\nu(\bv)}{|\Aut(\nu)|}
 \sum_{a,b\geq0}
 C_{\frac{a+b}{2},\nu}^{(a,b)} \epsilon_1^{a}\epsilon_2^{b} \\
\end{aligned}
\ee
where we used the combinatorial identity
\be
 \sum_{r=0}^\infty \sum_{d=0}^{\lfloor r/2\rfloor} f(r,d)
 = \sum_{h=0}^\infty \sum_{d=0}^\infty f(h+2d,d)
\ee
together with \eqref{eq:coproductSchur2}.
The genus in the formula for $n(\epsilon_1,\epsilon_2,\bv)$ is given by half of the monomial degree in the $\epsilon_i$'s, i.e., the eigenvalue of the operator
$\frac12(\epsilon_1\frac{\partial}{\partial\epsilon_1}+\epsilon_2\frac{\partial}{\partial\epsilon_2})$.
We have the relation of coefficients
\be
\label{eq:catalan-to-n}
 C_{g,\nu}^{(i,2g-i)} = \sum_{d=0}^{\min(i,2g-i)} n_{\nu,[2g-d,d]}
\ee
which suggests that the integers $n_{\nu,\lambda}$ might have a more fundamental role than the coefficients of the Catalan polynomials $C_{g,\nu}(\beta)$.
Moreover, explicit computations of some of the $n_{\nu,\lambda}$ indicate that these coefficients are always positive integers, see \cref{tblr:SchurEpsilon}.
While this is strong evidence that these numbers are the solution to some counting problem, we are not aware of possible combinatorial or enumerative geometric interpretations of the integers $n_{\nu,\lambda}$. It would be interesting to investigate this further.

\end{remark}

\section{Recursion formula}
\label{sec:recursion}

Having fixed the general form of the series expansion of the time-dependent free energy, and the closely related function $G(\genus,\beta,\bv)$, we now need to fix the coefficients of $C_{g,[k_1,\dots,k_\ell]}(\beta)$.
To this end, we make use again of the Virasoro constraints as written in \eqref{eq:virasoroGr}.
By plugging the formula for the generating function $G(\genus,\beta,\bv)$ into the constraint equations, we get nonlinear relations between the coefficients.
Provided we fix some appropriate initial conditions, these relations determine a recursion on the Catalan polynomials, which can be solved uniquely. This recursion formula is completely equivalent to the Virasoro constraints themselves, and it is in fact also interpreted as a topological recursion formula for the genus expansion of the time-dependent free energy.
Similar topological recursion formulas for the \GBE{} were already considered in \cite{Brini:2010fc} and explicit formulas for the $1/N$ expansion of the resolvent (i.e., the generating function of Catalan polynomials associated to partitions with a single part) were also computed in \cite{Witte:2013cea} through the loop equation formalism.

\subsection{Initial conditions}

We consider first the constraint equations following from the Virasoro constraints for $n=-1$ and $n=0$ as they allow to define the initial conditions for the recursion.

For $n=-1$ we have the differential equation
\be
 \boxed{n=-1} \quad\quad
 \Big( \frac{\partial}{\partial v_1}
 - \sum_{k>0} k v_k \frac{\partial}{\partial v_{k-1}} \Big) G_r(\beta,\bv) =
 v_1 \delta_{r,0}
\ee
which translates to the following recursion relation for the Catalan polynomials
\be
 C_{g,[k_1,\dots,k_\ell,1]}(\beta) =
 \sum_{j=1}^\ell k_j C_{g,[k_1,\dots,k_{j-1},k_j-1,k_{j+1},\dots,k_\ell]}(\beta)
 + \delta_{[k_1,\dots,k_\ell],[1]} \delta_{g,0}
\ee
Similarly, for $n=0$ we have the differential equation
\be
 \boxed{n=0} \quad\quad
 \Big( \frac{\partial}{\partial v_2}
 - \sum_{k>0} k v_k \frac{\partial}{\partial v_k} \Big) G_r(\beta,\bv) =
 \delta_{r,0} + (1-\beta) \delta_{r,1}
\ee
which translates to
\be
 C_{g,[k_1,\dots,k_\ell,2]}(\beta) =
 \sum_{j=1}^\ell k_j C_{g,[k_1,\dots,k_\ell]}(\beta) \\
 + \left( \delta_{g,0}
 + (1-\beta) \delta_{g,\frac12}
 \right) \delta_{[k_1,\dots,k_\ell],[]} \,.
\ee
Observe that both equations only contain contributions $G_r(\beta,\bv)$ for a fixed $r$,
which means that these relations are independent of all other genera (differently from what happens for $n\geq1$).

\subsection{Recursion for \texorpdfstring{$n\geq1$}{n>=1}}

In general, we can write the recursion relation for an arbitrary polynomial $C_{g,[k_1,\dots,k_\ell]}(\beta)$ by expanding \eqref{eq:virasoroGr} in all possible monomials in times $\bv$.
For $n\geq1$, the equation corresponding to the monomial $p_{[k_1,\dots,k_\ell]}(\bv)=v_{k_1}\cdots v_{k_\ell}$ gives the recursion
\begin{multline}
\label{eq:catalanpolyrecursion}
 C_{g,[k_1,\dots,k_\ell,n+2]}(\beta) =
 2 C_{g,[k_1,\dots,k_\ell,n]}(\beta)
 + \sum_{j=1}^\ell k_j C_{g,[k_1,\dots,k_{j-1},k_j+n,k_{j+1},\dots,k_\ell]}(\beta) \\
 + (1-\beta)(n+1) C_{g-\frac12,[k_1,\dots,k_\ell,n]}(\beta)
 + \beta \sum_{\substack{n_1+n_2=n\\n_1,n_2\geq1}} C_{g-1,[k_1,\dots,k_\ell,n_1,n_2]}(\beta) \\
 + \sum_{\substack{n_1+n_2=n\\n_1,n_2\geq1}} \sum_{\substack{g_1+g_2=g\\g_1,g_2\geq0}}
 \sum_{\substack{I_1\cup I_2=[k_1,\dots,k_\ell]\\I_1\cap I_2=\emptyset}}
 C_{g_1,I_1\cup[n_1]}(\beta) \, C_{g_2,I_2\cup[n_2]}(\beta)
\end{multline}
where it is understood that the polynomials $C_{g,[k_1,\dots,k_\ell]}(\beta)$ are
identically zero whenever one or more of the labels $k_j$ are $\leq0$.
This recursion relation is sometimes also known as \emph{cut-and-join recursion}, and it corresponds to the $\beta$-deformed version of the recursion in \cite[(6)]{WALSH1972192}.
Observe that symmetry of the equation \eqref{eq:catalanpolyrecursion} under $\beta\mapsto1/\beta$ is a straightforward consequence of \eqref{eq:symmetry-for-Catalan-poly}.

Expanding further in powers of $\beta$, we get
\begin{multline}
 C_{g,[k_1,\dots,k_\ell,n+2]}^{(i,2g-i)} =
 2 C_{g,[k_1,\dots,k_\ell,n]}^{(i,2g-i)}
 + \sum_{j=1}^\ell k_j C_{g,[k_1,\dots,k_{j-1},k_j+n,k_{j+1},\dots,k_\ell]}^{(i,2g-i)} \\
 + (n+1)\left( C_{g-\frac12,[k_1,\dots,k_\ell,n]}^{(i-1,2g-i)}
 + C_{g-\frac12,[k_1,\dots,k_\ell,n]}^{(i,2g-i-1)} \right)
 - \sum_{\substack{n_1+n_2=n\\n_1,n_2\geq1}}
 C_{g-1,[k_1,\dots,k_\ell,n_1,n_2]}^{(i-1,2g-i-1)} \\
 + \sum_{\substack{i_1+i_2=i\\i_1,i_2\geq0}} \sum_{\substack{n_1+n_2=n\\n_1,n_2\geq1}} \sum_{\substack{g_1+g_2=g\\g_1,g_2\geq0}}
 \sum_{\substack{I_1\cup I_2=[k_1,\dots,k_\ell]\\I_1\cap I_2=\emptyset}}
 C_{g_1,I_1\cup[n_1]}^{(i_1,2g_1-i_1)} \,
 C_{g_2,I_2\cup[n_2]}^{(i_2,2g_2-i_2)} \\
 + \left( \delta_{g,0}
 + (\delta_{i,1}+\delta_{2g-i,1}) \delta_{g,\frac12}
 \right) \delta_{[k_1,\dots,k_\ell],[]} \delta_{n,0}
 + \delta_{[k_1,\dots,k_\ell],[1]} \delta_{g,0} \delta_{n,-1}
\end{multline}
One can check explicitly that this recursion admits a unique solution and, moreover, that the coefficients $C_{g,\nu}^{(i,2g-i)}$ that solve the recursion are integer numbers.
Solving the equations for the first few Catalan polynomials gives \cref{tblr:betapoly}.

\section{The undeformed limit}
\label{sec:limit}

In the limit $\beta\to1$, the polynomials $C_{g,[k_1,\dots,k_\ell]}(\beta)$ reduce to alternating sums of coefficients $C_{g,[k_1,\dots,k_\ell]}^{(a,b)}$ as follows
\be
\label{eq:undeformed-Catalan}
 C_{g,[k_1,\dots,k_\ell]}(1)
 = \sum_{i=0}^{2g} (-1)^i C_{g,[k_1,\dots,k_\ell]}^{(i,2g-i)}
 =: C^{\rm CLPS}_{g,[k_1,\dots,k_\ell]}
\ee
where $C^{\rm CLPS}_{g,[k_1,\dots,k_\ell]}$ are the coefficients of \cite{Carlet:2020enf}.

Observe also that, in the limit $\beta\to1$ (i.e., $\epsilon_1+\epsilon_2=0$), the Schur polynomials $s_\lambda(\epsilon_1,\epsilon_2)$ evaluate to either $0$ or $\pm1$ (see \eqref{eq:coproductSchur2}):
\be
 \lim_{\beta\to1} s_{[h+d,d]}(\genus,-\beta\genus) = \genus^{h+2d} (-1)^d \frac{1+(-1)^h}{2}
\ee
This implies that
\be
 \lim_{\beta\to1} G(\genus,\beta,\bv)
 = \sum_{\nu\neq\emptyset} \frac{p_\nu(\bv)}{|\Aut(\nu)|}
 \sum_{h=0}^\infty \sum_{d=0}^\infty \genus^{2(h+d)}
 (-1)^{d} n_{\nu,[2h+d,d]}
\ee
which means that the genus expansion of $G(\genus,1,\bv)$ only contains integer genera contributions.

Similarly, the time-dependent free energy simplifies to
\be
 \lim_{\beta\to 1} F(N,\beta,\lambda,\bu) =
 \sum_{\ell=1}^\infty \sum_{g\in\mathbb{N}}
 \frac1{\ell!} N^{2-2g-2\ell}
 \sum_{k_1,\dots,k_\ell=1}^\infty \lambda^{\sum_{j=1}^\ell k_j /2}
 C^{\rm CLPS}_{g,[k_1,\dots,k_\ell]} \prod_{j=1}^\ell u_{k_j}
\ee
so that
\be
 C^{\rm CLPS}_{g,\nu} = \sum_{d=0}^{g} (-1)^d n_{\nu,[2g-d,d]}\,.
\ee
Observe that from \cref{lem:palindrome-property} we have that the sum in
\eqref{eq:undeformed-Catalan} vanishes identically when $2g$ is odd.
The sum over the genus $g$ then collapses to a sum over only positive integers,
and we recover the known genus expansion of the time-dependent free energy of the GUE as described in \cite{WALSH1972192}.

\section{Wishart--Laguerre \texorpdfstring{$\beta$}{beta}-ensemble}
\label{sec:WLBE}

There is another example of $\beta$-deformed ensemble that has the property that
Virasoro constraints admit a unique solution. This is the Wishart--Laguerre
$\beta$-ensemble which corresponds to the deformation of the matrix model with linear potential $V(x)=\frac{N}{\lambda}x$.
Differently from the case of the Gaussian matrix model, one can additionally introduce
a logarithmic term in the potential without spoiling the solvability of Virasoro constraints \cite{Cassia:2020uxy}. We will denote as $\alpha$ the coupling corresponding to such logarithmic interaction. After exponentiation of the potential, this gives rise to a determinant insertion of the form $\prod_{i=1}^N x_i^\alpha$.
The generating function of all polynomial expectation values can then be defined as
\be
\label{eq:generating-functionWL}
 \frac1{N!}\int_{\mathbb{R}_+^N}
 \prod_{i=1}^N \mathd x_i ~ \prod_{i<j}|x_i-x_j|^{2\beta} ~
 \prod_{i=1}^N x_i^{\alpha} ~ \mathe^{ - \frac{N}{\lambda} \sum_i x_i
 + \frac{1}{N} \sum_{k=1}^\infty u_k \sum_i x_i^k}
\ee
Convergence of the integral imposes that the contour be restricted to the positive orthant in real $N$-dimensional space, together with the conditions $\Re(\alpha)>-1$ and $\Re(\lambda^{-1})>0$.

We remark that explicit formulas for the $1/N$ expansion of the resolvent of the \WLBE{} (and corresponding topological recursion relations) were previously obtained in \cite{Forrester:2017lar}. In the following, we will derive an analogous formula for the $1/N$ expansion of the full time-dependent free energy following a similar strategy to that of the case of the \GBE{}.

\subsection{Symmetries}
The time-dependent free energy of the \WLBE{} satisfies the following symmetry properties.
First, both the normalized generating function and its logarithm are invariant under the involution
\be
\label{eq:WLBELanglands}
\begin{tblr}{cccc}
 \beta\mapsto\frac1{\beta}~, &
 N\mapsto-\beta N~, &
 \lambda\mapsto\beta^2\lambda~, &
 \alpha\mapsto-\frac1{\beta}\alpha
\end{tblr}
\ee
For later convenience, we introduce a new parameter $\phi$ defined by the equation
\be
 \alpha = (1-\beta)(\phi-1)
\ee
which implies $\phi=1+\frac{\alpha}{1-\beta}$. Observe that under the symmetry \eqref{eq:WLBELanglands}, the new parameter $\phi$ is trivially mapped to itself.
We then have
\be
\label{eq:WLBELanglandsF}
 F(-\beta N,\beta^{-1},\beta^2\lambda,\phi,\bu) = F(N,\beta,\lambda,\phi,\bu)
\ee

Second, we observe the homogeneity equation
\be
\label{eq:WLBEhomo}
 F(N,\beta,\lambda,\phi,\bu) = \lambda^{D_{\bu}} F(N,\beta,1,\phi,\bu)
\ee
which follows from the change of variables $x_i=\lambda y_i$ in the integral \eqref{eq:generating-functionWL}. This is analogous to the equation \eqref{lm:homogeneity} in the Gaussian case, with the caveat that the homogeneity degree of the time-dependent free energy is different in the two cases.

\subsection{Superintegrability}
Similarly to the Gaussian case, the \WLBE{} is a matrix model that exhibits superintegrability for the averages of characters; namely, it can be shown \cite{Mezzadri:2017mom,Cassia:2020uxy,Bawane:2022cpd} that the ensemble average of Jack polynomials satisfies
\be
 \frac{\langle\mathrm{JackP}_\mu(x_1,\dots,x_N)\rangle_{\text{\WLBE}}}
 {\langle 1 \rangle_{\text{\WLBE}}}
 = \frac{\mathrm{JackP}_\mu(p_k=N)
 \mathrm{JackP}_\mu(p_k=(\tfrac{\beta\lambda}{N})^{k}(N+\beta^{-1}\phi(1-\beta)))}
 {\mathrm{JackP}_\mu(p_k=\delta_{k,1})}
\ee
This formula then can be used to define the generating function of the matrix model
\be
\label{eq:character-expansion-WL}
 \frac{Z(N,\beta,\lambda,\phi,\bu)}{Z(N,\beta,\lambda,\phi,0)} =
 \sum_{\lambda}
 \frac{\mathrm{JackP}_\mu(p_k=N)
 \mathrm{JackP}_\mu(p_k=N+\beta^{-1}\phi(1-\beta))}
 {\mathrm{JackP}_\mu(p_k=\delta_{k,1})}
 \mathrm{JackQ}_\mu\Big(p_k=(\tfrac{\beta\lambda}{N})^{k} \frac{k u_k}{\beta N}\Big)
\ee
and can be used to argue for the polynomiality of the time-dependent free energy in the parameters $\beta$ and $\phi$.

\subsection{Virasoro constraints}

By analogy with the \GBE{}, we are led to consider the change of time variables
\be
 u_k = \beta N^2(\beta\lambda)^{-k} v_k
\ee
where the power of $(\beta\lambda)$ has no factors of $\frac12$ due to the different dependence on the coupling $\lambda$ as observed in \eqref{eq:WLBEhomo}.

A similar analysis to that of \cref{sec:genus} then suggests that one should consider the function
\be
 G(\genus,\beta,\phi,\bv) :=
 \beta\genus^2 F\Big((\beta\genus)^{-1},\beta,\lambda,\phi,
 \{u_k = (\beta\genus^2)^{-1} (\beta\lambda)^{-k} v_k\}\Big)
\ee
for which the Virasoro constraints%
\footnote{See \cite{Cassia:2020uxy} for a derivation of the constraints for the generating function $Z(N,\beta,\lambda,\phi,\bu)$ of the \WLBE{}.}
become
\begin{multline}
\label{eq:WLBEVirasoroG}
 \Big(
 \frac{\partial}{\partial v_{n+1}}
 - (2+\genus(1-\beta)(n+\phi)) \frac{\partial}{\partial v_n}
 - \genus^2 \beta \sum_{n_1+n_2=n} \frac{\partial^2}{\partial v_{n_1} \partial v_{n_2}}
 - \sum_{k=1}^\infty k v_k \frac{\partial}{\partial v_{k+n}}
 \Big) G(\genus,\beta,\phi,\bv) \\
 - \sum_{n_1+n_2=n} \frac{\partial G(\genus,\beta,\phi,\bv)}{\partial v_{n_1}}
 \frac{\partial G(\genus,\beta,\phi,\bv)}{\partial v_{n_2}}
 = (1+\genus(1-\beta)\phi) \delta_{n,0}
\end{multline}
with $n\geq0$.

The function $G(\genus,\beta,\phi,\bv)$ then admits the following genus expansion in $\genus$,
\be
\begin{aligned}
 G(\genus,\beta,\phi,\bv) &= \sum_{r=0}^\infty \genus^r G_r(\beta,\phi,\bv) \\
 &= \sum_{r=0}^\infty \genus^r
 \sum_{\nu\neq\emptyset} \frac{p_\nu(\bv)}{|{\Aut}(\nu)|} C_{r/2,\nu}(\beta,\phi) \\
 &= \sum_{\nu\neq\emptyset}
 \frac{p_\nu(\bv)}{|{\Aut}(\nu)|}
 \sum_{\lambda} \genus^{|\lambda|}
 n_{\nu,\lambda}(\phi) s_\lambda(1,-\beta)
\end{aligned}
\ee
where $C_{g,\nu}(\beta,\phi)$ are polynomial functions in $\beta$ and $\phi$,
\be
 C_{g,\nu}(\beta,\phi) \equiv \sum_{i,j=0}^{2g} (-\beta)^i \phi^j C_{g,\nu,j}^{(i,2g-i)}
\ee
with coefficients $C_{g,\nu,j}^{(i,2g-i)}$ to be fixed by the constraints.
We shall call $C_{g,\nu}(\beta,\phi)$ the \emph{Catalan polynomials} in genus $g$.
The symmetry in \eqref{eq:WLBELanglands} imposes that $C_{g,\nu}(\beta,\phi)$ be
palindromic polynomials in $(-\beta)$; however, there does not seem to be such a symmetry w.r.t.\ the variable $\phi$.

The equations in \eqref{eq:WLBEVirasoroG} then can be expanded in
powers of $\genus$ and monomials in times $\bv$, to obtain a recursion on the
Catalan polynomials $C_{g,\nu}(\beta,\phi)$. Namely, we have
\begin{multline}
 C_{g,[k_1,\dots,k_\ell,n+1]}(\beta,\phi) =
 2 C_{g,[k_1,\dots,k_\ell,n]}(\beta,\phi)
 + \sum_{j=1}^\ell k_j C_{g,[k_1,\dots,k_{j-1},k_j+n,k_{j+1},\dots,k_\ell]}(\beta,\phi) \\
 + (1-\beta)(n+\phi) C_{g-\frac12,[k_1,\dots,k_\ell,n]}(\beta,\phi)
 + \beta \sum_{\substack{n_1+n_2=n\\n_1,n_2\geq1}} C_{g-1,[k_1,\dots,k_\ell,n_1,n_2]}(\beta,\phi) \\
 + \sum_{\substack{n_1+n_2=n\\n_1,n_2\geq1}} \sum_{\substack{g_1+g_2=g\\g_1,g_2\geq0}}
 \sum_{\substack{I_1\cup I_2=[k_1,\dots,k_\ell]\\I_1\cap I_2=\emptyset}}
 C_{g_1,I_1\cup[n_1]}(\beta,\phi) \, C_{g_2,I_2\cup[n_2]}(\beta,\phi) \\
 + \left( \delta_{g,0} + (1-\beta) \phi\, \delta_{g,\frac12} \right)
 \delta_{[k_1,\dots,k_\ell],[]} \delta_{n,0}
\end{multline}
for $n\geq0$. We recall that the $n=-1$ constraint is not well-defined in the \WLBE{} case;
however, the recursion for non-negative $n$ can still be solved uniquely.

These recursion relations imply that the coefficients of $C_{g,[k_1,\dots,k_\ell]}(\beta,\phi)$ are integer numbers. Equivalently, we have that
$n_{\nu,\lambda}(\phi)$ are polynomials in $\phi$ of degree $2g=|\lambda|$ and coefficients in $\mathbb{Z}$.
Explicit computations of these integer numbers by means of the recursion suggest that they are always positive; however, we do not have a proof of this claim.
See \cref{tblr:SchurEpsilonWLphi} and \cref{tblr:SchurEpsilonWL} for lists of some of these coefficients for arbitrary $\phi$ and for $\phi=1$, respectively.

\section{Outlook}
\label{sec:outlook}
In this article we have analyzed the dependence of the free energy (logarithm of the generating function of correlators) of the \GBE{} and \WLBE{} on the parameters of
the matrix model, most importantly the rank $N$ and the deformation parameter $\beta$.

We exploited Virasoro constraints for the (time-dependent) free energy to derive an ansatz
for the genus expansion \eqref{eq:genus-expansion-F(u)} with coefficients defined as
$\beta$-deformations of generalized higher genus Catalan numbers.
The ansatz is motivated both by symmetry arguments and by showing that it leads to
a (topological) recursion relation on the coefficients, whose solution is unique.

Upon suitable change of time variables, we have been able to also define the function $G(\genus,\beta,\bv)$ which naturally admits a genus expansion in the variable $\genus$,
with coefficients given by the higher genus Catalan polynomials $C_{g,\nu}(\beta)$.
This led us to reformulate the genus expansion in terms of more fundamental integer invariants $n_{\nu,\lambda}$ and their generating function $n(\epsilon_1,\epsilon_2,\bv)$,
which is manifestly symmetric in the exchange of $\epsilon_1$ and $\epsilon_2$, equivalent to the involution sending $\beta$ to $1/\beta$. Quite remarkably, all the invariants $n_{\nu,\lambda}$ that we have been able to compute turn out to be positive integers, strongly suggesting that they should have a specific combinatorial interpretation.

As a last comment, we observe that our results hold for arbitrary values of the deformation parameter $\beta$, and that, by specializing to the values $\beta=\tfrac12,1$ and $2$ we immediately obtain the case of the orthogonal, unitary and symplectic ensembles, respectively. This implies that the $C_{g,\nu}(\beta)$ provide an analytic continuation of
higher genus Catalan numbers for all those classical matrix ensembles. 

A number of question arise naturally from our discussion.
\begin{itemize}
\item While integrality of the $n_{\nu,\lambda}$ follows from the cut-and-join recursion relations, their positivity is not immediately obvious; however, \cref{tblr:SchurEpsilon,tblr:SchurEpsilonWL} give strong evidence for this claim.
A possible explanation of this fact could follow from some combinatorial interpretation of the positive integers $n_{\nu,\lambda}$ as dimensions of certain vector spaces.
Comparison with the results of \cite{Goulden:1996con,Goulden:1996b,LaCroix:2009}
would suggest that they should be related to counts of locally orientable maps as discussed in \cref{sec:hypermaps};
however, the precise details of the identification remain unclear.
Another possible connection is with the theory of $b$-weighted Hurwitz numbers, as in recent works of \cite{CHAPUY2022108645,10.1093/imrn/rnac177}.
\item Can one use similar techniques to derive a genus expansion for the time-dependent free energy of $q,t$-deformed matrix models as those considered in \cite{Morozov:2018eiq,Cassia:2020uxy,Cassia:2021uly}? What kind of topological recursion relation can one obtain from $q$-Virasoro constraints? What is the combinatorial meaning of the coefficients of the corresponding genus expansion?
\item It would appear from our analysis that one could come up with an ansatz for the time-dependent free energy of any $\beta$-deformed matrix model with polynomial potential
$V(x)=\frac{N}{m\lambda}x^m$. The natural generalization of the ansatz for the genus expansion would be
\begin{multline}
\label{eq:ansatz-higher-power-potential}
 F(N,\beta,\lambda,\bu)
 = \sum_{\ell=1}^\infty \sum_{g\in\frac12\mathbb{N}}
 \frac{N^{2-2g-2\ell} \beta^{1-\ell-2g}}{\ell!}
 \sum_{k_1,\dots,k_\ell=1}^\infty (\beta\lambda)^{\sum_{j=1}^\ell \frac{k_j}{m}} \\
 \times\sum_{i_1+i_2=2g}(-\beta)^{i_1} C_{g,[k_1,\dots,k_\ell]}^{(i_1,i_2)}
 \prod_{j=1}^\ell u_{k_j}
\end{multline}
for some appropriate coefficients $C_{g,[k_1,\dots,k_\ell]}^{(i_1,i_2)}$.
However, it is known from \cite{Morozov:2009xk,Cassia:2020uxy,Cassia:2021dpd} for example, that Virasoro constraints for $m\geq3$ do not admit a unique solution. For this reason, the polynomials $C_{g,[k_1,\dots,k_\ell]}(\beta)$ are not well-defined in this case, and we cannot guarantee that the ansatz \eqref{eq:ansatz-higher-power-potential} gives the correct genus expansion of the time-dependent free energy. It would be interesting to study this case further, as it seems to require different techniques from those used in this article.
\item In the case of the \GBE{}, we are able to give an explicit identification between Catalan polynomials and marginal $b$-polynomials as discussed in \cref{sec:hypermaps}. It is not known to us whether marginal $b$-polynomials can be defined also in the case of the \WLBE{} or not. Nevertheless, an analogue of the rooted map series \eqref{eq:seriesM} can be defined as
\be
\begin{aligned}
 M(N,b,\phi,\by) :=& H(\{x_k=N\},\by,\{z_k=N+b\phi\};b) \\
 =& \sum_{\mu,\nu,\lambda} c_{\mu,\nu,\lambda}(b)
 N^{\ell(\mu)} (N+b\phi)^{\ell(\lambda)} p_\nu(\by)
\end{aligned}
\ee
by using superintegrability \eqref{eq:character-expansion-WL}.
The obvious question now is: what is the function $M(N,b,\phi,\by)$ counting?
Moreover, what is the combinatorial meaning of the parameter $\phi$?
The answer might be related to a $\beta$-deformed version of the analysis in \cite{Cunden:2021int,Gisonni:2020lag} which relates moments of the \WLBE{} at $\beta=1$ and double monotone Hurwitz numbers.
\item Superintegrability of averages of Jack polynomials gives a closed formula for the character expansion of the generating function $Z(N,\beta,\lambda,\bu)$. Is there an equivalent reformulation of superintegrability for the function $F(N,\beta,\lambda,\bu)$? The existence of such a formula would give a closed form expression for all Catalan polynomials $C_{g,\nu}(\beta)$ and integer invariants $n_{\nu,\lambda}$.
A possible way to relate superintegrability to the topological expansion could be
via the W-representation of the matrix model as discussed in \cite{Mironov:2022xjc}.
\end{itemize}
We leave the investigation of these and other questions to future research.

\appendix

\section{Schur polynomials in two variables}
\label{sec:Schur}

We recall some properties of Schur symmetric polynomials in two variables.
See \cite{Macdonald:book} for a thorough discussion on symmetric functions.

Schur polynomials constitute a linear basis of the space of symmetric functions in $n$ variables.
They are labeled by integer partitions $\lambda$, which correspond to irreducible representations of $gl_n$. Whenever the length of the partition $\lambda$ is greater than the number of variables $n$, the corresponding Schur polynomial $s_\lambda$ vanishes identically. In the case $n=2$, the only nonzero Schur polynomials are those associated to
partitions of length at most two. Any such partition can be represented by an ordered set of two nonnegative integers, $[\lambda_1,\lambda_2]$ with $\lambda_1\geq\lambda_2\geq0$. We parametrize such partitions as $\lambda=[h+d,d]$ for $h,d\in\mathbb{N}$.

Let $\epsilon_1,\epsilon_2$ be the two arguments of the Schur polynomial.
Then one of the Pieri rule can be stated as
\be
 s_{[h+d,d]}(\epsilon_1,\epsilon_2)
 = s_{[1,1]}(\epsilon_1,\epsilon_2)^d s_{[h]}(\epsilon_1,\epsilon_2)
\ee
where
\be
 s_{[1,1]}(\epsilon_1,\epsilon_2)=\epsilon_1\epsilon_2
\ee
This implies that knowing $s_{[h]}(\epsilon_1,\epsilon_2)$ for any $h$ is enough to determine uniquely all other Schur polynomials.
Using the coproduct rule
\be
 s_{\lambda}(\epsilon_1,\epsilon_2)
 = \sum_{\nu\subset\lambda} s_{\lambda/\nu}(\epsilon_1)s_{\nu}(\epsilon_2),
\ee
we obtain the closed formula
\be
\label{eq:coproductSchur2}
 s_{[h+d,d]}(\epsilon_1,\epsilon_2)
 = \sum_{i+j=h} \epsilon_1^{d+i}\epsilon_2^{d+j}
 = (\epsilon_1\epsilon_2)^d\,
 \frac{\epsilon_1^{h+1}-\epsilon_2^{h+1}}{\epsilon_1-\epsilon_2}
\ee
As a corollary, we observe that
\be
 \frac{s_{[2h+1+d,d]}(\epsilon_1,\epsilon_2)}{s_{[1]}(\epsilon_1,\epsilon_2)}
 = (\epsilon_1\epsilon_2)^d\,
 \frac{\epsilon_1^{2(h+1)}-\epsilon_2^{2(h+1)}}{\epsilon_1^2-\epsilon_2^2}
\ee
where the r.h.s.\ is a symmetric polynomial in $\epsilon_1,\epsilon_2$ for all values of $d,h\in\mathbb{N}$. In other words, $s_{[1]}(\epsilon_1,\epsilon_2)$ always divides $s_{[2h+1+d,d]}(\epsilon_1,\epsilon_2)$.
This implies that if $|\lambda|$ is odd and $\epsilon_1+\epsilon_2=0$, then $s_\lambda(\epsilon_1,\epsilon_2)=0$.

\section{Hypermap counts and the \texorpdfstring{$b$}{b}-conjecture}
\label{sec:hypermaps}

In this section, we review some known results regarding counting hypermaps and the so-called
$b$-conjecture \cite{Goulden:1996con,Goulden:1996b}.
Our reference on the subject will be \cite{LaCroix:2009}.
After introducing the \emph{rooted hypermap series} $H$, \emph{rooted map series} $M$ and the \emph{marginal $b$-polynomials} $d_{\ell,\nu}(b)$,
we will give a formula to relate the polynomials $d_{\ell,\nu}(b)$ to the Catalan polynomials $C_{g,\nu}(\beta)$ that we have defined in \cref{sec:genus}.

Throughout this section, we adopt the notations of \cite{LaCroix:2009}, with the understanding that the parameters $b$ and $\beta$ are related by the equation
\be
\label{eq:bbeta}
 1+b=\beta^{-1}
\ee
Let us introduce the series
\be
 \Phi(\bx,\by,\bz;\beta) := \sum_{\lambda}
 \frac{\mathrm{JackP}_\lambda(\bx)
 \mathrm{JackQ}_\lambda(\by)
 \mathrm{JackP}_\lambda(\bz)}{\mathrm{JackP}_\lambda(p_k=\delta_{k,1})}
\ee
for three independent sets of times $\bx=\{x_1,x_2,\dots\}$, $\by=\{y_1,y_2,\dots\}$ and $\bz=\{z_1,z_2,\dots\}$.
\begin{definition}
The rooted Hypermap series is defined as
\be
 H(\bx,\by,\bz;b) := (1+b) D_{\by} \log\Phi(\bx,\by,\bz;\tfrac1{1+b})
\ee
with $D_{\by} = \sum_{k>0} k y_k\frac{\partial}{\partial y_k}$ being the degree operator in the $\by$ times.
\end{definition}
Then we can express $H(\bx,\by,\bz;b)$ in the power-sum basis
\be
 H(\bx,\by,\bz;b) = \sum_{\substack{\mu,\nu,\lambda\\|\mu|=|\nu|=|\lambda|}}
 c_{\mu,\nu,\lambda}(b) \, p_\mu(\bx) \, p_\nu(\by) \, p_\lambda(\bz)
\ee
with coefficients $c_{\mu,\nu,\lambda}(b)$ being the $b$-polynomials
implicitly defined by this expansion. Here, $p_\mu(\bx)=\prod_{k\in\mu}x_k$
and similarly for $p_\nu(\by)$, $p_\lambda(\bz)$.

\begin{definition}
The rooted map series $M(N,b,\by)$ is defined as a specialization of $H(\bx,\by,\bz;b)$,
\be
\label{eq:seriesM}
\begin{aligned}
 M(N,b,\by) :=& H(\{x_k=N\},\by,\{z_k=\delta_{k,2}\};b) \\
 =& \sum_{n=0}^\infty\sum_{\mu,\nu\vdash 2n} c_{\mu,\nu,[2^n]}(b) N^{\ell(\mu)}p_\nu(\by) \\
 =& \sum_{n=0}^\infty \sum_{\ell=0}^\infty N^\ell\sum_{\nu\vdash 2n} d_{\ell,\nu}(b) p_\nu(\by)
\end{aligned}
\ee
where the coefficients
\be
 d_{\ell,\nu}(b) = \sum_{\substack{\mu\vdash 2n\\\ell(\mu)=\ell}} c_{\mu,\nu,[2^n]}(b)
\ee
are the \emph{marginal $b$-polynomials}.
\end{definition}
The marginal $b$-conjecture (proven in \cite[Corollary 4.17]{LaCroix:2009}) states the following.
\begin{theorem}[La Croix]
The $d_{\ell,\nu}(b)$ are polynomials with non-negative integer coefficients.
Moreover, the coefficient of $b^k$ in $d_{\ell,\nu}(b)$ is the number of rooted maps with $\ell$ vertices, face-degree partition $\nu$, and a total of $k$ twisted edges.
\end{theorem}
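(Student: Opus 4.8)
Since this statement is quoted from La Croix's thesis rather than re-derived here, I only outline the strategy of that proof. The plan is to characterise both sides of the asserted identity by one and the same recursion: on the combinatorial side, a root-edge deletion recursion for rooted maps refined by a \emph{measure of non-orientability}, and on the algebraic side, a partial differential equation in $N$ and the $\by$-variables satisfied by $M(N,b,\by)$; equality then follows from a uniqueness argument once a trivial base case is matched.

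For the combinatorial side one first attaches to every rooted map $\mathfrak{m}$ a statistic $\vartheta(\mathfrak{m})$, defined by a canonical procedure: delete the root edge, classify the outcome (bridge, border, handle or cross-cap), re-root by a fixed tie-breaking rule, and iterate; $\vartheta(\mathfrak{m})$ records how many of these steps produce a ``twisted'' edge. The key preliminary lemma is that $\vartheta$ is a well-defined invariant of $\mathfrak{m}$, taking values in $\mathbb{N}$, vanishing exactly on orientable maps, and bounded in terms of the genus of the underlying surface. Granting this, summing the contributions of the four deletion cases shows that the generating series $\sum_{\mathfrak{m}} N^{\#\text{vertices}}\, b^{\vartheta(\mathfrak{m})}\, p_{\text{face-degrees}}(\by)$ satisfies an explicit linear PDE with coefficients polynomial in $b$. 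By construction its coefficients are then polynomials in $b$ with non-negative integer coefficients counting rooted maps by number of twisted edges — exactly the asserted interpretation of $d_{\ell,\nu}(b)$ — provided this series is identified with $M(N,b,\by)$.

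For the algebraic side one uses the Jack-expansion of $\Phi$ underlying the definition of $H$, specialised to $x_k=N$ and $z_k=\delta_{k,2}$, with the degree operator $D_{\by}$ applied to the logarithm. The specialisation $\bz=[2^n]$ is what makes the relevant operators available in closed form: the action on $\mathrm{JackP}_\lambda$ of the $b$-deformed analogue of the cut-and-join / Laplace--Beltrami operators is diagonal (or has a simple triangular form) with eigenvalues that are explicit rational functions of the parts of $\lambda$ and of $b$, and inserting these into the triple Cauchy sum for $\Phi$ reproduces precisely the PDE obtained combinatorially. Matching this PDE together with the base case — the one-vertex map with no edges, $d_{0,\emptyset}=1$ — forces $M(N,b,\by)$ to coincide coefficientwise with the combinatorial series, which proves the theorem.

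The hardest part is the compatibility of the two descriptions: showing that the Jack-series satisfies the edge-deletion recursion amounts to a nontrivial identity for Jack polynomials, namely the $\bz=[2^n]$ specialisation of the (in general still open) $b$-conjecture, and the whole point is that this specialisation is just strong enough to close the recursion while the unspecialised version is not. A close second difficulty is the purely combinatorial step that $\vartheta$ is well defined, i.e. that the sequence of deletions canonically partitions the non-root edges into handles, cross-caps and the rest independently of how ties are broken; this invariance is the technical heart of La Croix's argument.
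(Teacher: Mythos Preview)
The paper does not give its own proof of this theorem; it simply quotes the result and cites La Croix's thesis \cite[Corollary~4.17]{LaCroix:2009}. You correctly recognise this and instead sketch the strategy of the original proof, which is the appropriate thing to do here and whose broad shape (root-edge deletion defining a non-orientability statistic, matched against a PDE coming from the Jack expansion at $z_k=\delta_{k,2}$, with uniqueness forcing equality) is faithful to La Croix's argument. One minor slip: the base case is not $d_{0,\emptyset}=1$, since $M$ is defined via $D_{\by}\log\Phi$ and hence has no constant term in $\by$; the initial data are the smallest nontrivial maps (a single edge), but this does not affect the overall outline.
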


We now claim that marginal polynomials $d_{\ell,\nu}(b)$ and Catalan polynomials $C_{g,\nu}(\beta)$ are related (up to overall combinatorial factors) by an appropriate identification of labels and variables as in \eqref{eq:bbeta}.
In order to show this, we first identify the corresponding generating functions.
From \eqref{eq:characterexpansionGBE}, we can identify the time-dependent free energy of the \GBE{} with a specialization of the logarithm of $\Phi(\bx,\by,\bz;\beta)$,
\be
 F(N,\beta,\lambda,\bu)
 = \log\Phi\Big(\{x_k=N\},\Big\{y_k=(\tfrac{\beta\lambda}{N})^{\frac{k}{2}}\frac{k u_k}{\beta N}\Big\},\{z_k=\delta_{k,2}\};\beta\Big)
\ee
which, together with \eqref{eq:Gdef}, implies
\be
 D_{\bv} G(\genus,\beta,\bv)
 = (\beta\genus)^{2} M\Big((\beta\genus)^{-1},\beta^{-1}-1,\{y_k=(\beta\genus)^{\frac{k}{2}-1} k v_k\}\Big)
\ee
Expanding both sides as power series as in \eqref{eq:seriesG} and \eqref{eq:seriesM}, we can identify the coefficients to finally obtain the relation
\be
 C_{g,\nu}(\beta)
 = |\Aut(\nu)| ~ \frac{\prod_{k\in\nu}k}{|\nu|}
 \beta^{2g}
 ~ d_{2-2g-\ell(\nu)+\frac{|\nu|}{2},\nu}(\beta^{-1}-1)\,.
\ee
This identification of polynomials is evidence of an underlying relation between the invariants $n_{\nu,\lambda}$ with the counting of locally orientable maps; however, the precise nature of this relation is unclear to us.

\section{Tables of Catalan polynomials}
\label{sec:tables}


\subsection{The \texorpdfstring{\GBE{}}{GBE} case}

See \cref{tblr:betapoly,tblr:SchurEpsilon}.

{\tiny
\begin{adjustwidth}{-1.4in}{-1.4in}
\begin{longtblr}[
theme = fancy,
caption = {Polynomials $C_{g,\nu}(\beta)$ for the \GBE},
label = {tblr:betapoly},
]{
colspec = {X[-1,c]X[-1,c]X[-1,c]X[-1,c]X[-1,c]X[-1,c]},
cells = {mode=dmath,font=\tiny},
rowsep = {1pt},
colsep = {2pt},
rowhead = 1,
hspan=minimal,
}
\hline
 \nu & g=0 & g=1/2 & g=1 & g=3/2 & g=2 \\
\hline
 \lbrack 1,1\rbrack & 1 & 0 & 0 & 0 & 0 \\
 \lbrack 2\rbrack & 1 & 1-\beta  & 0 & 0 & 0 \\
 \lbrack 1,1,1,1\rbrack & 0 & 0 & 0 & 0 & 0 \\
 \lbrack 2,1,1\rbrack & 2 & 0 & 0 & 0 & 0 \\
 \lbrack 2,2\rbrack & 2 & 2-2 \beta  & 0 & 0 & 0 \\
 \lbrack 3,1\rbrack & 3 & 3-3 \beta  & 0 & 0 & 0 \\
 \lbrack 4\rbrack & 2 & 5-5 \beta  & 3 \beta ^2-5 \beta +3 & 0 & 0 \\
 \lbrack 1,1,1,1,1,1\rbrack & 0 & 0 & 0 & 0 & 0 \\
 \lbrack 2,1,1,1,1\rbrack & 0 & 0 & 0 & 0 & 0 \\
 \lbrack 2,2,1,1\rbrack & 8 & 0 & 0 & 0 & 0 \\
 \lbrack 2,2,2\rbrack & 8 & 8-8 \beta  & 0 & 0 & 0 \\
 \lbrack 3,1,1,1\rbrack & 6 & 0 & 0 & 0 & 0 \\
 \lbrack 3,2,1\rbrack & 12 & 12-12 \beta  & 0 & 0 & 0 \\
 \lbrack 3,3\rbrack & 12 & 27-27 \beta  & 15 \beta ^2-27 \beta +15 & 0 & 0 \\
 \lbrack 4,1,1\rbrack & 12 & 12-12 \beta  & 0 & 0 & 0 \\
 \lbrack 4,2\rbrack & 8 & 20-20 \beta  & 12 \beta ^2-20 \beta +12 & 0 & 0 \\
 \lbrack 5,1\rbrack & 10 & 25-25 \beta  & 15 \beta ^2-25 \beta +15 & 0 & 0 \\
 \lbrack 6\rbrack & 5 & 22-22 \beta  & 32 \beta ^2-54 \beta +32 & -15 \beta ^3+32 \beta ^2-32 \beta +15 & 0 \\
 \lbrack 1,1,1,1,1,1,1,1\rbrack & 0 & 0 & 0 & 0 & 0 \\
 \lbrack 2,1,1,1,1,1,1\rbrack & 0 & 0 & 0 & 0 & 0 \\
 \lbrack 2,2,1,1,1,1\rbrack & 0 & 0 & 0 & 0 & 0 \\
 \lbrack 2,2,2,1,1\rbrack & 48 & 0 & 0 & 0 & 0 \\
 \lbrack 2,2,2,2\rbrack & 48 & 48-48 \beta  & 0 & 0 & 0 \\
 \lbrack 3,1,1,1,1,1\rbrack & 0 & 0 & 0 & 0 & 0 \\
 \lbrack 3,2,1,1,1\rbrack & 36 & 0 & 0 & 0 & 0 \\
 \lbrack 3,2,2,1\rbrack & 72 & 72-72 \beta  & 0 & 0 & 0 \\
 \lbrack 3,3,1,1\rbrack & 72 & 72-72 \beta  & 0 & 0 & 0 \\
 \lbrack 3,3,2\rbrack & 72 & 162-162 \beta  & 90 \beta ^2-162 \beta +90 & 0 & 0 \\
 \lbrack 4,1,1,1,1\rbrack & 24 & 0 & 0 & 0 & 0 \\
 \lbrack 4,2,1,1\rbrack & 72 & 72-72 \beta  & 0 & 0 & 0 \\
 \lbrack 4,2,2\rbrack & 48 & 120-120 \beta  & 72 \beta ^2-120 \beta +72 & 0 & 0 \\
 \lbrack 4,3,1\rbrack & 72 & 168-168 \beta  & 96 \beta ^2-168 \beta +96 & 0 & 0 \\
 \lbrack 4,4\rbrack & 36 & 152-152 \beta  & 212 \beta ^2-364 \beta +212 & -96 \beta ^3+212 \beta ^2-212 \beta +96 & 0 \\
 \lbrack 5,1,1,1\rbrack & 60 & 60-60 \beta  & 0 & 0 & 0 \\
 \lbrack 5,2,1\rbrack & 60 & 150-150 \beta  & 90 \beta ^2-150 \beta +90 & 0 & 0 \\
 \lbrack 5,3\rbrack & 45 & 180-180 \beta  & 240 \beta ^2-420 \beta +240 & -105 \beta ^3+240 \beta ^2-240 \beta +105 & 0 \\
 \lbrack 6,1,1\rbrack & 60 & 150-150 \beta  & 90 \beta ^2-150 \beta +90 & 0 & 0 \\
 \lbrack 6,2\rbrack & 30 & 132-132 \beta  & 192 \beta ^2-324 \beta +192 & -90 \beta ^3+192 \beta ^2-192 \beta +90 & 0 \\
 \lbrack 7,1\rbrack & 35 & 154-154 \beta  & 224 \beta ^2-378 \beta +224 & -105 \beta ^3+224 \beta ^2-224 \beta +105 & 0 \\
 \lbrack 8\rbrack & 14 & 93-93 \beta  & 234 \beta ^2-398 \beta +234 & -260 \beta ^3+565 \beta ^2-565 \beta +260 & 105 \beta ^4-260 \beta ^3+331 \beta ^2-260 \beta +105 \\
 \lbrack 1,1,1,1,1,1,1,1,1,1\rbrack & 0 & 0 & 0 & 0 & 0 \\
 \lbrack 2,1,1,1,1,1,1,1,1\rbrack & 0 & 0 & 0 & 0 & 0 \\
 \lbrack 2,2,1,1,1,1,1,1\rbrack & 0 & 0 & 0 & 0 & 0 \\
 \lbrack 2,2,2,1,1,1,1\rbrack & 0 & 0 & 0 & 0 & 0 \\
 \lbrack 2,2,2,2,1,1\rbrack & 384 & 0 & 0 & 0 & 0 \\
 \lbrack 2,2,2,2,2\rbrack & 384 & 384-384 \beta  & 0 & 0 & 0 \\
 \lbrack 3,1,1,1,1,1,1,1\rbrack & 0 & 0 & 0 & 0 & 0 \\
 \lbrack 3,2,1,1,1,1,1\rbrack & 0 & 0 & 0 & 0 & 0 \\
 \lbrack 3,2,2,1,1,1\rbrack & 288 & 0 & 0 & 0 & 0 \\
 \lbrack 3,2,2,2,1\rbrack & 576 & 576-576 \beta  & 0 & 0 & 0 \\
 \lbrack 3,3,1,1,1,1\rbrack & 216 & 0 & 0 & 0 & 0 \\
 \lbrack 3,3,2,1,1\rbrack & 576 & 576-576 \beta  & 0 & 0 & 0 \\
 \lbrack 3,3,2,2\rbrack & 576 & 1296-1296 \beta  & 720 \beta ^2-1296 \beta +720 & 0 & 0 \\
 \lbrack 3,3,3,1\rbrack & 648 & 1458-1458 \beta  & 810 \beta ^2-1458 \beta +810 & 0 & 0 \\
 \lbrack 4,1,1,1,1,1,1\rbrack & 0 & 0 & 0 & 0 & 0 \\
 \lbrack 4,2,1,1,1,1\rbrack & 192 & 0 & 0 & 0 & 0 \\
 \lbrack 4,2,2,1,1\rbrack & 576 & 576-576 \beta  & 0 & 0 & 0 \\
 \lbrack 4,2,2,2\rbrack & 384 & 960-960 \beta  & 576 \beta ^2-960 \beta +576 & 0 & 0 \\
 \lbrack 4,3,1,1,1\rbrack & 504 & 504-504 \beta  & 0 & 0 & 0 \\
 \lbrack 4,3,2,1\rbrack & 576 & 1344-1344 \beta  & 768 \beta ^2-1344 \beta +768 & 0 & 0 \\
 \lbrack 4,3,3\rbrack & 432 & 1656-1656 \beta  & 2124 \beta ^2-3780 \beta +2124 & -900 \beta ^3+2124 \beta ^2-2124 \beta +900 & 0 \\
 \lbrack 4,4,1,1\rbrack & 576 & 1344-1344 \beta  & 768 \beta ^2-1344 \beta +768 & 0 & 0 \\
 \lbrack 4,4,2\rbrack & 288 & 1216-1216 \beta  & 1696 \beta ^2-2912 \beta +1696 & -768 \beta ^3+1696 \beta ^2-1696 \beta +768 & 0 \\
 \lbrack 5,1,1,1,1,1\rbrack & 120 & 0 & 0 & 0 & 0 \\
 \lbrack 5,2,1,1,1\rbrack & 480 & 480-480 \beta  & 0 & 0 & 0 \\
 \lbrack 5,2,2,1\rbrack & 480 & 1200-1200 \beta  & 720 \beta ^2-1200 \beta +720 & 0 & 0 \\
 \lbrack 5,3,1,1\rbrack & 540 & 1290-1290 \beta  & 750 \beta ^2-1290 \beta +750 & 0 & 0 \\
 \lbrack 5,3,2\rbrack & 360 & 1440-1440 \beta  & 1920 \beta ^2-3360 \beta +1920 & -840 \beta ^3+1920 \beta ^2-1920 \beta +840 & 0 \\
 \lbrack 5,4,1\rbrack & 360 & 1480-1480 \beta  & 2020 \beta ^2-3500 \beta +2020 & -900 \beta ^3+2020 \beta ^2-2020 \beta +900 & 0 \\
 \lbrack 5,5\rbrack & 180 & 1075-1075 \beta  & 2450 \beta ^2-4300 \beta +2450 & -2500 \beta ^3+5725 \beta ^2-5725 \beta +2500 & 945 \beta ^4-2500 \beta ^3+3275 \beta ^2-2500 \beta +945 \\
 \lbrack 6,1,1,1,1\rbrack & 360 & 360-360 \beta  & 0 & 0 & 0 \\
 \lbrack 6,2,1,1\rbrack & 480 & 1200-1200 \beta  & 720 \beta ^2-1200 \beta +720 & 0 & 0 \\
 \lbrack 6,2,2\rbrack & 240 & 1056-1056 \beta  & 1536 \beta ^2-2592 \beta +1536 & -720 \beta ^3+1536 \beta ^2-1536 \beta +720 & 0 \\
 \lbrack 6,3,1\rbrack & 360 & 1476-1476 \beta  & 2016 \beta ^2-3492 \beta +2016 & -900 \beta ^3+2016 \beta ^2-2016 \beta +900 & 0 \\
 \lbrack 6,4\rbrack & 144 & 912-912 \beta  & 2184 \beta ^2-3768 \beta +2184 & -2316 \beta ^3+5172 \beta ^2-5172 \beta +2316 & 900 \beta ^4-2316 \beta ^3+2988 \beta ^2-2316 \beta +900 \\
 \lbrack 7,1,1,1\rbrack & 420 & 1050-1050 \beta  & 630 \beta ^2-1050 \beta +630 & 0 & 0 \\
 \lbrack 7,2,1\rbrack & 280 & 1232-1232 \beta  & 1792 \beta ^2-3024 \beta +1792 & -840 \beta ^3+1792 \beta ^2-1792 \beta +840 & 0 \\
 \lbrack 7,3\rbrack & 168 & 1029-1029 \beta  & 2394 \beta ^2-4158 \beta +2394 & -2478 \beta ^3+5607 \beta ^2-5607 \beta +2478 & 945 \beta ^4-2478 \beta ^3+3213 \beta ^2-2478 \beta +945 \\
 \lbrack 8,1,1\rbrack & 280 & 1232-1232 \beta  & 1792 \beta ^2-3024 \beta +1792 & -840 \beta ^3+1792 \beta ^2-1792 \beta +840 & 0 \\
 \lbrack 8,2\rbrack & 112 & 744-744 \beta  & 1872 \beta ^2-3184 \beta +1872 & -2080 \beta ^3+4520 \beta ^2-4520 \beta +2080 & 840 \beta ^4-2080 \beta ^3+2648 \beta ^2-2080 \beta +840 \\
 \lbrack 9,1\rbrack & 126 & 837-837 \beta  & 2106 \beta ^2-3582 \beta +2106 & -2340 \beta ^3+5085 \beta ^2-5085 \beta +2340 & 945 \beta ^4-2340 \beta ^3+2979 \beta ^2-2340 \beta +945 \\
 \lbrack 10\rbrack & 42 & 386-386 \beta  & 1450 \beta ^2-2480 \beta +1450 & -2750 \beta ^3+6050 \beta ^2-6050 \beta +2750 & 2589 \beta ^4-6545 \beta ^3+8395 \beta ^2-6545 \beta +2589 \\
 \lbrack 1,1,1,1,1,1,1,1,1,1,1,1\rbrack & 0 & 0 & 0 & 0 & 0 \\
 \lbrack 2,1,1,1,1,1,1,1,1,1,1\rbrack & 0 & 0 & 0 & 0 & 0 \\
 \lbrack 2,2,1,1,1,1,1,1,1,1\rbrack & 0 & 0 & 0 & 0 & 0 \\
 \lbrack 2,2,2,1,1,1,1,1,1\rbrack & 0 & 0 & 0 & 0 & 0 \\
 \lbrack 2,2,2,2,1,1,1,1\rbrack & 0 & 0 & 0 & 0 & 0 \\
 \lbrack 2,2,2,2,2,1,1\rbrack & 3840 & 0 & 0 & 0 & 0 \\
 \lbrack 2,2,2,2,2,2\rbrack & 3840 & 3840-3840 \beta  & 0 & 0 & 0 \\
 \lbrack 3,1,1,1,1,1,1,1,1,1\rbrack & 0 & 0 & 0 & 0 & 0 \\
 \lbrack 3,2,1,1,1,1,1,1,1\rbrack & 0 & 0 & 0 & 0 & 0 \\
 \lbrack 3,2,2,1,1,1,1,1\rbrack & 0 & 0 & 0 & 0 & 0 \\
 \lbrack 3,2,2,2,1,1,1\rbrack & 2880 & 0 & 0 & 0 & 0 \\
 \lbrack 3,2,2,2,2,1\rbrack & 5760 & 5760-5760 \beta  & 0 & 0 & 0 \\
 \lbrack 3,3,1,1,1,1,1,1\rbrack & 0 & 0 & 0 & 0 & 0 \\
 \lbrack 3,3,2,1,1,1,1\rbrack & 2160 & 0 & 0 & 0 & 0 \\
 \lbrack 3,3,2,2,1,1\rbrack & 5760 & 5760-5760 \beta  & 0 & 0 & 0 \\
 \lbrack 3,3,2,2,2\rbrack & 5760 & 12960-12960 \beta  & 7200 \beta ^2-12960 \beta +7200 & 0 & 0 \\
 \lbrack 3,3,3,1,1,1\rbrack & 5184 & 5184-5184 \beta  & 0 & 0 & 0 \\
 \lbrack 3,3,3,2,1\rbrack & 6480 & 14580-14580 \beta  & 8100 \beta ^2-14580 \beta +8100 & 0 & 0 \\
 \lbrack 3,3,3,3\rbrack & 5184 & 19116-19116 \beta  & 23652 \beta ^2-42768 \beta +23652 & -9720 \beta ^3+23652 \beta ^2-23652 \beta +9720 & 0 \\
 \lbrack 4,1,1,1,1,1,1,1,1\rbrack & 0 & 0 & 0 & 0 & 0 \\
 \lbrack 4,2,1,1,1,1,1,1\rbrack & 0 & 0 & 0 & 0 & 0 \\
 \lbrack 4,2,2,1,1,1,1\rbrack & 1920 & 0 & 0 & 0 & 0 \\
 \lbrack 4,2,2,2,1,1\rbrack & 5760 & 5760-5760 \beta  & 0 & 0 & 0 \\
 \lbrack 4,2,2,2,2\rbrack & 3840 & 9600-9600 \beta  & 5760 \beta ^2-9600 \beta +5760 & 0 & 0 \\
 \lbrack 4,3,1,1,1,1,1\rbrack & 1440 & 0 & 0 & 0 & 0 \\
 \lbrack 4,3,2,1,1,1\rbrack & 5040 & 5040-5040 \beta  & 0 & 0 & 0 \\
 \lbrack 4,3,2,2,1\rbrack & 5760 & 13440-13440 \beta  & 7680 \beta ^2-13440 \beta +7680 & 0 & 0 \\
 \lbrack 4,3,3,1,1\rbrack & 6048 & 13896-13896 \beta  & 7848 \beta ^2-13896 \beta +7848 & 0 & 0 \\
 \lbrack 4,3,3,2\rbrack & 4320 & 16560-16560 \beta  & 21240 \beta ^2-37800 \beta +21240 & -9000 \beta ^3+21240 \beta ^2-21240 \beta +9000 & 0 \\
 \lbrack 4,4,1,1,1,1\rbrack & 4032 & 4032-4032 \beta  & 0 & 0 & 0 \\
 \lbrack 4,4,2,1,1\rbrack & 5760 & 13440-13440 \beta  & 7680 \beta ^2-13440 \beta +7680 & 0 & 0 \\
 \lbrack 4,4,2,2\rbrack & 2880 & 12160-12160 \beta  & 16960 \beta ^2-29120 \beta +16960 & -7680 \beta ^3+16960 \beta ^2-16960 \beta +7680 & 0 \\
 \lbrack 4,4,3,1\rbrack & 4320 & 16896-16896 \beta  & 22080 \beta ^2-38976 \beta +22080 & -9504 \beta ^3+22080 \beta ^2-22080 \beta +9504 & 0 \\
 \lbrack 4,4,4\rbrack & 1728 & 10592-10592 \beta  & 24512 \beta ^2-42688 \beta +24512 & -25152 \beta ^3+57248 \beta ^2-57248 \beta +25152 & 9504 \beta ^4-25152 \beta ^3+32736 \beta ^2-25152 \beta +9504 \\
 \lbrack 5,1,1,1,1,1,1,1\rbrack & 0 & 0 & 0 & 0 & 0 \\
 \lbrack 5,2,1,1,1,1,1\rbrack & 1200 & 0 & 0 & 0 & 0 \\
 \lbrack 5,2,2,1,1,1\rbrack & 4800 & 4800-4800 \beta  & 0 & 0 & 0 \\
 \lbrack 5,2,2,2,1\rbrack & 4800 & 12000-12000 \beta  & 7200 \beta ^2-12000 \beta +7200 & 0 & 0 \\
 \lbrack 5,3,1,1,1,1\rbrack & 3960 & 3960-3960 \beta  & 0 & 0 & 0 \\
 \lbrack 5,3,2,1,1\rbrack & 5400 & 12900-12900 \beta  & 7500 \beta ^2-12900 \beta +7500 & 0 & 0 \\
 \lbrack 5,3,2,2\rbrack & 3600 & 14400-14400 \beta  & 19200 \beta ^2-33600 \beta +19200 & -8400 \beta ^3+19200 \beta ^2-19200 \beta +8400 & 0 \\
 \lbrack 5,3,3,1\rbrack & 4320 & 16920-16920 \beta  & 22140 \beta ^2-39060 \beta +22140 & -9540 \beta ^3+22140 \beta ^2-22140 \beta +9540 & 0 \\
 \lbrack 5,4,1,1,1\rbrack & 5040 & 11880-11880 \beta  & 6840 \beta ^2-11880 \beta +6840 & 0 & 0 \\
 \lbrack 5,4,2,1\rbrack & 3600 & 14800-14800 \beta  & 20200 \beta ^2-35000 \beta +20200 & -9000 \beta ^3+20200 \beta ^2-20200 \beta +9000 & 0 \\
 \lbrack 5,4,3\rbrack & 2160 & 12540-12540 \beta  & 27720 \beta ^2-48960 \beta +27720 & -27420 \beta ^3+63840 \beta ^2-63840 \beta +27420 & 10080 \beta ^4-27420 \beta ^3+36120 \beta ^2-27420 \beta +10080 \\
 \lbrack 5,5,1,1\rbrack & 3600 & 14800-14800 \beta  & 20200 \beta ^2-35000 \beta +20200 & -9000 \beta ^3+20200 \beta ^2-20200 \beta +9000 & 0 \\
 \lbrack 5,5,2\rbrack & 1800 & 10750-10750 \beta  & 24500 \beta ^2-43000 \beta +24500 & -25000 \beta ^3+57250 \beta ^2-57250 \beta +25000 & 9450 \beta ^4-25000 \beta ^3+32750 \beta ^2-25000 \beta +9450 \\
 \lbrack 6,1,1,1,1,1,1\rbrack & 720 & 0 & 0 & 0 & 0 \\
 \lbrack 6,2,1,1,1,1\rbrack & 3600 & 3600-3600 \beta  & 0 & 0 & 0 \\
 \lbrack 6,2,2,1,1\rbrack & 4800 & 12000-12000 \beta  & 7200 \beta ^2-12000 \beta +7200 & 0 & 0 \\
 \lbrack 6,2,2,2\rbrack & 2400 & 10560-10560 \beta  & 15360 \beta ^2-25920 \beta +15360 & -7200 \beta ^3+15360 \beta ^2-15360 \beta +7200 & 0 \\
 \lbrack 6,3,1,1,1\rbrack & 4680 & 11340-11340 \beta  & 6660 \beta ^2-11340 \beta +6660 & 0 & 0 \\
 \lbrack 6,3,2,1\rbrack & 3600 & 14760-14760 \beta  & 20160 \beta ^2-34920 \beta +20160 & -9000 \beta ^3+20160 \beta ^2-20160 \beta +9000 & 0 \\
 \lbrack 6,3,3\rbrack & 2160 & 12582-12582 \beta  & 27900 \beta ^2-49140 \beta +27900 & -27648 \beta ^3+64206 \beta ^2-64206 \beta +27648 & 10170 \beta ^4-27648 \beta ^3+36306 \beta ^2-27648 \beta +10170 \\
 \lbrack 6,4,1,1\rbrack & 3600 & 14784-14784 \beta  & 20184 \beta ^2-34968 \beta +20184 & -9000 \beta ^3+20184 \beta ^2-20184 \beta +9000 & 0 \\
 \lbrack 6,4,2\rbrack & 1440 & 9120-9120 \beta  & 21840 \beta ^2-37680 \beta +21840 & -23160 \beta ^3+51720 \beta ^2-51720 \beta +23160 & 9000 \beta ^4-23160 \beta ^3+29880 \beta ^2-23160 \beta +9000 \\
 \lbrack 6,5,1\rbrack & 1800 & 11010-11010 \beta  & 25620 \beta ^2-44640 \beta +25620 & -26580 \beta ^3+60210 \beta ^2-60210 \beta +26580 & 10170 \beta ^4-26580 \beta ^3+34590 \beta ^2-26580 \beta +10170 \\
 \lbrack 6,6\rbrack & 600 & 5184-5184 \beta  & 18276 \beta ^2-31752 \beta +18276 & -32616 \beta ^3+73812 \beta ^2-73812 \beta +32616 & 29094 \beta ^4-76338 \beta ^3+99258 \beta ^2-76338 \beta +29094 \\
 \lbrack 7,1,1,1,1,1\rbrack & 2520 & 2520-2520 \beta  & 0 & 0 & 0 \\
 \lbrack 7,2,1,1,1\rbrack & 4200 & 10500-10500 \beta  & 6300 \beta ^2-10500 \beta +6300 & 0 & 0 \\
 \lbrack 7,2,2,1\rbrack & 2800 & 12320-12320 \beta  & 17920 \beta ^2-30240 \beta +17920 & -8400 \beta ^3+17920 \beta ^2-17920 \beta +8400 & 0 \\
 \lbrack 7,3,1,1\rbrack & 3360 & 14028-14028 \beta  & 19488 \beta ^2-33516 \beta +19488 & -8820 \beta ^3+19488 \beta ^2-19488 \beta +8820 & 0 \\
 \lbrack 7,3,2\rbrack & 1680 & 10290-10290 \beta  & 23940 \beta ^2-41580 \beta +23940 & -24780 \beta ^3+56070 \beta ^2-56070 \beta +24780 & 9450 \beta ^4-24780 \beta ^3+32130 \beta ^2-24780 \beta +9450 \\
 \lbrack 7,4,1\rbrack & 1680 & 10500-10500 \beta  & 24864 \beta ^2-43008 \beta +24864 & -26124 \beta ^3+58632 \beta ^2-58632 \beta +26124 & 10080 \beta ^4-26124 \beta ^3+33768 \beta ^2-26124 \beta +10080 \\
 \lbrack 7,5\rbrack & 700 & 5810-5810 \beta  & 19810 \beta ^2-34720 \beta +19810 & -34440 \beta ^3+78960 \beta ^2-78960 \beta +34440 & 30135 \beta ^4-80185 \beta ^3+104895 \beta ^2-80185 \beta +30135 \\
 \lbrack 8,1,1,1,1\rbrack & 3360 & 8400-8400 \beta  & 5040 \beta ^2-8400 \beta +5040 & 0 & 0 \\
 \lbrack 8,2,1,1\rbrack & 2800 & 12320-12320 \beta  & 17920 \beta ^2-30240 \beta +17920 & -8400 \beta ^3+17920 \beta ^2-17920 \beta +8400 & 0 \\
 \lbrack 8,2,2\rbrack & 1120 & 7440-7440 \beta  & 18720 \beta ^2-31840 \beta +18720 & -20800 \beta ^3+45200 \beta ^2-45200 \beta +20800 & 8400 \beta ^4-20800 \beta ^3+26480 \beta ^2-20800 \beta +8400 \\
 \lbrack 8,3,1\rbrack & 1680 & 10464-10464 \beta  & 24768 \beta ^2-42816 \beta +24768 & -26064 \beta ^3+58416 \beta ^2-58416 \beta +26064 & 10080 \beta ^4-26064 \beta ^3+33648 \beta ^2-26064 \beta +10080 \\
 \lbrack 8,4\rbrack & 560 & 4912-4912 \beta  & 17576 \beta ^2-30392 \beta +17576 & -31776 \beta ^3+71376 \beta ^2-71376 \beta +31776 & 28632 \beta ^4-74528 \beta ^3+96552 \beta ^2-74528 \beta +28632 \\
 \lbrack 9,1,1,1\rbrack & 2520 & 11088-11088 \beta  & 16128 \beta ^2-27216 \beta +16128 & -7560 \beta ^3+16128 \beta ^2-16128 \beta +7560 & 0 \\
 \lbrack 9,2,1\rbrack & 1260 & 8370-8370 \beta  & 21060 \beta ^2-35820 \beta +21060 & -23400 \beta ^3+50850 \beta ^2-50850 \beta +23400 & 9450 \beta ^4-23400 \beta ^3+29790 \beta ^2-23400 \beta +9450 \\
 \lbrack 9,3\rbrack & 630 & 5400-5400 \beta  & 18936 \beta ^2-32832 \beta +18936 & -33642 \beta ^3+75996 \beta ^2-75996 \beta +33642 & 29871 \beta ^4-78435 \beta ^3+101853 \beta ^2-78435 \beta +29871 \\
 \lbrack 10,1,1\rbrack & 1260 & 8370-8370 \beta  & 21060 \beta ^2-35820 \beta +21060 & -23400 \beta ^3+50850 \beta ^2-50850 \beta +23400 & 9450 \beta ^4-23400 \beta ^3+29790 \beta ^2-23400 \beta +9450 \\
 \lbrack 10,2\rbrack & 420 & 3860-3860 \beta  & 14500 \beta ^2-24800 \beta +14500 & -27500 \beta ^3+60500 \beta ^2-60500 \beta +27500 & 25890 \beta ^4-65450 \beta ^3+83950 \beta ^2-65450 \beta +25890 \\
 \lbrack 11,1\rbrack & 462 & 4246-4246 \beta  & 15950 \beta ^2-27280 \beta +15950 & -30250 \beta ^3+66550 \beta ^2-66550 \beta +30250 & 28479 \beta ^4-71995 \beta ^3+92345 \beta ^2-71995 \beta +28479 \\
 \lbrack 12\rbrack & 132 & 1586-1586 \beta  & 8178 \beta ^2-14046 \beta +8178 & -22950 \beta ^3+50945 \beta ^2-50945 \beta +22950 & 36500 \beta ^4-93612 \beta ^3+120692 \beta ^2-93612 \beta +36500 \\
\hline
\end{longtblr}
\end{adjustwidth}
}
{\tiny
\begin{longtblr}[
theme = fancy,
caption = {Integers $n_{\nu,\lambda}$ for the \GBE},
label = {tblr:SchurEpsilon},
]{
colspec = {X[-1,c]X[-1,c]X[-1,c]X[-1,c]X[-1,c]X[-1,c]X[-1,c]X[-1,c]X[-1,c]X[-1,c]},
cells = {mode=dmath,font=\tiny},
rowsep = {1pt},
colsep = {2pt},
rowhead = 1,
hspan=minimal,
}
\hline
 \diagbox{\nu}{\lambda} & {[]} & {[1]} & {[1,1]} & {[2]} & {[2,1]} & {[3]} & {[2,2]} & {[3,1]} & {[4]} \\
\hline
 \lbrack 1,1 \rbrack  & 1 & 0 & 0 & 0 & 0 & 0 & 0 & 0 & 0 \\
 \lbrack 2 \rbrack  & 1 & 1 & 0 & 0 & 0 & 0 & 0 & 0 & 0 \\
 \lbrack 1,1,1,1 \rbrack  & 0 & 0 & 0 & 0 & 0 & 0 & 0 & 0 & 0 \\
 \lbrack 2,1,1 \rbrack  & 2 & 0 & 0 & 0 & 0 & 0 & 0 & 0 & 0 \\
 \lbrack 2,2 \rbrack  & 2 & 2 & 0 & 0 & 0 & 0 & 0 & 0 & 0 \\
 \lbrack 3,1 \rbrack  & 3 & 3 & 0 & 0 & 0 & 0 & 0 & 0 & 0 \\
 \lbrack 4 \rbrack  & 2 & 5 & 2 & 3 & 0 & 0 & 0 & 0 & 0 \\
 \lbrack 1,1,1,1,1,1 \rbrack  & 0 & 0 & 0 & 0 & 0 & 0 & 0 & 0 & 0 \\
 \lbrack 2,1,1,1,1 \rbrack  & 0 & 0 & 0 & 0 & 0 & 0 & 0 & 0 & 0 \\
 \lbrack 2,2,1,1 \rbrack  & 8 & 0 & 0 & 0 & 0 & 0 & 0 & 0 & 0 \\
 \lbrack 2,2,2 \rbrack  & 8 & 8 & 0 & 0 & 0 & 0 & 0 & 0 & 0 \\
 \lbrack 3,1,1,1 \rbrack  & 6 & 0 & 0 & 0 & 0 & 0 & 0 & 0 & 0 \\
 \lbrack 3,2,1 \rbrack  & 12 & 12 & 0 & 0 & 0 & 0 & 0 & 0 & 0 \\
 \lbrack 3,3 \rbrack  & 12 & 27 & 12 & 15 & 0 & 0 & 0 & 0 & 0 \\
 \lbrack 4,1,1 \rbrack  & 12 & 12 & 0 & 0 & 0 & 0 & 0 & 0 & 0 \\
 \lbrack 4,2 \rbrack  & 8 & 20 & 8 & 12 & 0 & 0 & 0 & 0 & 0 \\
 \lbrack 5,1 \rbrack  & 10 & 25 & 10 & 15 & 0 & 0 & 0 & 0 & 0 \\
 \lbrack 6 \rbrack  & 5 & 22 & 22 & 32 & 17 & 15 & 0 & 0 & 0 \\
 \lbrack 1,1,1,1,1,1,1,1 \rbrack  & 0 & 0 & 0 & 0 & 0 & 0 & 0 & 0 & 0 \\
 \lbrack 2,1,1,1,1,1,1 \rbrack  & 0 & 0 & 0 & 0 & 0 & 0 & 0 & 0 & 0 \\
 \lbrack 2,2,1,1,1,1 \rbrack  & 0 & 0 & 0 & 0 & 0 & 0 & 0 & 0 & 0 \\
 \lbrack 2,2,2,1,1 \rbrack  & 48 & 0 & 0 & 0 & 0 & 0 & 0 & 0 & 0 \\
 \lbrack 2,2,2,2 \rbrack  & 48 & 48 & 0 & 0 & 0 & 0 & 0 & 0 & 0 \\
 \lbrack 3,1,1,1,1,1 \rbrack  & 0 & 0 & 0 & 0 & 0 & 0 & 0 & 0 & 0 \\
 \lbrack 3,2,1,1,1 \rbrack  & 36 & 0 & 0 & 0 & 0 & 0 & 0 & 0 & 0 \\
 \lbrack 3,2,2,1 \rbrack  & 72 & 72 & 0 & 0 & 0 & 0 & 0 & 0 & 0 \\
 \lbrack 3,3,1,1 \rbrack  & 72 & 72 & 0 & 0 & 0 & 0 & 0 & 0 & 0 \\
 \lbrack 3,3,2 \rbrack  & 72 & 162 & 72 & 90 & 0 & 0 & 0 & 0 & 0 \\
 \lbrack 4,1,1,1,1 \rbrack  & 24 & 0 & 0 & 0 & 0 & 0 & 0 & 0 & 0 \\
 \lbrack 4,2,1,1 \rbrack  & 72 & 72 & 0 & 0 & 0 & 0 & 0 & 0 & 0 \\
 \lbrack 4,2,2 \rbrack  & 48 & 120 & 48 & 72 & 0 & 0 & 0 & 0 & 0 \\
 \lbrack 4,3,1 \rbrack  & 72 & 168 & 72 & 96 & 0 & 0 & 0 & 0 & 0 \\
 \lbrack 4,4 \rbrack  & 36 & 152 & 152 & 212 & 116 & 96 & 0 & 0 & 0 \\
 \lbrack 5,1,1,1 \rbrack  & 60 & 60 & 0 & 0 & 0 & 0 & 0 & 0 & 0 \\
 \lbrack 5,2,1 \rbrack  & 60 & 150 & 60 & 90 & 0 & 0 & 0 & 0 & 0 \\
 \lbrack 5,3 \rbrack  & 45 & 180 & 180 & 240 & 135 & 105 & 0 & 0 & 0 \\
 \lbrack 6,1,1 \rbrack  & 60 & 150 & 60 & 90 & 0 & 0 & 0 & 0 & 0 \\
 \lbrack 6,2 \rbrack  & 30 & 132 & 132 & 192 & 102 & 90 & 0 & 0 & 0 \\
 \lbrack 7,1 \rbrack  & 35 & 154 & 154 & 224 & 119 & 105 & 0 & 0 & 0 \\
 \lbrack 8 \rbrack  & 14 & 93 & 164 & 234 & 305 & 260 & 71 & 155 & 105 \\
 \lbrack 1,1,1,1,1,1,1,1,1,1 \rbrack  & 0 & 0 & 0 & 0 & 0 & 0 & 0 & 0 & 0 \\
 \lbrack 2,1,1,1,1,1,1,1,1 \rbrack  & 0 & 0 & 0 & 0 & 0 & 0 & 0 & 0 & 0 \\
 \lbrack 2,2,1,1,1,1,1,1 \rbrack  & 0 & 0 & 0 & 0 & 0 & 0 & 0 & 0 & 0 \\
 \lbrack 2,2,2,1,1,1,1 \rbrack  & 0 & 0 & 0 & 0 & 0 & 0 & 0 & 0 & 0 \\
 \lbrack 2,2,2,2,1,1 \rbrack  & 384 & 0 & 0 & 0 & 0 & 0 & 0 & 0 & 0 \\
 \lbrack 2,2,2,2,2 \rbrack  & 384 & 384 & 0 & 0 & 0 & 0 & 0 & 0 & 0 \\
 \lbrack 3,1,1,1,1,1,1,1 \rbrack  & 0 & 0 & 0 & 0 & 0 & 0 & 0 & 0 & 0 \\
 \lbrack 3,2,1,1,1,1,1 \rbrack  & 0 & 0 & 0 & 0 & 0 & 0 & 0 & 0 & 0 \\
 \lbrack 3,2,2,1,1,1 \rbrack  & 288 & 0 & 0 & 0 & 0 & 0 & 0 & 0 & 0 \\
 \lbrack 3,2,2,2,1 \rbrack  & 576 & 576 & 0 & 0 & 0 & 0 & 0 & 0 & 0 \\
 \lbrack 3,3,1,1,1,1 \rbrack  & 216 & 0 & 0 & 0 & 0 & 0 & 0 & 0 & 0 \\
 \lbrack 3,3,2,1,1 \rbrack  & 576 & 576 & 0 & 0 & 0 & 0 & 0 & 0 & 0 \\
 \lbrack 3,3,2,2 \rbrack  & 576 & 1296 & 576 & 720 & 0 & 0 & 0 & 0 & 0 \\
 \lbrack 3,3,3,1 \rbrack  & 648 & 1458 & 648 & 810 & 0 & 0 & 0 & 0 & 0 \\
 \lbrack 4,1,1,1,1,1,1 \rbrack  & 0 & 0 & 0 & 0 & 0 & 0 & 0 & 0 & 0 \\
 \lbrack 4,2,1,1,1,1 \rbrack  & 192 & 0 & 0 & 0 & 0 & 0 & 0 & 0 & 0 \\
 \lbrack 4,2,2,1,1 \rbrack  & 576 & 576 & 0 & 0 & 0 & 0 & 0 & 0 & 0 \\
 \lbrack 4,2,2,2 \rbrack  & 384 & 960 & 384 & 576 & 0 & 0 & 0 & 0 & 0 \\
 \lbrack 4,3,1,1,1 \rbrack  & 504 & 504 & 0 & 0 & 0 & 0 & 0 & 0 & 0 \\
 \lbrack 4,3,2,1 \rbrack  & 576 & 1344 & 576 & 768 & 0 & 0 & 0 & 0 & 0 \\
 \lbrack 4,3,3 \rbrack  & 432 & 1656 & 1656 & 2124 & 1224 & 900 & 0 & 0 & 0 \\
 \lbrack 4,4,1,1 \rbrack  & 576 & 1344 & 576 & 768 & 0 & 0 & 0 & 0 & 0 \\
 \lbrack 4,4,2 \rbrack  & 288 & 1216 & 1216 & 1696 & 928 & 768 & 0 & 0 & 0 \\
 \lbrack 5,1,1,1,1,1 \rbrack  & 120 & 0 & 0 & 0 & 0 & 0 & 0 & 0 & 0 \\
 \lbrack 5,2,1,1,1 \rbrack  & 480 & 480 & 0 & 0 & 0 & 0 & 0 & 0 & 0 \\
 \lbrack 5,2,2,1 \rbrack  & 480 & 1200 & 480 & 720 & 0 & 0 & 0 & 0 & 0 \\
 \lbrack 5,3,1,1 \rbrack  & 540 & 1290 & 540 & 750 & 0 & 0 & 0 & 0 & 0 \\
 \lbrack 5,3,2 \rbrack  & 360 & 1440 & 1440 & 1920 & 1080 & 840 & 0 & 0 & 0 \\
 \lbrack 5,4,1 \rbrack  & 360 & 1480 & 1480 & 2020 & 1120 & 900 & 0 & 0 & 0 \\
 \lbrack 5,5 \rbrack  & 180 & 1075 & 1850 & 2450 & 3225 & 2500 & 775 & 1555 & 945 \\
 \lbrack 6,1,1,1,1 \rbrack  & 360 & 360 & 0 & 0 & 0 & 0 & 0 & 0 & 0 \\
 \lbrack 6,2,1,1 \rbrack  & 480 & 1200 & 480 & 720 & 0 & 0 & 0 & 0 & 0 \\
 \lbrack 6,2,2 \rbrack  & 240 & 1056 & 1056 & 1536 & 816 & 720 & 0 & 0 & 0 \\
 \lbrack 6,3,1 \rbrack  & 360 & 1476 & 1476 & 2016 & 1116 & 900 & 0 & 0 & 0 \\
 \lbrack 6,4 \rbrack  & 144 & 912 & 1584 & 2184 & 2856 & 2316 & 672 & 1416 & 900 \\
 \lbrack 7,1,1,1 \rbrack  & 420 & 1050 & 420 & 630 & 0 & 0 & 0 & 0 & 0 \\
 \lbrack 7,2,1 \rbrack  & 280 & 1232 & 1232 & 1792 & 952 & 840 & 0 & 0 & 0 \\
 \lbrack 7,3 \rbrack  & 168 & 1029 & 1764 & 2394 & 3129 & 2478 & 735 & 1533 & 945 \\
 \lbrack 8,1,1 \rbrack  & 280 & 1232 & 1232 & 1792 & 952 & 840 & 0 & 0 & 0 \\
 \lbrack 8,2 \rbrack  & 112 & 744 & 1312 & 1872 & 2440 & 2080 & 568 & 1240 & 840 \\
 \lbrack 9,1 \rbrack  & 126 & 837 & 1476 & 2106 & 2745 & 2340 & 639 & 1395 & 945 \\
 \lbrack 10 \rbrack  & 42 & 386 & 1030 & 1450 & 3300 & 2750 & 1850 & 3956 & 2589 \\
 \lbrack 1,1,1,1,1,1,1,1,1,1,1,1 \rbrack & 0 & 0 & 0 & 0 & 0 & 0 & 0 & 0 & 0 \\
 \lbrack 2,1,1,1,1,1,1,1,1,1,1 \rbrack & 0 & 0 & 0 & 0 & 0 & 0 & 0 & 0 & 0 \\
 \lbrack 2,2,1,1,1,1,1,1,1,1 \rbrack & 0 & 0 & 0 & 0 & 0 & 0 & 0 & 0 & 0 \\
 \lbrack 2,2,2,1,1,1,1,1,1 \rbrack & 0 & 0 & 0 & 0 & 0 & 0 & 0 & 0 & 0 \\
 \lbrack 2,2,2,2,1,1,1,1 \rbrack & 0 & 0 & 0 & 0 & 0 & 0 & 0 & 0 & 0 \\
 \lbrack 2,2,2,2,2,1,1 \rbrack & 3840 & 0 & 0 & 0 & 0 & 0 & 0 & 0 & 0 \\
 \lbrack 2,2,2,2,2,2 \rbrack & 3840 & 3840 & 0 & 0 & 0 & 0 & 0 & 0 & 0 \\
 \lbrack 3,1,1,1,1,1,1,1,1,1 \rbrack & 0 & 0 & 0 & 0 & 0 & 0 & 0 & 0 & 0 \\
 \lbrack 3,2,1,1,1,1,1,1,1 \rbrack & 0 & 0 & 0 & 0 & 0 & 0 & 0 & 0 & 0 \\
 \lbrack 3,2,2,1,1,1,1,1 \rbrack & 0 & 0 & 0 & 0 & 0 & 0 & 0 & 0 & 0 \\
 \lbrack 3,2,2,2,1,1,1 \rbrack & 2880 & 0 & 0 & 0 & 0 & 0 & 0 & 0 & 0 \\
 \lbrack 3,2,2,2,2,1 \rbrack & 5760 & 5760 & 0 & 0 & 0 & 0 & 0 & 0 & 0 \\
 \lbrack 3,3,1,1,1,1,1,1 \rbrack & 0 & 0 & 0 & 0 & 0 & 0 & 0 & 0 & 0 \\
 \lbrack 3,3,2,1,1,1,1 \rbrack & 2160 & 0 & 0 & 0 & 0 & 0 & 0 & 0 & 0 \\
 \lbrack 3,3,2,2,1,1 \rbrack & 5760 & 5760 & 0 & 0 & 0 & 0 & 0 & 0 & 0 \\
 \lbrack 3,3,2,2,2 \rbrack & 5760 & 12960 & 5760 & 7200 & 0 & 0 & 0 & 0 & 0 \\
 \lbrack 3,3,3,1,1,1 \rbrack & 5184 & 5184 & 0 & 0 & 0 & 0 & 0 & 0 & 0 \\
 \lbrack 3,3,3,2,1 \rbrack & 6480 & 14580 & 6480 & 8100 & 0 & 0 & 0 & 0 & 0 \\
 \lbrack 3,3,3,3 \rbrack & 5184 & 19116 & 19116 & 23652 & 13932 & 9720 & 0 & 0 & 0 \\
 \lbrack 4,1,1,1,1,1,1,1,1 \rbrack & 0 & 0 & 0 & 0 & 0 & 0 & 0 & 0 & 0 \\
 \lbrack 4,2,1,1,1,1,1,1 \rbrack & 0 & 0 & 0 & 0 & 0 & 0 & 0 & 0 & 0 \\
 \lbrack 4,2,2,1,1,1,1 \rbrack & 1920 & 0 & 0 & 0 & 0 & 0 & 0 & 0 & 0 \\
 \lbrack 4,2,2,2,1,1 \rbrack & 5760 & 5760 & 0 & 0 & 0 & 0 & 0 & 0 & 0 \\
 \lbrack 4,2,2,2,2 \rbrack & 3840 & 9600 & 3840 & 5760 & 0 & 0 & 0 & 0 & 0 \\
 \lbrack 4,3,1,1,1,1,1 \rbrack & 1440 & 0 & 0 & 0 & 0 & 0 & 0 & 0 & 0 \\
 \lbrack 4,3,2,1,1,1 \rbrack & 5040 & 5040 & 0 & 0 & 0 & 0 & 0 & 0 & 0 \\
 \lbrack 4,3,2,2,1 \rbrack & 5760 & 13440 & 5760 & 7680 & 0 & 0 & 0 & 0 & 0 \\
 \lbrack 4,3,3,1,1 \rbrack & 6048 & 13896 & 6048 & 7848 & 0 & 0 & 0 & 0 & 0 \\
 \lbrack 4,3,3,2 \rbrack & 4320 & 16560 & 16560 & 21240 & 12240 & 9000 & 0 & 0 & 0 \\
 \lbrack 4,4,1,1,1,1 \rbrack & 4032 & 4032 & 0 & 0 & 0 & 0 & 0 & 0 & 0 \\
 \lbrack 4,4,2,1,1 \rbrack & 5760 & 13440 & 5760 & 7680 & 0 & 0 & 0 & 0 & 0 \\
 \lbrack 4,4,2,2 \rbrack & 2880 & 12160 & 12160 & 16960 & 9280 & 7680 & 0 & 0 & 0 \\
 \lbrack 4,4,3,1 \rbrack & 4320 & 16896 & 16896 & 22080 & 12576 & 9504 & 0 & 0 & 0 \\
 \lbrack 4,4,4 \rbrack & 1728 & 10592 & 18176 & 24512 & 32096 & 25152 & 7584 & 15648 & 9504 \\
 \lbrack 5,1,1,1,1,1,1,1 \rbrack & 0 & 0 & 0 & 0 & 0 & 0 & 0 & 0 & 0 \\
 \lbrack 5,2,1,1,1,1,1 \rbrack & 1200 & 0 & 0 & 0 & 0 & 0 & 0 & 0 & 0 \\
 \lbrack 5,2,2,1,1,1 \rbrack & 4800 & 4800 & 0 & 0 & 0 & 0 & 0 & 0 & 0 \\
 \lbrack 5,2,2,2,1 \rbrack & 4800 & 12000 & 4800 & 7200 & 0 & 0 & 0 & 0 & 0 \\
 \lbrack 5,3,1,1,1,1 \rbrack & 3960 & 3960 & 0 & 0 & 0 & 0 & 0 & 0 & 0 \\
 \lbrack 5,3,2,1,1 \rbrack & 5400 & 12900 & 5400 & 7500 & 0 & 0 & 0 & 0 & 0 \\
 \lbrack 5,3,2,2 \rbrack & 3600 & 14400 & 14400 & 19200 & 10800 & 8400 & 0 & 0 & 0 \\
 \lbrack 5,3,3,1 \rbrack & 4320 & 16920 & 16920 & 22140 & 12600 & 9540 & 0 & 0 & 0 \\
 \lbrack 5,4,1,1,1 \rbrack & 5040 & 11880 & 5040 & 6840 & 0 & 0 & 0 & 0 & 0 \\
 \lbrack 5,4,2,1 \rbrack & 3600 & 14800 & 14800 & 20200 & 11200 & 9000 & 0 & 0 & 0 \\
 \lbrack 5,4,3 \rbrack & 2160 & 12540 & 21240 & 27720 & 36420 & 27420 & 8700 & 17340 & 10080 \\
 \lbrack 5,5,1,1 \rbrack & 3600 & 14800 & 14800 & 20200 & 11200 & 9000 & 0 & 0 & 0 \\
 \lbrack 5,5,2 \rbrack & 1800 & 10750 & 18500 & 24500 & 32250 & 25000 & 7750 & 15550 & 9450 \\
 \lbrack 6,1,1,1,1,1,1 \rbrack & 720 & 0 & 0 & 0 & 0 & 0 & 0 & 0 & 0 \\
 \lbrack 6,2,1,1,1,1 \rbrack & 3600 & 3600 & 0 & 0 & 0 & 0 & 0 & 0 & 0 \\
 \lbrack 6,2,2,1,1 \rbrack & 4800 & 12000 & 4800 & 7200 & 0 & 0 & 0 & 0 & 0 \\
 \lbrack 6,2,2,2 \rbrack & 2400 & 10560 & 10560 & 15360 & 8160 & 7200 & 0 & 0 & 0 \\
 \lbrack 6,3,1,1,1 \rbrack & 4680 & 11340 & 4680 & 6660 & 0 & 0 & 0 & 0 & 0 \\
 \lbrack 6,3,2,1 \rbrack & 3600 & 14760 & 14760 & 20160 & 11160 & 9000 & 0 & 0 & 0 \\
 \lbrack 6,3,3 \rbrack & 2160 & 12582 & 21240 & 27900 & 36558 & 27648 & 8658 & 17478 & 10170 \\
 \lbrack 6,4,1,1 \rbrack & 3600 & 14784 & 14784 & 20184 & 11184 & 9000 & 0 & 0 & 0 \\
 \lbrack 6,4,2 \rbrack & 1440 & 9120 & 15840 & 21840 & 28560 & 23160 & 6720 & 14160 & 9000 \\
 \lbrack 6,5,1 \rbrack & 1800 & 11010 & 19020 & 25620 & 33630 & 26580 & 8010 & 16410 & 10170 \\
 \lbrack 6,6 \rbrack & 600 & 5184 & 13476 & 18276 & 41196 & 32616 & 22920 & 47244 & 29094 \\
 \lbrack 7,1,1,1,1,1 \rbrack & 2520 & 2520 & 0 & 0 & 0 & 0 & 0 & 0 & 0 \\
 \lbrack 7,2,1,1,1 \rbrack & 4200 & 10500 & 4200 & 6300 & 0 & 0 & 0 & 0 & 0 \\
 \lbrack 7,2,2,1 \rbrack & 2800 & 12320 & 12320 & 17920 & 9520 & 8400 & 0 & 0 & 0 \\
 \lbrack 7,3,1,1 \rbrack & 3360 & 14028 & 14028 & 19488 & 10668 & 8820 & 0 & 0 & 0 \\
 \lbrack 7,3,2 \rbrack & 1680 & 10290 & 17640 & 23940 & 31290 & 24780 & 7350 & 15330 & 9450 \\
 \lbrack 7,4,1 \rbrack & 1680 & 10500 & 18144 & 24864 & 32508 & 26124 & 7644 & 16044 & 10080 \\
 \lbrack 7,5 \rbrack & 700 & 5810 & 14910 & 19810 & 44520 & 34440 & 24710 & 50050 & 30135 \\
 \lbrack 8,1,1,1,1 \rbrack & 3360 & 8400 & 3360 & 5040 & 0 & 0 & 0 & 0 & 0 \\
 \lbrack 8,2,1,1 \rbrack & 2800 & 12320 & 12320 & 17920 & 9520 & 8400 & 0 & 0 & 0 \\
 \lbrack 8,2,2 \rbrack & 1120 & 7440 & 13120 & 18720 & 24400 & 20800 & 5680 & 12400 & 8400 \\
 \lbrack 8,3,1 \rbrack & 1680 & 10464 & 18048 & 24768 & 32352 & 26064 & 7584 & 15984 & 10080 \\
 \lbrack 8,4 \rbrack & 560 & 4912 & 12816 & 17576 & 39600 & 31776 & 22024 & 45896 & 28632 \\
 \lbrack 9,1,1,1 \rbrack & 2520 & 11088 & 11088 & 16128 & 8568 & 7560 & 0 & 0 & 0 \\
 \lbrack 9,2,1 \rbrack & 1260 & 8370 & 14760 & 21060 & 27450 & 23400 & 6390 & 13950 & 9450 \\
 \lbrack 9,3 \rbrack & 630 & 5400 & 13896 & 18936 & 42354 & 33642 & 23418 & 48564 & 29871 \\
 \lbrack 10,1,1 \rbrack & 1260 & 8370 & 14760 & 21060 & 27450 & 23400 & 6390 & 13950 & 9450 \\
 \lbrack 10,2 \rbrack & 420 & 3860 & 10300 & 14500 & 33000 & 27500 & 18500 & 39560 & 25890 \\
 \lbrack 11,1 \rbrack & 462 & 4246 & 11330 & 15950 & 36300 & 30250 & 20350 & 43516 & 28479 \\
 \lbrack 12 \rbrack & 132 & 1586 & 5868 & 8178 & 27995 & 22950 & 27080 & 57112 & 36500 \\
\hline
\end{longtblr}
}

\subsection{The \texorpdfstring{\WLBE{}}{WLBE} case}

See \cref{tblr:betapolyWL,tblr:SchurEpsilonWLphi,tblr:SchurEpsilonWL}.

{\tiny
\begin{adjustwidth}{-1.4in}{-1.4in}
\begin{longtblr}[
theme = fancy,
caption = {Polynomials $C_{g,\nu}(\beta,\phi)$ for the \WLBE{} at $\phi=1$},
label = {tblr:betapolyWL},
]{
colspec = {X[-1,c]X[-1,c]X[-1,c]X[-1,c]X[-1,c]X[-1,c]},
cells = {mode=dmath,font=\tiny},
rowsep = {1pt},
colsep = {2pt},
rowhead = 1,
hspan=minimal,
}
\hline
 \nu & g=0 & g=1/2 & g=1 & g=3/2 & g=2 \\
\hline
 \lbrack 1 \rbrack & 1 & 1-\beta  & 0 & 0 & 0 \\
 \lbrack 1,1 \rbrack & 1 & 1-\beta  & 0 & 0 & 0 \\
 \lbrack 2 \rbrack & 2 & 4-4 \beta  & 2 \beta ^2-4 \beta +2 & 0 & 0 \\
 \lbrack 1,1,1 \rbrack & 2 & 2-2 \beta  & 0 & 0 & 0 \\
 \lbrack 2,1 \rbrack & 4 & 8-8 \beta  & 4 \beta ^2-8 \beta +4 & 0 & 0 \\
 \lbrack 3 \rbrack & 5 & 16-16 \beta  & 17 \beta ^2-33 \beta +17 & -6 \beta ^3+17 \beta ^2-17 \beta +6 & 0 \\
 \lbrack 1,1,1,1 \rbrack & 6 & 6-6 \beta  & 0 & 0 & 0 \\
 \lbrack 2,1,1 \rbrack & 12 & 24-24 \beta  & 12 \beta ^2-24 \beta +12 & 0 & 0 \\
 \lbrack 2,2 \rbrack & 18 & 56-56 \beta  & 58 \beta ^2-114 \beta +58 & -20 \beta ^3+58 \beta ^2-58 \beta +20 & 0 \\
 \lbrack 3,1 \rbrack & 15 & 48-48 \beta  & 51 \beta ^2-99 \beta +51 & -18 \beta ^3+51 \beta ^2-51 \beta +18 & 0 \\
 \lbrack 4 \rbrack & 14 & 64-64 \beta  & 110 \beta ^2-210 \beta +110 & -84 \beta ^3+230 \beta ^2-230 \beta +84 & 24 \beta ^4-84 \beta ^3+120 \beta ^2-84 \beta +24 \\
 \lbrack 1,1,1,1,1 \rbrack & 24 & 24-24 \beta  & 0 & 0 & 0 \\
 \lbrack 2,1,1,1 \rbrack & 48 & 96-96 \beta  & 48 \beta ^2-96 \beta +48 & 0 & 0 \\
 \lbrack 2,2,1 \rbrack & 72 & 224-224 \beta  & 232 \beta ^2-456 \beta +232 & -80 \beta ^3+232 \beta ^2-232 \beta +80 & 0 \\
 \lbrack 3,1,1 \rbrack & 60 & 192-192 \beta  & 204 \beta ^2-396 \beta +204 & -72 \beta ^3+204 \beta ^2-204 \beta +72 & 0 \\
 \lbrack 3,2 \rbrack & 72 & 318-318 \beta  & 528 \beta ^2-1020 \beta +528 & -390 \beta ^3+1092 \beta ^2-1092 \beta +390 & 108 \beta ^4-390 \beta ^3+564 \beta ^2-390 \beta +108 \\
 \lbrack 4,1 \rbrack & 56 & 256-256 \beta  & 440 \beta ^2-840 \beta +440 & -336 \beta ^3+920 \beta ^2-920 \beta +336 & 96 \beta ^4-336 \beta ^3+480 \beta ^2-336 \beta +96 \\
 \lbrack 5 \rbrack & 42 & 256-256 \beta  & 630 \beta ^2-1190 \beta +630 & -780 \beta ^3+2090 \beta ^2-2090 \beta +780 & 484 \beta ^4-1640 \beta ^3+2320 \beta ^2-1640 \beta +484 \\
 \lbrack 1,1,1,1,1,1 \rbrack & 120 & 120-120 \beta  & 0 & 0 & 0 \\
 \lbrack 2,1,1,1,1 \rbrack & 240 & 480-480 \beta  & 240 \beta ^2-480 \beta +240 & 0 & 0 \\
 \lbrack 2,2,1,1 \rbrack & 360 & 1120-1120 \beta  & 1160 \beta ^2-2280 \beta +1160 & -400 \beta ^3+1160 \beta ^2-1160 \beta +400 & 0 \\
 \lbrack 2,2,2 \rbrack & 432 & 1864-1864 \beta  & 3024 \beta ^2-5888 \beta +3024 & -2184 \beta ^3+6208 \beta ^2-6208 \beta +2184 & 592 \beta ^4-2184 \beta ^3+3184 \beta ^2-2184 \beta +592 \\
 \lbrack 3,1,1,1 \rbrack & 300 & 960-960 \beta  & 1020 \beta ^2-1980 \beta +1020 & -360 \beta ^3+1020 \beta ^2-1020 \beta +360 & 0 \\
 \lbrack 3,2,1 \rbrack & 360 & 1590-1590 \beta  & 2640 \beta ^2-5100 \beta +2640 & -1950 \beta ^3+5460 \beta ^2-5460 \beta +1950 & 540 \beta ^4-1950 \beta ^3+2820 \beta ^2-1950 \beta +540 \\
 \lbrack 3,3 \rbrack & 300 & 1746-1746 \beta  & 4098 \beta ^2-7848 \beta +4098 & -4842 \beta ^3+13308 \beta ^2-13308 \beta +4842 & 2874 \beta ^4-10080 \beta ^3+14448 \beta ^2-10080 \beta +2874 \\
 \lbrack 4,1,1 \rbrack & 280 & 1280-1280 \beta  & 2200 \beta ^2-4200 \beta +2200 & -1680 \beta ^3+4600 \beta ^2-4600 \beta +1680 & 480 \beta ^4-1680 \beta ^3+2400 \beta ^2-1680 \beta +480 \\
 \lbrack 4,2 \rbrack & 280 & 1648-1648 \beta  & 3912 \beta ^2-7464 \beta +3912 & -4672 \beta ^3+12760 \beta ^2-12760 \beta +4672 & 2800 \beta ^4-9744 \beta ^3+13920 \beta ^2-9744 \beta +2800 \\
 \lbrack 5,1 \rbrack & 210 & 1280-1280 \beta  & 3150 \beta ^2-5950 \beta +3150 & -3900 \beta ^3+10450 \beta ^2-10450 \beta +3900 & 2420 \beta ^4-8200 \beta ^3+11600 \beta ^2-8200 \beta +2420 \\
 \lbrack 6 \rbrack & 132 & 1024-1024 \beta  & 3360 \beta ^2-6300 \beta +3360 & -5952 \beta ^3+15712 \beta ^2-15712 \beta +5952 & 5980 \beta ^4-19812 \beta ^3+27832 \beta ^2-19812 \beta +5980 \\
 \lbrack 1,1,1,1,1,1,1 \rbrack & 720 & 720-720 \beta  & 0 & 0 & 0 \\
 \lbrack 2,1,1,1,1,1 \rbrack & 1440 & 2880-2880 \beta  & 1440 \beta ^2-2880 \beta +1440 & 0 & 0 \\
 \lbrack 2,2,1,1,1 \rbrack & 2160 & 6720-6720 \beta  & 6960 \beta ^2-13680 \beta +6960 & -2400 \beta ^3+6960 \beta ^2-6960 \beta +2400 & 0 \\
 \lbrack 2,2,2,1 \rbrack & 2592 & 11184-11184 \beta  & 18144 \beta ^2-35328 \beta +18144 & -13104 \beta ^3+37248 \beta ^2-37248 \beta +13104 & 3552 \beta ^4-13104 \beta ^3+19104 \beta ^2-13104 \beta +3552 \\
 \lbrack 3,1,1,1,1 \rbrack & 1800 & 5760-5760 \beta  & 6120 \beta ^2-11880 \beta +6120 & -2160 \beta ^3+6120 \beta ^2-6120 \beta +2160 & 0 \\
 \lbrack 3,2,1,1 \rbrack & 2160 & 9540-9540 \beta  & 15840 \beta ^2-30600 \beta +15840 & -11700 \beta ^3+32760 \beta ^2-32760 \beta +11700 & 3240 \beta ^4-11700 \beta ^3+16920 \beta ^2-11700 \beta +3240 \\
 \lbrack 3,2,2 \rbrack & 2160 & 12336-12336 \beta  & 28392 \beta ^2-54648 \beta +28392 & -32880 \beta ^3+91296 \beta ^2-91296 \beta +32880 & 19128 \beta ^4-68112 \beta ^3+98136 \beta ^2-68112 \beta +19128 \\
 \lbrack 3,3,1 \rbrack & 1800 & 10476-10476 \beta  & 24588 \beta ^2-47088 \beta +24588 & -29052 \beta ^3+79848 \beta ^2-79848 \beta +29052 & 17244 \beta ^4-60480 \beta ^3+86688 \beta ^2-60480 \beta +17244 \\
 \lbrack 4,1,1,1 \rbrack & 1680 & 7680-7680 \beta  & 13200 \beta ^2-25200 \beta +13200 & -10080 \beta ^3+27600 \beta ^2-27600 \beta +10080 & 2880 \beta ^4-10080 \beta ^3+14400 \beta ^2-10080 \beta +2880 \\
 \lbrack 4,2,1 \rbrack & 1680 & 9888-9888 \beta  & 23472 \beta ^2-44784 \beta +23472 & -28032 \beta ^3+76560 \beta ^2-76560 \beta +28032 & 16800 \beta ^4-58464 \beta ^3+83520 \beta ^2-58464 \beta +16800 \\
 \lbrack 4,3 \rbrack & 1200 & 8856-8856 \beta  & 27600 \beta ^2-52440 \beta +27600 & -46440 \beta ^3+125616 \beta ^2-125616 \beta +46440 & 44400 \beta ^4-152016 \beta ^3+216168 \beta ^2-152016 \beta +44400 \\
 \lbrack 5,1,1 \rbrack & 1260 & 7680-7680 \beta  & 18900 \beta ^2-35700 \beta +18900 & -23400 \beta ^3+62700 \beta ^2-62700 \beta +23400 & 14520 \beta ^4-49200 \beta ^3+69600 \beta ^2-49200 \beta +14520 \\
 \lbrack 5,2 \rbrack & 1080 & 8100-8100 \beta  & 25660 \beta ^2-48520 \beta +25660 & -43860 \beta ^3+117660 \beta ^2-117660 \beta +43860 & 42540 \beta ^4-144160 \beta ^3+204160 \beta ^2-144160 \beta +42540 \\
 \lbrack 6,1 \rbrack & 792 & 6144-6144 \beta  & 20160 \beta ^2-37800 \beta +20160 & -35712 \beta ^3+94272 \beta ^2-94272 \beta +35712 & 35880 \beta ^4-118872 \beta ^3+166992 \beta ^2-118872 \beta +35880 \\
 \lbrack 7 \rbrack & 429 & 4096-4096 \beta  & 17094 \beta ^2-31878 \beta +17094 & -40320 \beta ^3+105280 \beta ^2-105280 \beta +40320 & 57841 \beta ^4-188496 \beta ^3+263431 \beta ^2-188496 \beta +57841 \\
 \lbrack 1,1,1,1,1,1,1,1 \rbrack & 5040 & 5040-5040 \beta  & 0 & 0 & 0 \\
 \lbrack 2,1,1,1,1,1,1 \rbrack & 10080 & 20160-20160 \beta  & 10080 \beta ^2-20160 \beta +10080 & 0 & 0 \\
 \lbrack 2,2,1,1,1,1 \rbrack & 15120 & 47040-47040 \beta  & 48720 \beta ^2-95760 \beta +48720 & -16800 \beta ^3+48720 \beta ^2-48720 \beta +16800 & 0 \\
 \lbrack 2,2,2,1,1 \rbrack & 18144 & 78288-78288 \beta  & 127008 \beta ^2-247296 \beta +127008 & -91728 \beta ^3+260736 \beta ^2-260736 \beta +91728 & 24864 \beta ^4-91728 \beta ^3+133728 \beta ^2-91728 \beta +24864 \\
 \lbrack 2,2,2,2 \rbrack & 18144 & 101568-101568 \beta  & 229008 \beta ^2-443280 \beta +229008 & -259776 \beta ^3+729216 \beta ^2-729216 \beta +259776 & 148080 \beta ^4-535584 \beta ^3+776016 \beta ^2-535584 \beta +148080 \\
 \lbrack 3,1,1,1,1,1 \rbrack & 12600 & 40320-40320 \beta  & 42840 \beta ^2-83160 \beta +42840 & -15120 \beta ^3+42840 \beta ^2-42840 \beta +15120 & 0 \\
 \lbrack 3,2,1,1,1 \rbrack & 15120 & 66780-66780 \beta  & 110880 \beta ^2-214200 \beta +110880 & -81900 \beta ^3+229320 \beta ^2-229320 \beta +81900 & 22680 \beta ^4-81900 \beta ^3+118440 \beta ^2-81900 \beta +22680 \\
 \lbrack 3,2,2,1 \rbrack & 15120 & 86352-86352 \beta  & 198744 \beta ^2-382536 \beta +198744 & -230160 \beta ^3+639072 \beta ^2-639072 \beta +230160 & 133896 \beta ^4-476784 \beta ^3+686952 \beta ^2-476784 \beta +133896 \\
 \lbrack 3,3,1,1 \rbrack & 12600 & 73332-73332 \beta  & 172116 \beta ^2-329616 \beta +172116 & -203364 \beta ^3+558936 \beta ^2-558936 \beta +203364 & 120708 \beta ^4-423360 \beta ^3+606816 \beta ^2-423360 \beta +120708 \\
 \lbrack 3,3,2 \rbrack & 10800 & 77688-77688 \beta  & 235728 \beta ^2-450720 \beta +235728 & -385920 \beta ^3+1056744 \beta ^2-1056744 \beta +385920 & 358992 \beta ^4-1250856 \beta ^3+1789776 \beta ^2-1250856 \beta +358992 \\
 \lbrack 4,1,1,1,1 \rbrack & 11760 & 53760-53760 \beta  & 92400 \beta ^2-176400 \beta +92400 & -70560 \beta ^3+193200 \beta ^2-193200 \beta +70560 & 20160 \beta ^4-70560 \beta ^3+100800 \beta ^2-70560 \beta +20160 \\
 \lbrack 4,2,1,1 \rbrack & 11760 & 69216-69216 \beta  & 164304 \beta ^2-313488 \beta +164304 & -196224 \beta ^3+535920 \beta ^2-535920 \beta +196224 & 117600 \beta ^4-409248 \beta ^3+584640 \beta ^2-409248 \beta +117600 \\
 \lbrack 4,2,2 \rbrack & 10080 & 73248-73248 \beta  & 224560 \beta ^2-428080 \beta +224560 & -371328 \beta ^3+1011504 \beta ^2-1011504 \beta +371328 & 348624 \beta ^4-1206784 \beta ^3+1722256 \beta ^2-1206784 \beta +348624 \\
 \lbrack 4,3,1 \rbrack & 8400 & 61992-61992 \beta  & 193200 \beta ^2-367080 \beta +193200 & -325080 \beta ^3+879312 \beta ^2-879312 \beta +325080 & 310800 \beta ^4-1064112 \beta ^3+1513176 \beta ^2-1064112 \beta +310800 \\
 \lbrack 4,4 \rbrack & 4900 & 44224-44224 \beta  & 174152 \beta ^2-329304 \beta +174152 & -387616 \beta ^3+1038000 \beta ^2-1038000 \beta +387616 & 525700 \beta ^4-1771920 \beta ^3+2507180 \beta ^2-1771920 \beta +525700 \\
 \lbrack 5,1,1,1 \rbrack & 8820 & 53760-53760 \beta  & 132300 \beta ^2-249900 \beta +132300 & -163800 \beta ^3+438900 \beta ^2-438900 \beta +163800 & 101640 \beta ^4-344400 \beta ^3+487200 \beta ^2-344400 \beta +101640 \\
 \lbrack 5,2,1 \rbrack & 7560 & 56700-56700 \beta  & 179620 \beta ^2-339640 \beta +179620 & -307020 \beta ^3+823620 \beta ^2-823620 \beta +307020 & 297780 \beta ^4-1009120 \beta ^3+1429120 \beta ^2-1009120 \beta +297780 \\
 \lbrack 5,3 \rbrack & 4725 & 42900-42900 \beta  & 169950 \beta ^2-320790 \beta +169950 & -380400 \beta ^3+1015680 \beta ^2-1015680 \beta +380400 & 518505 \beta ^4-1741560 \beta ^3+2460735 \beta ^2-1741560 \beta +518505 \\
 \lbrack 6,1,1 \rbrack & 5544 & 43008-43008 \beta  & 141120 \beta ^2-264600 \beta +141120 & -249984 \beta ^3+659904 \beta ^2-659904 \beta +249984 & 251160 \beta ^4-832104 \beta ^3+1168944 \beta ^2-832104 \beta +251160 \\
 \lbrack 6,2 \rbrack & 4158 & 38448-38448 \beta  & 155172 \beta ^2-291444 \beta +155172 & -353664 \beta ^3+936144 \beta ^2-936144 \beta +353664 & 490230 \beta ^4-1628688 \beta ^3+2291658 \beta ^2-1628688 \beta +490230 \\
 \lbrack 7,1 \rbrack & 3003 & 28672-28672 \beta  & 119658 \beta ^2-223146 \beta +119658 & -282240 \beta ^3+736960 \beta ^2-736960 \beta +282240 & 404887 \beta ^4-1319472 \beta ^3+1844017 \beta ^2-1319472 \beta +404887 \\
 \lbrack 8 \rbrack & 1430 & 16384-16384 \beta  & 84084 \beta ^2-156156 \beta +84084 & -251904 \beta ^3+652288 \beta ^2-652288 \beta +251904 & 480238 \beta ^4-1545456 \beta ^3+2151226 \beta ^2-1545456 \beta +480238 \\
 \lbrack 1,1,1,1,1,1,1,1,1 \rbrack & 40320 & 40320-40320 \beta  & 0 & 0 & 0 \\
 \lbrack 2,1,1,1,1,1,1,1 \rbrack & 80640 & 161280-161280 \beta  & 80640 \beta ^2-161280 \beta +80640 & 0 & 0 \\
 \lbrack 2,2,1,1,1,1,1 \rbrack & 120960 & 376320-376320 \beta  & 389760 \beta ^2-766080 \beta +389760 & -134400 \beta ^3+389760 \beta ^2-389760 \beta +134400 & 0 \\
 \lbrack 2,2,2,1,1,1 \rbrack & 145152 & 626304-626304 \beta  & 1016064 \beta ^2-1978368 \beta +1016064 & -733824 \beta ^3+2085888 \beta ^2-2085888 \beta +733824 & 198912 \beta ^4-733824 \beta ^3+1069824 \beta ^2-733824 \beta +198912 \\
 \lbrack 2,2,2,2,1 \rbrack & 145152 & 812544-812544 \beta  & 1832064 \beta ^2-3546240 \beta +1832064 & -2078208 \beta ^3+5833728 \beta ^2-5833728 \beta +2078208 & 1184640 \beta ^4-4284672 \beta ^3+6208128 \beta ^2-4284672 \beta +1184640 \\
 \lbrack 3,1,1,1,1,1,1 \rbrack & 100800 & 322560-322560 \beta  & 342720 \beta ^2-665280 \beta +342720 & -120960 \beta ^3+342720 \beta ^2-342720 \beta +120960 & 0 \\
 \lbrack 3,2,1,1,1,1 \rbrack & 120960 & 534240-534240 \beta  & 887040 \beta ^2-1713600 \beta +887040 & -655200 \beta ^3+1834560 \beta ^2-1834560 \beta +655200 & 181440 \beta ^4-655200 \beta ^3+947520 \beta ^2-655200 \beta +181440 \\
 \lbrack 3,2,2,1,1 \rbrack & 120960 & 690816-690816 \beta  & 1589952 \beta ^2-3060288 \beta +1589952 & -1841280 \beta ^3+5112576 \beta ^2-5112576 \beta +1841280 & 1071168 \beta ^4-3814272 \beta ^3+5495616 \beta ^2-3814272 \beta +1071168 \\
 \lbrack 3,2,2,2 \rbrack & 103680 & 733440-733440 \beta  & 2186784 \beta ^2-4197600 \beta +2186784 & -3515472 \beta ^3+9702192 \beta ^2-9702192 \beta +3515472 & 3209952 \beta ^4-11315808 \beta ^3+16256064 \beta ^2-11315808 \beta +3209952 \\
 \lbrack 3,3,1,1,1 \rbrack & 100800 & 586656-586656 \beta  & 1376928 \beta ^2-2636928 \beta +1376928 & -1626912 \beta ^3+4471488 \beta ^2-4471488 \beta +1626912 & 965664 \beta ^4-3386880 \beta ^3+4854528 \beta ^2-3386880 \beta +965664 \\
 \lbrack 3,3,2,1 \rbrack & 86400 & 621504-621504 \beta  & 1885824 \beta ^2-3605760 \beta +1885824 & -3087360 \beta ^3+8453952 \beta ^2-8453952 \beta +3087360 & 2871936 \beta ^4-10006848 \beta ^3+14318208 \beta ^2-10006848 \beta +2871936 \\
 \lbrack 3,3,3 \rbrack & 54000 & 472608-472608 \beta  & 1802016 \beta ^2-3430944 \beta +1802016 & -3879576 \beta ^3+10528488 \beta ^2-10528488 \beta +3879576 & 5088024 \beta ^4-17479152 \beta ^3+24897168 \beta ^2-17479152 \beta +5088024 \\
 \lbrack 4,1,1,1,1,1 \rbrack & 94080 & 430080-430080 \beta  & 739200 \beta ^2-1411200 \beta +739200 & -564480 \beta ^3+1545600 \beta ^2-1545600 \beta +564480 & 161280 \beta ^4-564480 \beta ^3+806400 \beta ^2-564480 \beta +161280 \\
 \lbrack 4,2,1,1,1 \rbrack & 94080 & 553728-553728 \beta  & 1314432 \beta ^2-2507904 \beta +1314432 & -1569792 \beta ^3+4287360 \beta ^2-4287360 \beta +1569792 & 940800 \beta ^4-3273984 \beta ^3+4677120 \beta ^2-3273984 \beta +940800 \\
 \lbrack 4,2,2,1 \rbrack & 80640 & 585984-585984 \beta  & 1796480 \beta ^2-3424640 \beta +1796480 & -2970624 \beta ^3+8092032 \beta ^2-8092032 \beta +2970624 & 2788992 \beta ^4-9654272 \beta ^3+13778048 \beta ^2-9654272 \beta +2788992 \\
 \lbrack 4,3,1,1 \rbrack & 67200 & 495936-495936 \beta  & 1545600 \beta ^2-2936640 \beta +1545600 & -2600640 \beta ^3+7034496 \beta ^2-7034496 \beta +2600640 & 2486400 \beta ^4-8512896 \beta ^3+12105408 \beta ^2-8512896 \beta +2486400 \\
 \lbrack 4,3,2 \rbrack & 50400 & 445056-445056 \beta  & 1712640 \beta ^2-3253200 \beta +1712640 & -3721008 \beta ^3+10055904 \beta ^2-10055904 \beta +3721008 & 4922880 \beta ^4-16819008 \beta ^3+23908080 \beta ^2-16819008 \beta +4922880 \\
 \lbrack 4,4,1 \rbrack & 39200 & 353792-353792 \beta  & 1393216 \beta ^2-2634432 \beta +1393216 & -3100928 \beta ^3+8304000 \beta ^2-8304000 \beta +3100928 & 4205600 \beta ^4-14175360 \beta ^3+20057440 \beta ^2-14175360 \beta +4205600 \\
 \lbrack 5,1,1,1,1 \rbrack & 70560 & 430080-430080 \beta  & 1058400 \beta ^2-1999200 \beta +1058400 & -1310400 \beta ^3+3511200 \beta ^2-3511200 \beta +1310400 & 813120 \beta ^4-2755200 \beta ^3+3897600 \beta ^2-2755200 \beta +813120 \\
 \lbrack 5,2,1,1 \rbrack & 60480 & 453600-453600 \beta  & 1436960 \beta ^2-2717120 \beta +1436960 & -2456160 \beta ^3+6588960 \beta ^2-6588960 \beta +2456160 & 2382240 \beta ^4-8072960 \beta ^3+11432960 \beta ^2-8072960 \beta +2382240 \\
 \lbrack 5,2,2 \rbrack & 45360 & 406560-406560 \beta  & 1588400 \beta ^2-3004880 \beta +1588400 & -3502400 \beta ^3+9397840 \beta ^2-9397840 \beta +3502400 & 4697760 \beta ^4-15906080 \beta ^3+22532480 \beta ^2-15906080 \beta +4697760 \\
 \lbrack 5,3,1 \rbrack & 37800 & 343200-343200 \beta  & 1359600 \beta ^2-2566320 \beta +1359600 & -3043200 \beta ^3+8125440 \beta ^2-8125440 \beta +3043200 & 4148040 \beta ^4-13932480 \beta ^3+19685880 \beta ^2-13932480 \beta +4148040 \\
 \lbrack 5,4 \rbrack & 19600 & 211780-211780 \beta  & 1023120 \beta ^2-1925840 \beta +1023120 & -2885080 \beta ^3+7656000 \beta ^2-7656000 \beta +2885080 & 5186080 \beta ^4-17243440 \beta ^3+24289120 \beta ^2-17243440 \beta +5186080 \\
 \lbrack 6,1,1,1 \rbrack & 44352 & 344064-344064 \beta  & 1128960 \beta ^2-2116800 \beta +1128960 & -1999872 \beta ^3+5279232 \beta ^2-5279232 \beta +1999872 & 2009280 \beta ^4-6656832 \beta ^3+9351552 \beta ^2-6656832 \beta +2009280 \\
 \lbrack 6,2,1 \rbrack & 33264 & 307584-307584 \beta  & 1241376 \beta ^2-2331552 \beta +1241376 & -2829312 \beta ^3+7489152 \beta ^2-7489152 \beta +2829312 & 3921840 \beta ^4-13029504 \beta ^3+18333264 \beta ^2-13029504 \beta +3921840 \\
 \lbrack 6,3 \rbrack & 18480 & 201546-201546 \beta  & 982800 \beta ^2-1845144 \beta +982800 & -2796012 \beta ^3+7387320 \beta ^2-7387320 \beta +2796012 & 5066064 \beta ^4-16756128 \beta ^3+23554008 \beta ^2-16756128 \beta +5066064 \\
 \lbrack 7,1,1 \rbrack & 24024 & 229376-229376 \beta  & 957264 \beta ^2-1785168 \beta +957264 & -2257920 \beta ^3+5895680 \beta ^2-5895680 \beta +2257920 & 3239096 \beta ^4-10555776 \beta ^3+14752136 \beta ^2-10555776 \beta +3239096 \\
 \lbrack 7,2 \rbrack & 16016 & 178038-178038 \beta  & 885248 \beta ^2-1654072 \beta +885248 & -2567124 \beta ^3+6725544 \beta ^2-6725544 \beta +2567124 & 4735920 \beta ^4-15495536 \beta ^3+21694456 \beta ^2-15495536 \beta +4735920 \\
 \lbrack 8,1 \rbrack & 11440 & 131072-131072 \beta  & 672672 \beta ^2-1249248 \beta +672672 & -2015232 \beta ^3+5218304 \beta ^2-5218304 \beta +2015232 & 3841904 \beta ^4-12363648 \beta ^3+17209808 \beta ^2-12363648 \beta +3841904 \\
 \lbrack 9 \rbrack & 4862 & 65536-65536 \beta  & 403260 \beta ^2-746460 \beta +403260 & -1483776 \beta ^3+3816960 \beta ^2-3816960 \beta +1483776 & 3586726 \beta ^4-11428560 \beta ^3+15857842 \beta ^2-11428560 \beta +3586726 \\
 \lbrack 1,1,1,1,1,1,1,1,1,1 \rbrack & 362880 & 362880-362880 \beta  & 0 & 0 & 0 \\
 \lbrack 2,1,1,1,1,1,1,1,1 \rbrack & 725760 & 1451520-1451520 \beta  & 725760 \beta ^2-1451520 \beta +725760 & 0 & 0 \\
 \lbrack 2,2,1,1,1,1,1,1 \rbrack & 1088640 & 3386880-3386880 \beta  & 3507840 \beta ^2-6894720 \beta +3507840 & -1209600 \beta ^3+3507840 \beta ^2-3507840 \beta +1209600 & 0 \\
 \lbrack 2,2,2,1,1,1,1 \rbrack & 1306368 & 5636736-5636736 \beta  & 9144576 \beta ^2-17805312 \beta +9144576 & -6604416 \beta ^3+18772992 \beta ^2-18772992 \beta +6604416 & 1790208 \beta ^4-6604416 \beta ^3+9628416 \beta ^2-6604416 \beta +1790208 \\
 \lbrack 2,2,2,2,1,1 \rbrack & 1306368 & 7312896-7312896 \beta  & 16488576 \beta ^2-31916160 \beta +16488576 & -18703872 \beta ^3+52503552 \beta ^2-52503552 \beta +18703872 & 10661760 \beta ^4-38562048 \beta ^3+55873152 \beta ^2-38562048 \beta +10661760 \\
 \lbrack 2,2,2,2,2 \rbrack & 1119744 & 7782912-7782912 \beta  & 22783488 \beta ^2-43923456 \beta +22783488 & -35944320 \beta ^3+100041600 \beta ^2-100041600 \beta +35944320 & 32205312 \beta ^4-114919680 \beta ^3+165799680 \beta ^2-114919680 \beta +32205312 \\
 \lbrack 3,1,1,1,1,1,1,1 \rbrack & 907200 & 2903040-2903040 \beta  & 3084480 \beta ^2-5987520 \beta +3084480 & -1088640 \beta ^3+3084480 \beta ^2-3084480 \beta +1088640 & 0 \\
 \lbrack 3,2,1,1,1,1,1 \rbrack & 1088640 & 4808160-4808160 \beta  & 7983360 \beta ^2-15422400 \beta +7983360 & -5896800 \beta ^3+16511040 \beta ^2-16511040 \beta +5896800 & 1632960 \beta ^4-5896800 \beta ^3+8527680 \beta ^2-5896800 \beta +1632960 \\
 \lbrack 3,2,2,1,1,1 \rbrack & 1088640 & 6217344-6217344 \beta  & 14309568 \beta ^2-27542592 \beta +14309568 & -16571520 \beta ^3+46013184 \beta ^2-46013184 \beta +16571520 & 9640512 \beta ^4-34328448 \beta ^3+49460544 \beta ^2-34328448 \beta +9640512 \\
 \lbrack 3,2,2,2,1 \rbrack & 933120 & 6600960-6600960 \beta  & 19681056 \beta ^2-37778400 \beta +19681056 & -31639248 \beta ^3+87319728 \beta ^2-87319728 \beta +31639248 & 28889568 \beta ^4-101842272 \beta ^3+146304576 \beta ^2-101842272 \beta +28889568 \\
 \lbrack 3,3,1,1,1,1 \rbrack & 907200 & 5279904-5279904 \beta  & 12392352 \beta ^2-23732352 \beta +12392352 & -14642208 \beta ^3+40243392 \beta ^2-40243392 \beta +14642208 & 8690976 \beta ^4-30481920 \beta ^3+43690752 \beta ^2-30481920 \beta +8690976 \\
 \lbrack 3,3,2,1,1 \rbrack & 777600 & 5593536-5593536 \beta  & 16972416 \beta ^2-32451840 \beta +16972416 & -27786240 \beta ^3+76085568 \beta ^2-76085568 \beta +27786240 & 25847424 \beta ^4-90061632 \beta ^3+128863872 \beta ^2-90061632 \beta +25847424 \\
 \lbrack 3,3,2,2 \rbrack & 583200 & 5031360-5031360 \beta  & 18894528 \beta ^2-36078624 \beta +18894528 & -40031568 \beta ^3+109283472 \beta ^2-109283472 \beta +40031568 & 51631920 \beta ^4-178968672 \beta ^3+255695040 \beta ^2-178968672 \beta +51631920 \\
 \lbrack 3,3,3,1 \rbrack & 486000 & 4253472-4253472 \beta  & 16218144 \beta ^2-30878496 \beta +16218144 & -34916184 \beta ^3+94756392 \beta ^2-94756392 \beta +34916184 & 45792216 \beta ^4-157312368 \beta ^3+224074512 \beta ^2-157312368 \beta +45792216 \\
 \lbrack 4,1,1,1,1,1,1 \rbrack & 846720 & 3870720-3870720 \beta  & 6652800 \beta ^2-12700800 \beta +6652800 & -5080320 \beta ^3+13910400 \beta ^2-13910400 \beta +5080320 & 1451520 \beta ^4-5080320 \beta ^3+7257600 \beta ^2-5080320 \beta +1451520 \\
 \lbrack 4,2,1,1,1,1 \rbrack & 846720 & 4983552-4983552 \beta  & 11829888 \beta ^2-22571136 \beta +11829888 & -14128128 \beta ^3+38586240 \beta ^2-38586240 \beta +14128128 & 8467200 \beta ^4-29465856 \beta ^3+42094080 \beta ^2-29465856 \beta +8467200 \\
 \lbrack 4,2,2,1,1 \rbrack & 725760 & 5273856-5273856 \beta  & 16168320 \beta ^2-30821760 \beta +16168320 & -26735616 \beta ^3+72828288 \beta ^2-72828288 \beta +26735616 & 25100928 \beta ^4-86888448 \beta ^3+124002432 \beta ^2-86888448 \beta +25100928 \\
 \lbrack 4,2,2,2 \rbrack & 544320 & 4739712-4739712 \beta  & 17969088 \beta ^2-34225536 \beta +17969088 & -38427840 \beta ^3+104441280 \beta ^2-104441280 \beta +38427840 & 50001792 \beta ^4-172334208 \beta ^3+245686464 \beta ^2-172334208 \beta +50001792 \\
 \lbrack 4,3,1,1,1 \rbrack & 604800 & 4463424-4463424 \beta  & 13910400 \beta ^2-26429760 \beta +13910400 & -23405760 \beta ^3+63310464 \beta ^2-63310464 \beta +23405760 & 22377600 \beta ^4-76616064 \beta ^3+108948672 \beta ^2-76616064 \beta +22377600 \\
 \lbrack 4,3,2,1 \rbrack & 453600 & 4005504-4005504 \beta  & 15413760 \beta ^2-29278800 \beta +15413760 & -33489072 \beta ^3+90503136 \beta ^2-90503136 \beta +33489072 & 44305920 \beta ^4-151371072 \beta ^3+215172720 \beta ^2-151371072 \beta +44305920 \\
 \lbrack 4,3,3 \rbrack & 252000 & 2636676-2636676 \beta  & 12314880 \beta ^2-23335200 \beta +12314880 & -33536232 \beta ^3+90142848 \beta ^2-90142848 \beta +33536232 & 58192272 \beta ^4-197053776 \beta ^3+279341280 \beta ^2-197053776 \beta +58192272 \\
 \lbrack 4,4,1,1 \rbrack & 352800 & 3184128-3184128 \beta  & 12538944 \beta ^2-23709888 \beta +12538944 & -27908352 \beta ^3+74736000 \beta ^2-74736000 \beta +27908352 & 37850400 \beta ^4-127578240 \beta ^3+180516960 \beta ^2-127578240 \beta +37850400 \\
 \lbrack 4,4,2 \rbrack & 235200 & 2480224-2480224 \beta  & 11678976 \beta ^2-22090752 \beta +11678976 & -32068928 \beta ^3+85911296 \beta ^2-85911296 \beta +32068928 & 56101696 \beta ^4-189108096 \beta ^3+267643840 \beta ^2-189108096 \beta +56101696 \\
 \lbrack 5,1,1,1,1,1 \rbrack & 635040 & 3870720-3870720 \beta  & 9525600 \beta ^2-17992800 \beta +9525600 & -11793600 \beta ^3+31600800 \beta ^2-31600800 \beta +11793600 & 7318080 \beta ^4-24796800 \beta ^3+35078400 \beta ^2-24796800 \beta +7318080 \\
 \lbrack 5,2,1,1,1 \rbrack & 544320 & 4082400-4082400 \beta  & 12932640 \beta ^2-24454080 \beta +12932640 & -22105440 \beta ^3+59300640 \beta ^2-59300640 \beta +22105440 & 21440160 \beta ^4-72656640 \beta ^3+102896640 \beta ^2-72656640 \beta +21440160 \\
 \lbrack 5,2,2,1 \rbrack & 408240 & 3659040-3659040 \beta  & 14295600 \beta ^2-27043920 \beta +14295600 & -31521600 \beta ^3+84580560 \beta ^2-84580560 \beta +31521600 & 42279840 \beta ^4-143154720 \beta ^3+202792320 \beta ^2-143154720 \beta +42279840 \\
 \lbrack 5,3,1,1 \rbrack & 340200 & 3088800-3088800 \beta  & 12236400 \beta ^2-23096880 \beta +12236400 & -27388800 \beta ^3+73128960 \beta ^2-73128960 \beta +27388800 & 37332360 \beta ^4-125392320 \beta ^3+177172920 \beta ^2-125392320 \beta +37332360 \\
 \lbrack 5,3,2 \rbrack & 226800 & 2405070-2405070 \beta  & 11388960 \beta ^2-21508680 \beta +11388960 & -31440900 \beta ^3+84007320 \beta ^2-84007320 \beta +31440900 & 55271040 \beta ^4-185713200 \beta ^3+262510920 \beta ^2-185713200 \beta +55271040 \\
 \lbrack 5,4,1 \rbrack & 176400 & 1906020-1906020 \beta  & 9208080 \beta ^2-17332560 \beta +9208080 & -25965720 \beta ^3+68904000 \beta ^2-68904000 \beta +25965720 & 46674720 \beta ^4-155190960 \beta ^3+218602080 \beta ^2-155190960 \beta +46674720 \\
 \lbrack 5,5 \rbrack & 79380 & 1005050-1005050 \beta  & 5798500 \beta ^2-10881600 \beta +5798500 & -20004300 \beta ^3+52753600 \beta ^2-52753600 \beta +20004300 & 45424140 \beta ^4-149581200 \beta ^3+210045500 \beta ^2-149581200 \beta +45424140 \\
 \lbrack 6,1,1,1,1 \rbrack & 399168 & 3096576-3096576 \beta  & 10160640 \beta ^2-19051200 \beta +10160640 & -17998848 \beta ^3+47513088 \beta ^2-47513088 \beta +17998848 & 18083520 \beta ^4-59911488 \beta ^3+84163968 \beta ^2-59911488 \beta +18083520 \\
 \lbrack 6,2,1,1 \rbrack & 299376 & 2768256-2768256 \beta  & 11172384 \beta ^2-20983968 \beta +11172384 & -25463808 \beta ^3+67402368 \beta ^2-67402368 \beta +25463808 & 35296560 \beta ^4-117265536 \beta ^3+164999376 \beta ^2-117265536 \beta +35296560 \\
 \lbrack 6,2,2 \rbrack & 199584 & 2153016-2153016 \beta  & 10375008 \beta ^2-19508160 \beta +10375008 & -29136144 \beta ^3+77252064 \beta ^2-77252064 \beta +29136144 & 52049952 \beta ^4-173181408 \beta ^3+243897888 \beta ^2-173181408 \beta +52049952 \\
 \lbrack 6,3,1 \rbrack & 166320 & 1813914-1813914 \beta  & 8845200 \beta ^2-16606296 \beta +8845200 & -25164108 \beta ^3+66485880 \beta ^2-66485880 \beta +25164108 & 45594576 \beta ^4-150805152 \beta ^3+211986072 \beta ^2-150805152 \beta +45594576 \\
 \lbrack 6,4 \rbrack & 77616 & 986352-986352 \beta  & 5711544 \beta ^2-10707408 \beta +5711544 & -19772256 \beta ^3+52053024 \beta ^2-52053024 \beta +19772256 & 45034176 \beta ^4-147999024 \beta ^3+207658416 \beta ^2-147999024 \beta +45034176 \\
 \lbrack 7,1,1,1 \rbrack & 216216 & 2064384-2064384 \beta  & 8615376 \beta ^2-16066512 \beta +8615376 & -20321280 \beta ^3+53061120 \beta ^2-53061120 \beta +20321280 & 29151864 \beta ^4-95001984 \beta ^3+132769224 \beta ^2-95001984 \beta +29151864 \\
 \lbrack 7,2,1 \rbrack & 144144 & 1602342-1602342 \beta  & 7967232 \beta ^2-14886648 \beta +7967232 & -23104116 \beta ^3+60529896 \beta ^2-60529896 \beta +23104116 & 42623280 \beta ^4-139459824 \beta ^3+195250104 \beta ^2-139459824 \beta +42623280 \\
 \lbrack 7,3 \rbrack & 72072 & 926226-926226 \beta  & 5424090 \beta ^2-10138548 \beta +5424090 & -18980220 \beta ^3+49717920 \beta ^2-49717920 \beta +18980220 & 43654548 \beta ^4-142589076 \beta ^3+199597776 \beta ^2-142589076 \beta +43654548 \\
 \lbrack 8,1,1 \rbrack & 102960 & 1179648-1179648 \beta  & 6054048 \beta ^2-11243232 \beta +6054048 & -18137088 \beta ^3+46964736 \beta ^2-46964736 \beta +18137088 & 34577136 \beta ^4-111272832 \beta ^3+154888272 \beta ^2-111272832 \beta +34577136 \\
 \lbrack 8,2 \rbrack & 61776 & 809312-809312 \beta  & 4833568 \beta ^2-8994464 \beta +4833568 & -17246016 \beta ^3+44816128 \beta ^2-44816128 \beta +17246016 & 40408080 \beta ^4-130622848 \beta ^3+182155568 \beta ^2-130622848 \beta +40408080 \\
 \lbrack 9,1 \rbrack & 43758 & 589824-589824 \beta  & 3629340 \beta ^2-6718140 \beta +3629340 & -13353984 \beta ^3+34352640 \beta ^2-34352640 \beta +13353984 & 32280534 \beta ^4-102857040 \beta ^3+142720578 \beta ^2-102857040 \beta +32280534 \\
 \lbrack 10 \rbrack & 16796 & 262144-262144 \beta  & 1896180 \beta ^2-3500640 \beta +1896180 & -8355840 \beta ^3+21381120 \beta ^2-21381120 \beta +8355840 & 24771604 \beta ^4-78301080 \beta ^3+108368260 \beta ^2-78301080 \beta +24771604 \\
\hline
\end{longtblr}
\end{adjustwidth}
}
{\tiny
\begin{adjustwidth}{-1.4in}{-1.4in}
\begin{longtblr}[
theme = fancy,
caption = {Polynomials $n_{\nu,\lambda}(\phi)$ for the \WLBE{}},
label = {tblr:SchurEpsilonWLphi},
]{
colspec = {X[-1,c]X[-1,c]X[-1,c]X[-1,c]X[-1,c]X[-1,c]X[-1,c]X[-1,c]X[-1,c]X[-1,c]},
cells = {mode=dmath,font=\tiny},
rowsep = {1pt},
colsep = {2pt},
rowhead = 1,
hspan=minimal,
}
\hline
 \diagbox{\nu}{\lambda} & {[]} & {[1]} & {[1,1]} & {[2]} & {[2,1]} & {[3]} & {[2,2]} & {[3,1]} & {[4]} \\
\hline
 \lbrack 1 \rbrack & 1 & \phi  & 0 & 0 & 0 & 0 & 0 & 0 & 0 \\
 \lbrack 1,1 \rbrack & 1 & \phi  & 0 & 0 & 0 & 0 & 0 & 0 & 0 \\
 \lbrack 2 \rbrack & 2 & 3 \phi +1 & \phi ^2+\phi  & \phi ^2+\phi  & 0 & 0 & 0 & 0 & 0 \\
 \lbrack 1,1,1 \rbrack & 2 & 2 \phi  & 0 & 0 & 0 & 0 & 0 & 0 & 0 \\
 \lbrack 2,1 \rbrack & 4 & 6 \phi +2 & 2 \phi ^2+2 \phi  & 2 \phi ^2+2 \phi  & 0 & 0 & 0 & 0 & 0 \\
 \lbrack 3 \rbrack & 5 & 10 \phi +6 & 6 \phi ^2+9 \phi +1 & 6 \phi ^2+9 \phi +2 & 2 \phi ^3+6 \phi ^2+3 \phi  & \phi ^3+3 \phi ^2+2 \phi  & 0 & 0 & 0 \\
 \lbrack 1,1,1,1 \rbrack & 6 & 6 \phi  & 0 & 0 & 0 & 0 & 0 & 0 & 0 \\
 \lbrack 2,1,1 \rbrack & 12 & 18 \phi +6 & 6 \phi ^2+6 \phi  & 6 \phi ^2+6 \phi  & 0 & 0 & 0 & 0 & 0 \\
 \lbrack 2,2 \rbrack & 18 & 36 \phi +20 & 22 \phi ^2+30 \phi +4 & 22 \phi ^2+30 \phi +6 & 8 \phi ^3+20 \phi ^2+10 \phi  & 4 \phi ^3+10 \phi ^2+6 \phi  & 0 & 0 & 0 \\
 \lbrack 3,1 \rbrack & 15 & 30 \phi +18 & 18 \phi ^2+27 \phi +3 & 18 \phi ^2+27 \phi +6 & 6 \phi ^3+18 \phi ^2+9 \phi  & 3 \phi ^3+9 \phi ^2+6 \phi  & 0 & 0 & 0 \\
 \lbrack 4 \rbrack & 14 & 35 \phi +29 & 30 \phi ^2+58 \phi +12 & 30 \phi ^2+58 \phi +22 & 20 \phi ^3+70 \phi ^2+51 \phi +5 & 10 \phi ^3+35 \phi ^2+33 \phi +6 & 2 \phi ^4+12 \phi ^3+17 \phi ^2+5 \phi  & 3 \phi ^4+18 \phi ^3+28 \phi ^2+11 \phi  & \phi ^4+6 \phi ^3+11 \phi ^2+6 \phi  \\
 \lbrack 1,1,1,1,1 \rbrack & 24 & 24 \phi  & 0 & 0 & 0 & 0 & 0 & 0 & 0 \\
 \lbrack 2,1,1,1 \rbrack & 48 & 72 \phi +24 & 24 \phi ^2+24 \phi  & 24 \phi ^2+24 \phi  & 0 & 0 & 0 & 0 & 0 \\
 \lbrack 2,2,1 \rbrack & 72 & 144 \phi +80 & 88 \phi ^2+120 \phi +16 & 88 \phi ^2+120 \phi +24 & 32 \phi ^3+80 \phi ^2+40 \phi  & 16 \phi ^3+40 \phi ^2+24 \phi  & 0 & 0 & 0 \\
 \lbrack 3,1,1 \rbrack & 60 & 120 \phi +72 & 72 \phi ^2+108 \phi +12 & 72 \phi ^2+108 \phi +24 & 24 \phi ^3+72 \phi ^2+36 \phi  & 12 \phi ^3+36 \phi ^2+24 \phi  & 0 & 0 & 0 \\
 \lbrack 3,2 \rbrack & 72 & 180 \phi +138 & 156 \phi ^2+276 \phi +60 & 156 \phi ^2+276 \phi +96 & 108 \phi ^3+336 \phi ^2+234 \phi +24 & 54 \phi ^3+168 \phi ^2+144 \phi +24 & 12 \phi ^4+60 \phi ^3+78 \phi ^2+24 \phi  & 18 \phi ^4+90 \phi ^3+126 \phi ^2+48 \phi  & 6 \phi ^4+30 \phi ^3+48 \phi ^2+24 \phi  \\
 \lbrack 4,1 \rbrack & 56 & 140 \phi +116 & 120 \phi ^2+232 \phi +48 & 120 \phi ^2+232 \phi +88 & 80 \phi ^3+280 \phi ^2+204 \phi +20 & 40 \phi ^3+140 \phi ^2+132 \phi +24 & 8 \phi ^4+48 \phi ^3+68 \phi ^2+20 \phi  & 12 \phi ^4+72 \phi ^3+112 \phi ^2+44 \phi  & 4 \phi ^4+24 \phi ^3+44 \phi ^2+24 \phi  \\
 \lbrack 5 \rbrack & 42 & 126 \phi +130 & 140 \phi ^2+325 \phi +95 & 140 \phi ^2+325 \phi +165 & 140 \phi ^3+560 \phi ^2+520 \phi +90 & 70 \phi ^3+280 \phi ^2+330 \phi +100 & 30 \phi ^4+190 \phi ^3+315 \phi ^2+135 \phi +10 & 45 \phi ^4+285 \phi ^3+515 \phi ^2+285 \phi +26 & 15 \phi ^4+95 \phi ^3+200 \phi ^2+150 \phi +24 \\
\hline
\end{longtblr}
\end{adjustwidth}
}
{\tiny
\begin{longtblr}[
theme = fancy,
caption = {Polynomials $n_{\nu,\lambda}(\phi)$ for the \WLBE{} evaluated at $\phi=1$},
label = {tblr:SchurEpsilonWL},
]{
colspec = {X[-1,c]X[-1,c]X[-1,c]X[-1,c]X[-1,c]X[-1,c]X[-1,c]X[-1,c]X[-1,c]X[-1,c]},
cells = {mode=dmath,font=\tiny},
rowsep = {1pt},
colsep = {2pt},
rowhead = 1,
hspan=minimal,
}
\hline
 \diagbox{\nu}{\lambda} & {[]} & {[1]} & {[1,1]} & {[2]} & {[2,1]} & {[3]} & {[2,2]} & {[3,1]} & {[4]} \\
\hline
 \lbrack 1 \rbrack & 1 & 1 & 0 & 0 & 0 & 0 & 0 & 0 & 0 \\
 \lbrack 1,1 \rbrack & 1 & 1 & 0 & 0 & 0 & 0 & 0 & 0 & 0 \\
 \lbrack 2 \rbrack & 2 & 4 & 2 & 2 & 0 & 0 & 0 & 0 & 0 \\
 \lbrack 1,1,1 \rbrack & 2 & 2 & 0 & 0 & 0 & 0 & 0 & 0 & 0 \\
 \lbrack 2,1 \rbrack & 4 & 8 & 4 & 4 & 0 & 0 & 0 & 0 & 0 \\
 \lbrack 3 \rbrack & 5 & 16 & 16 & 17 & 11 & 6 & 0 & 0 & 0 \\
 \lbrack 1,1,1,1 \rbrack & 6 & 6 & 0 & 0 & 0 & 0 & 0 & 0 & 0 \\
 \lbrack 2,1,1 \rbrack & 12 & 24 & 12 & 12 & 0 & 0 & 0 & 0 & 0 \\
 \lbrack 2,2 \rbrack & 18 & 56 & 56 & 58 & 38 & 20 & 0 & 0 & 0 \\
 \lbrack 3,1 \rbrack & 15 & 48 & 48 & 51 & 33 & 18 & 0 & 0 & 0 \\
 \lbrack 4 \rbrack & 14 & 64 & 100 & 110 & 146 & 84 & 36 & 60 & 24 \\
 \lbrack 1,1,1,1,1 \rbrack & 24 & 24 & 0 & 0 & 0 & 0 & 0 & 0 & 0 \\
 \lbrack 2,1,1,1 \rbrack & 48 & 96 & 48 & 48 & 0 & 0 & 0 & 0 & 0 \\
 \lbrack 2,2,1 \rbrack & 72 & 224 & 224 & 232 & 152 & 80 & 0 & 0 & 0 \\
 \lbrack 3,1,1 \rbrack & 60 & 192 & 192 & 204 & 132 & 72 & 0 & 0 & 0 \\
 \lbrack 3,2 \rbrack & 72 & 318 & 492 & 528 & 702 & 390 & 174 & 282 & 108 \\
 \lbrack 4,1 \rbrack & 56 & 256 & 400 & 440 & 584 & 336 & 144 & 240 & 96 \\
 \lbrack 5 \rbrack & 42 & 256 & 560 & 630 & 1310 & 780 & 680 & 1156 & 484 \\
 \lbrack 1,1,1,1,1,1 \rbrack & 120 & 120 & 0 & 0 & 0 & 0 & 0 & 0 & 0 \\
 \lbrack 2,1,1,1,1 \rbrack & 240 & 480 & 240 & 240 & 0 & 0 & 0 & 0 & 0 \\
 \lbrack 2,2,1,1 \rbrack & 360 & 1120 & 1120 & 1160 & 760 & 400 & 0 & 0 & 0 \\
 \lbrack 2,2,2 \rbrack & 432 & 1864 & 2864 & 3024 & 4024 & 2184 & 1000 & 1592 & 592 \\
 \lbrack 3,1,1,1 \rbrack & 300 & 960 & 960 & 1020 & 660 & 360 & 0 & 0 & 0 \\
 \lbrack 3,2,1 \rbrack & 360 & 1590 & 2460 & 2640 & 3510 & 1950 & 870 & 1410 & 540 \\
 \lbrack 3,3 \rbrack & 300 & 1746 & 3750 & 4098 & 8466 & 4842 & 4368 & 7206 & 2874 \\
 \lbrack 4,1,1 \rbrack & 280 & 1280 & 2000 & 2200 & 2920 & 1680 & 720 & 1200 & 480 \\
 \lbrack 4,2 \rbrack & 280 & 1648 & 3552 & 3912 & 8088 & 4672 & 4176 & 6944 & 2800 \\
 \lbrack 5,1 \rbrack & 210 & 1280 & 2800 & 3150 & 6550 & 3900 & 3400 & 5780 & 2420 \\
 \lbrack 6 \rbrack & 132 & 1024 & 2940 & 3360 & 9760 & 5952 & 8020 & 13832 & 5980 \\
 \lbrack 1,1,1,1,1,1,1 \rbrack & 720 & 720 & 0 & 0 & 0 & 0 & 0 & 0 & 0 \\
 \lbrack 2,1,1,1,1,1 \rbrack & 1440 & 2880 & 1440 & 1440 & 0 & 0 & 0 & 0 & 0 \\
 \lbrack 2,2,1,1,1 \rbrack & 2160 & 6720 & 6720 & 6960 & 4560 & 2400 & 0 & 0 & 0 \\
 \lbrack 2,2,2,1 \rbrack & 2592 & 11184 & 17184 & 18144 & 24144 & 13104 & 6000 & 9552 & 3552 \\
 \lbrack 3,1,1,1,1 \rbrack & 1800 & 5760 & 5760 & 6120 & 3960 & 2160 & 0 & 0 & 0 \\
 \lbrack 3,2,1,1 \rbrack & 2160 & 9540 & 14760 & 15840 & 21060 & 11700 & 5220 & 8460 & 3240 \\
 \lbrack 3,2,2 \rbrack & 2160 & 12336 & 26256 & 28392 & 58416 & 32880 & 30024 & 48984 & 19128 \\
 \lbrack 3,3,1 \rbrack & 1800 & 10476 & 22500 & 24588 & 50796 & 29052 & 26208 & 43236 & 17244 \\
 \lbrack 4,1,1,1 \rbrack & 1680 & 7680 & 12000 & 13200 & 17520 & 10080 & 4320 & 7200 & 2880 \\
 \lbrack 4,2,1 \rbrack & 1680 & 9888 & 21312 & 23472 & 48528 & 28032 & 25056 & 41664 & 16800 \\
 \lbrack 4,3 \rbrack & 1200 & 8856 & 24840 & 27600 & 79176 & 46440 & 64152 & 107616 & 44400 \\
 \lbrack 5,1,1 \rbrack & 1260 & 7680 & 16800 & 18900 & 39300 & 23400 & 20400 & 34680 & 14520 \\
 \lbrack 5,2 \rbrack & 1080 & 8100 & 22860 & 25660 & 73800 & 43860 & 60000 & 101620 & 42540 \\
 \lbrack 6,1 \rbrack & 792 & 6144 & 17640 & 20160 & 58560 & 35712 & 48120 & 82992 & 35880 \\
 \lbrack 7 \rbrack & 429 & 4096 & 14784 & 17094 & 64960 & 40320 & 74935 & 130655 & 57841 \\
 \lbrack 1,1,1,1,1,1,1,1 \rbrack & 5040 & 5040 & 0 & 0 & 0 & 0 & 0 & 0 & 0 \\
 \lbrack 2,1,1,1,1,1,1 \rbrack & 10080 & 20160 & 10080 & 10080 & 0 & 0 & 0 & 0 & 0 \\
 \lbrack 2,2,1,1,1,1 \rbrack & 15120 & 47040 & 47040 & 48720 & 31920 & 16800 & 0 & 0 & 0 \\
 \lbrack 2,2,2,1,1 \rbrack & 18144 & 78288 & 120288 & 127008 & 169008 & 91728 & 42000 & 66864 & 24864 \\
 \lbrack 2,2,2,2 \rbrack & 18144 & 101568 & 214272 & 229008 & 469440 & 259776 & 240432 & 387504 & 148080 \\
 \lbrack 3,1,1,1,1,1 \rbrack & 12600 & 40320 & 40320 & 42840 & 27720 & 15120 & 0 & 0 & 0 \\
 \lbrack 3,2,1,1,1 \rbrack & 15120 & 66780 & 103320 & 110880 & 147420 & 81900 & 36540 & 59220 & 22680 \\
 \lbrack 3,2,2,1 \rbrack & 15120 & 86352 & 183792 & 198744 & 408912 & 230160 & 210168 & 342888 & 133896 \\
 \lbrack 3,3,1,1 \rbrack & 12600 & 73332 & 157500 & 172116 & 355572 & 203364 & 183456 & 302652 & 120708 \\
 \lbrack 3,3,2 \rbrack & 10800 & 77688 & 214992 & 235728 & 670824 & 385920 & 538920 & 891864 & 358992 \\
 \lbrack 4,1,1,1,1 \rbrack & 11760 & 53760 & 84000 & 92400 & 122640 & 70560 & 30240 & 50400 & 20160 \\
 \lbrack 4,2,1,1 \rbrack & 11760 & 69216 & 149184 & 164304 & 339696 & 196224 & 175392 & 291648 & 117600 \\
 \lbrack 4,2,2 \rbrack & 10080 & 73248 & 203520 & 224560 & 640176 & 371328 & 515472 & 858160 & 348624 \\
 \lbrack 4,3,1 \rbrack & 8400 & 61992 & 173880 & 193200 & 554232 & 325080 & 449064 & 753312 & 310800 \\
 \lbrack 4,4 \rbrack & 4900 & 44224 & 155152 & 174152 & 650384 & 387616 & 735260 & 1246220 & 525700 \\
 \lbrack 5,1,1,1 \rbrack & 8820 & 53760 & 117600 & 132300 & 275100 & 163800 & 142800 & 242760 & 101640 \\
 \lbrack 5,2,1 \rbrack & 7560 & 56700 & 160020 & 179620 & 516600 & 307020 & 420000 & 711340 & 297780 \\
 \lbrack 5,3 \rbrack & 4725 & 42900 & 150840 & 169950 & 635280 & 380400 & 719175 & 1223055 & 518505 \\
 \lbrack 6,1,1 \rbrack & 5544 & 43008 & 123480 & 141120 & 409920 & 249984 & 336840 & 580944 & 251160 \\
 \lbrack 6,2 \rbrack & 4158 & 38448 & 136272 & 155172 & 582480 & 353664 & 662970 & 1138458 & 490230 \\
 \lbrack 7,1 \rbrack & 3003 & 28672 & 103488 & 119658 & 454720 & 282240 & 524545 & 914585 & 404887 \\
 \lbrack 8 \rbrack & 1430 & 16384 & 72072 & 84084 & 400384 & 251904 & 605770 & 1065218 & 480238 \\
 \lbrack 1,1,1,1,1,1,1,1,1 \rbrack & 40320 & 40320 & 0 & 0 & 0 & 0 & 0 & 0 & 0 \\
 \lbrack 2,1,1,1,1,1,1,1 \rbrack & 80640 & 161280 & 80640 & 80640 & 0 & 0 & 0 & 0 & 0 \\
 \lbrack 2,2,1,1,1,1,1 \rbrack & 120960 & 376320 & 376320 & 389760 & 255360 & 134400 & 0 & 0 & 0 \\
 \lbrack 2,2,2,1,1,1 \rbrack & 145152 & 626304 & 962304 & 1016064 & 1352064 & 733824 & 336000 & 534912 & 198912 \\
 \lbrack 2,2,2,2,1 \rbrack & 145152 & 812544 & 1714176 & 1832064 & 3755520 & 2078208 & 1923456 & 3100032 & 1184640 \\
 \lbrack 3,1,1,1,1,1,1 \rbrack & 100800 & 322560 & 322560 & 342720 & 221760 & 120960 & 0 & 0 & 0 \\
 \lbrack 3,2,1,1,1,1 \rbrack & 120960 & 534240 & 826560 & 887040 & 1179360 & 655200 & 292320 & 473760 & 181440 \\
 \lbrack 3,2,2,1,1 \rbrack & 120960 & 690816 & 1470336 & 1589952 & 3271296 & 1841280 & 1681344 & 2743104 & 1071168 \\
 \lbrack 3,2,2,2 \rbrack & 103680 & 733440 & 2010816 & 2186784 & 6186720 & 3515472 & 4940256 & 8105856 & 3209952 \\
 \lbrack 3,3,1,1,1 \rbrack & 100800 & 586656 & 1260000 & 1376928 & 2844576 & 1626912 & 1467648 & 2421216 & 965664 \\
 \lbrack 3,3,2,1 \rbrack & 86400 & 621504 & 1719936 & 1885824 & 5366592 & 3087360 & 4311360 & 7134912 & 2871936 \\
 \lbrack 3,3,3 \rbrack & 54000 & 472608 & 1628928 & 1802016 & 6648912 & 3879576 & 7418016 & 12391128 & 5088024 \\
 \lbrack 4,1,1,1,1,1 \rbrack & 94080 & 430080 & 672000 & 739200 & 981120 & 564480 & 241920 & 403200 & 161280 \\
 \lbrack 4,2,1,1,1 \rbrack & 94080 & 553728 & 1193472 & 1314432 & 2717568 & 1569792 & 1403136 & 2333184 & 940800 \\
 \lbrack 4,2,2,1 \rbrack & 80640 & 585984 & 1628160 & 1796480 & 5121408 & 2970624 & 4123776 & 6865280 & 2788992 \\
 \lbrack 4,3,1,1 \rbrack & 67200 & 495936 & 1391040 & 1545600 & 4433856 & 2600640 & 3592512 & 6026496 & 2486400 \\
 \lbrack 4,3,2 \rbrack & 50400 & 445056 & 1540560 & 1712640 & 6334896 & 3721008 & 7089072 & 11896128 & 4922880 \\
 \lbrack 4,4,1 \rbrack & 39200 & 353792 & 1241216 & 1393216 & 5203072 & 3100928 & 5882080 & 9969760 & 4205600 \\
 \lbrack 5,1,1,1,1 \rbrack & 70560 & 430080 & 940800 & 1058400 & 2200800 & 1310400 & 1142400 & 1942080 & 813120 \\
 \lbrack 5,2,1,1 \rbrack & 60480 & 453600 & 1280160 & 1436960 & 4132800 & 2456160 & 3360000 & 5690720 & 2382240 \\
 \lbrack 5,2,2 \rbrack & 45360 & 406560 & 1416480 & 1588400 & 5895440 & 3502400 & 6626400 & 11208320 & 4697760 \\
 \lbrack 5,3,1 \rbrack & 37800 & 343200 & 1206720 & 1359600 & 5082240 & 3043200 & 5753400 & 9784440 & 4148040 \\
 \lbrack 5,4 \rbrack & 19600 & 211780 & 902720 & 1023120 & 4770920 & 2885080 & 7045680 & 12057360 & 5186080 \\
 \lbrack 6,1,1,1 \rbrack & 44352 & 344064 & 987840 & 1128960 & 3279360 & 1999872 & 2694720 & 4647552 & 2009280 \\
 \lbrack 6,2,1 \rbrack & 33264 & 307584 & 1090176 & 1241376 & 4659840 & 2829312 & 5303760 & 9107664 & 3921840 \\
 \lbrack 6,3 \rbrack & 18480 & 201546 & 862344 & 982800 & 4591308 & 2796012 & 6797880 & 11690064 & 5066064 \\
 \lbrack 7,1,1 \rbrack & 24024 & 229376 & 827904 & 957264 & 3637760 & 2257920 & 4196360 & 7316680 & 3239096 \\
 \lbrack 7,2 \rbrack & 16016 & 178038 & 768824 & 885248 & 4158420 & 2567124 & 6198920 & 10759616 & 4735920 \\
 \lbrack 8,1 \rbrack & 11440 & 131072 & 576576 & 672672 & 3203072 & 2015232 & 4846160 & 8521744 & 3841904 \\
 \lbrack 9 \rbrack & 4862 & 65536 & 343200 & 403260 & 2333184 & 1483776 & 4429282 & 7841834 & 3586726 \\
 \lbrack 1,1,1,1,1,1,1,1,1,1 \rbrack & 362880 & 362880 & 0 & 0 & 0 & 0 & 0 & 0 & 0 \\
 \lbrack 2,1,1,1,1,1,1,1,1 \rbrack & 725760 & 1451520 & 725760 & 725760 & 0 & 0 & 0 & 0 & 0 \\
 \lbrack 2,2,1,1,1,1,1,1 \rbrack & 1088640 & 3386880 & 3386880 & 3507840 & 2298240 & 1209600 & 0 & 0 & 0 \\
 \lbrack 2,2,2,1,1,1,1 \rbrack & 1306368 & 5636736 & 8660736 & 9144576 & 12168576 & 6604416 & 3024000 & 4814208 & 1790208 \\
 \lbrack 2,2,2,2,1,1 \rbrack & 1306368 & 7312896 & 15427584 & 16488576 & 33799680 & 18703872 & 17311104 & 27900288 & 10661760 \\
 \lbrack 2,2,2,2,2 \rbrack & 1119744 & 7782912 & 21139968 & 22783488 & 64097280 & 35944320 & 50880000 & 82714368 & 32205312 \\
 \lbrack 3,1,1,1,1,1,1,1 \rbrack & 907200 & 2903040 & 2903040 & 3084480 & 1995840 & 1088640 & 0 & 0 & 0 \\
 \lbrack 3,2,1,1,1,1,1 \rbrack & 1088640 & 4808160 & 7439040 & 7983360 & 10614240 & 5896800 & 2630880 & 4263840 & 1632960 \\
 \lbrack 3,2,2,1,1,1 \rbrack & 1088640 & 6217344 & 13233024 & 14309568 & 29441664 & 16571520 & 15132096 & 24687936 & 9640512 \\
 \lbrack 3,2,2,2,1 \rbrack & 933120 & 6600960 & 18097344 & 19681056 & 55680480 & 31639248 & 44462304 & 72952704 & 28889568 \\
 \lbrack 3,3,1,1,1,1 \rbrack & 907200 & 5279904 & 11340000 & 12392352 & 25601184 & 14642208 & 13208832 & 21790944 & 8690976 \\
 \lbrack 3,3,2,1,1 \rbrack & 777600 & 5593536 & 15479424 & 16972416 & 48299328 & 27786240 & 38802240 & 64214208 & 25847424 \\
 \lbrack 3,3,2,2 \rbrack & 583200 & 5031360 & 17184096 & 18894528 & 69251904 & 40031568 & 76726368 & 127336752 & 51631920 \\
 \lbrack 3,3,3,1 \rbrack & 486000 & 4253472 & 14660352 & 16218144 & 59840208 & 34916184 & 66762144 & 111520152 & 45792216 \\
 \lbrack 4,1,1,1,1,1,1 \rbrack & 846720 & 3870720 & 6048000 & 6652800 & 8830080 & 5080320 & 2177280 & 3628800 & 1451520 \\
 \lbrack 4,2,1,1,1,1 \rbrack & 846720 & 4983552 & 10741248 & 11829888 & 24458112 & 14128128 & 12628224 & 20998656 & 8467200 \\
 \lbrack 4,2,2,1,1 \rbrack & 725760 & 5273856 & 14653440 & 16168320 & 46092672 & 26735616 & 37113984 & 61787520 & 25100928 \\
 \lbrack 4,2,2,2 \rbrack & 544320 & 4739712 & 16256448 & 17969088 & 66013440 & 38427840 & 73352256 & 122332416 & 50001792 \\
 \lbrack 4,3,1,1,1 \rbrack & 604800 & 4463424 & 12519360 & 13910400 & 39904704 & 23405760 & 32332608 & 54238464 & 22377600 \\
 \lbrack 4,3,2,1 \rbrack & 453600 & 4005504 & 13865040 & 15413760 & 57014064 & 33489072 & 63801648 & 107065152 & 44305920 \\
 \lbrack 4,3,3 \rbrack & 252000 & 2636676 & 11020320 & 12314880 & 56606616 & 33536232 & 82287504 & 138861504 & 58192272 \\
 \lbrack 4,4,1,1 \rbrack & 352800 & 3184128 & 11170944 & 12538944 & 46827648 & 27908352 & 52938720 & 89727840 & 37850400 \\
 \lbrack 4,4,2 \rbrack & 235200 & 2480224 & 10411776 & 11678976 & 53842368 & 32068928 & 78535744 & 133006400 & 56101696 \\
 \lbrack 5,1,1,1,1,1 \rbrack & 635040 & 3870720 & 8467200 & 9525600 & 19807200 & 11793600 & 10281600 & 17478720 & 7318080 \\
 \lbrack 5,2,1,1,1 \rbrack & 544320 & 4082400 & 11521440 & 12932640 & 37195200 & 22105440 & 30240000 & 51216480 & 21440160 \\
 \lbrack 5,2,2,1 \rbrack & 408240 & 3659040 & 12748320 & 14295600 & 53058960 & 31521600 & 59637600 & 100874880 & 42279840 \\
 \lbrack 5,3,1,1 \rbrack & 340200 & 3088800 & 10860480 & 12236400 & 45740160 & 27388800 & 51780600 & 88059960 & 37332360 \\
 \lbrack 5,3,2 \rbrack & 226800 & 2405070 & 10119720 & 11388960 & 52566420 & 31440900 & 76797720 & 130442160 & 55271040 \\
 \lbrack 5,4,1 \rbrack & 176400 & 1906020 & 8124480 & 9208080 & 42938280 & 25965720 & 63411120 & 108516240 & 46674720 \\
 \lbrack 5,5 \rbrack & 79380 & 1005050 & 5083100 & 5798500 & 32749300 & 20004300 & 60464300 & 104157060 & 45424140 \\
 \lbrack 6,1,1,1,1 \rbrack & 399168 & 3096576 & 8890560 & 10160640 & 29514240 & 17998848 & 24252480 & 41827968 & 18083520 \\
 \lbrack 6,2,1,1 \rbrack & 299376 & 2768256 & 9811584 & 11172384 & 41938560 & 25463808 & 47733840 & 81968976 & 35296560 \\
 \lbrack 6,2,2 \rbrack & 199584 & 2153016 & 9133152 & 10375008 & 48115920 & 29136144 & 70716480 & 121131456 & 52049952 \\
 \lbrack 6,3,1 \rbrack & 166320 & 1813914 & 7761096 & 8845200 & 41321772 & 25164108 & 61180920 & 105210576 & 45594576 \\
 \lbrack 6,4 \rbrack & 77616 & 986352 & 4995864 & 5711544 & 32280768 & 19772256 & 59659392 & 102964848 & 45034176 \\
 \lbrack 7,1,1,1 \rbrack & 216216 & 2064384 & 7451136 & 8615376 & 32739840 & 20321280 & 37767240 & 65850120 & 29151864 \\
 \lbrack 7,2,1 \rbrack & 144144 & 1602342 & 6919416 & 7967232 & 37425780 & 23104116 & 55790280 & 96836544 & 42623280 \\
 \lbrack 7,3 \rbrack & 72072 & 926226 & 4714458 & 5424090 & 30737700 & 18980220 & 57008700 & 98934528 & 43654548 \\
 \lbrack 8,1,1 \rbrack & 102960 & 1179648 & 5189184 & 6054048 & 28827648 & 18137088 & 43615440 & 76695696 & 34577136 \\
 \lbrack 8,2 \rbrack & 61776 & 809312 & 4160896 & 4833568 & 27570112 & 17246016 & 51532720 & 90214768 & 40408080 \\
 \lbrack 9,1 \rbrack & 43758 & 589824 & 3088800 & 3629340 & 20998656 & 13353984 & 39863538 & 70576506 & 32280534 \\
 \lbrack 10 \rbrack & 16796 & 262144 & 1604460 & 1896180 & 13025280 & 8355840 & 30067180 & 53529476 & 24771604 \\
\hline
\end{longtblr}
}


\bibliographystyle{utphys}
\bibliography{sample.bib}{}

\providecommand{\href}[2]{#2}\begingroup\raggedright\begin{thebibliography}{10}

\bibitem{Borot:2013asy}
G.~Borot and A.~Guionnet, ``{Asymptotic Expansion of $\beta$ Matrix Models in
  the One-cut Regime},''
  \href{https://dx.doi.org/10.1007/s00220-012-1619-4}{{\em Comm. Math. Phys.}
  {\bfseries 317} (2013) 447--483},
  \href{https://arxiv.org/abs/1107.1167}{{\ttfamily arXiv:1107.1167
  [math.PR]}}.

\bibitem{Ekedahl:2001hur}
T.~Ekedahl, S.~Lando, M.~Shapiro, and A.~Vainshtein, ``{Hurwitz numbers and
  intersections on moduli spaces of curves},''
  \href{https://dx.doi.org/10.1007/s002220100164}{{\em Invent. Math.}
  {\bfseries 146} no.~2, (2001) 297--327},
  \href{https://arxiv.org/abs/math/0004096}{{\ttfamily arXiv:math/0004096}}.

\bibitem{Okounkov:2000gx}
A.~Okounkov and R.~Pandharipande, ``{Gromov--Witten theory, Hurwitz numbers,
  and matrix models. 1.},''
  \href{https://arxiv.org/abs/math/0101147}{{\ttfamily arXiv:math/0101147}}.

\bibitem{Okounkov:2002cja}
A.~Okounkov and R.~Pandharipande, ``{Gromov--Witten theory, Hurwitz theory, and
  completed cycles},''
  \href{https://dx.doi.org/10.4007/annals.2006.163.517}{{\em Ann. Math.}
  {\bfseries 163} (2006) 517--560},
  \href{https://arxiv.org/abs/math/0204305}{{\ttfamily arXiv:math/0204305}}.

\bibitem{Eynard:2016yaa}
B.~Eynard, \href{https://dx.doi.org/10.1007/978-3-7643-8797-6}{{\em {Counting
  Surfaces}}}, vol.~70 of {\em Progress in Mathematical Physics}.
\newblock Springer, 2016.

\bibitem{MIRONOV199047}
A.~Mironov and A.~Morozov, ``{On the origin of Virasoro constraints in matrix
  model: Lagrangian approach},''
  \href{https://dx.doi.org/10.1016/0370-2693(90)91078-P}{{\em Phys. Lett. B}
  {\bfseries 252} no.~1, (1990) 47--52}.

\bibitem{Mehta:1990}
M.~L. Mehta, \href{https://dx.doi.org/10.1016/C2009-0-22297-5}{{\em {Random
  Matrices}}}.
\newblock Academic Press, 1991.

\bibitem{Forrester:2010}
P.~J. Forrester, \href{https://dx.doi.org/doi:10.1515/9781400835416}{{\em
  {Log-Gases and Random Matrices (LMS-34)}}}.
\newblock Princeton University Press, Princeton, 2010.

\bibitem{Akemann:10.1093}
G.~Akemann, J.~Baik, and P.~Di~Francesco,
  \href{https://dx.doi.org/10.1093/oxfordhb/9780198744191.001.0001}{{\em {The
  Oxford Handbook of Random Matrix Theory}}}.
\newblock Oxford University Press, 09, 2015.

\bibitem{Morozov:2009xk}
A.~Morozov and S.~Shakirov, ``{Generation of Matrix Models by W-operators},''
  \href{https://dx.doi.org/10.1088/1126-6708/2009/04/064}{{\em JHEP} {\bfseries
  04} (2009) 064}, \href{https://arxiv.org/abs/0902.2627}{{\ttfamily
  arXiv:0902.2627 [hep-th]}}.

\bibitem{Cassia:2020uxy}
L.~Cassia, R.~Lodin, and M.~Zabzine, ``{On matrix models and their
  $q$-deformations},'' \href{https://dx.doi.org/10.1007/JHEP10(2020)126}{{\em
  JHEP} {\bfseries 10} (2020) 126},
  \href{https://arxiv.org/abs/2007.10354}{{\ttfamily arXiv:2007.10354
  [hep-th]}}.

\bibitem{Mironov:2022fsr}
A.~Mironov and A.~Morozov, ``{Superintegrability summary},''
  \href{https://dx.doi.org/10.1016/j.physletb.2022.137573}{{\em Phys. Lett. B}
  {\bfseries 835} (2022) 137573},
  \href{https://arxiv.org/abs/2201.12917}{{\ttfamily arXiv:2201.12917
  [hep-th]}}.

\bibitem{Mishnyakov:2022bkg}
V.~Mishnyakov and A.~Oreshina, ``{Superintegrability in $\beta $-deformed
  Gaussian Hermitian matrix model from W-operators},''
  \href{https://dx.doi.org/10.1140/epjc/s10052-022-10466-y}{{\em Eur. Phys. J.
  C} {\bfseries 82} no.~6, (2022) 548},
  \href{https://arxiv.org/abs/2203.15675}{{\ttfamily arXiv:2203.15675
  [hep-th]}}.

\bibitem{Bawane:2022cpd}
A.~Bawane, P.~Karimi, and P.~Su\l{}kowski, ``{Proving superintegrability in
  $\beta$-deformed eigenvalue models},''
  \href{https://dx.doi.org/10.21468/SciPostPhys.13.3.069}{{\em SciPost Phys.}
  {\bfseries 13} no.~3, (2022) 069},
  \href{https://arxiv.org/abs/2206.14763}{{\ttfamily arXiv:2206.14763
  [hep-th]}}.

\bibitem{WALSH1972192}
T.~Walsh and A.~Lehman, ``{Counting rooted maps by Genus. I},''
  \href{https://dx.doi.org/10.1016/0095-8956(72)90056-1}{{\em J. Comb. Theory,
  Ser. B} {\bfseries 13} no.~3, (1972) 192--218}.

\bibitem{Carlet:2020enf}
G.~Carlet, J.~van~de Leur, H.~Posthuma, and S.~Shadrin, ``{Higher genera
  Catalan numbers and Hirota equations for extended nonlinear Schrödinger
  hierarchy},'' \href{https://dx.doi.org/10.1007/s11005-021-01391-4}{{\em Lett.
  Math. Phys.} {\bfseries 111} no.~3, (2021) 63},
  \href{https://arxiv.org/abs/2012.03239}{{\ttfamily arXiv:2012.03239
  [math-ph]}}.

\bibitem{Dumitriu:2002mat}
I.~Dumitriu and A.~Edelman, ``{Matrix models for beta ensembles},''
  \href{https://dx.doi.org/10.1063/1.1507823}{{\em J. Math. Phys.} {\bfseries
  43} no.~11, (10, 2002) 5830--5847},
  \href{https://arxiv.org/abs/math-ph/0206043}{{\ttfamily arXiv:math-ph/0206043
  [math-ph]}}.

\bibitem{Dumitriu_2006}
I.~Dumitriu and A.~Edelman, ``{Global spectrum fluctuations for the
  $\beta$-Hermite and $\beta$-Laguerre ensembles via matrix models},''
  \href{https://dx.doi.org/10.1063/1.2200144}{{\em J. Math. Phys.} {\bfseries
  47} no.~6, (2006) 063302},
  \href{https://arxiv.org/abs/math-ph/0510043}{{\ttfamily arXiv:math-ph/0510043
  [math-ph]}}.

\bibitem{Dumitriu:2012glo}
I.~Dumitriu and E.~Paquette, ``{Global Fluctuations for Linear Statistics of
  $\beta$-Jacobi Ensembles},''
  \href{https://dx.doi.org/10.1142/S201032631250013X}{{\em Random Matrices:
  Theory and Applications} {\bfseries 01} no.~04, (2012) 1250013},
  \href{https://arxiv.org/abs/1203.6103}{{\ttfamily arXiv:1203.6103
  [math.PR]}}.

\bibitem{Cassia:2021dpd}
L.~Cassia, R.~Lodin, and M.~Zabzine, ``{Virasoro Constraints Revisited},''
  \href{https://dx.doi.org/10.1007/s00220-021-04138-3}{{\em Commun. Math.
  Phys.} {\bfseries 387} no.~3, (2021) 1729--1755},
  \href{https://arxiv.org/abs/2102.05682}{{\ttfamily arXiv:2102.05682
  [hep-th]}}.

\bibitem{Forrester:2021hig}
P.~J. Forrester, ``{High-low temperature dualities for the classical
  $\beta$-ensembles},''
  \href{https://dx.doi.org/10.1142/S2010326322500356}{{\em {Random Matrices:
  Theory and Applications}} {\bfseries 11} no.~04, (2022) 2250035},
  \href{https://arxiv.org/abs/2103.11250}{{\ttfamily arXiv:2103.11250
  [math-ph]}}.

\bibitem{Morozov:2019gbt}
A.~Morozov, ``{On $W$-representations of $\beta$- and $q,t$-deformed matrix
  models},'' \href{https://dx.doi.org/10.1016/j.physletb.2019.03.047}{{\em
  Phys. Lett. B} {\bfseries 792} (2019) 205--213},
  \href{https://arxiv.org/abs/1901.02811}{{\ttfamily arXiv:1901.02811
  [hep-th]}}.

\bibitem{Goulden:1996con}
I.~P. Goulden and D.~M. Jackson, ``{Connection coefficients, matchings, maps
  and combinatorial conjectures for Jack symmetric functions},''
  \href{https://dx.doi.org/10.1090/S0002-9947-96-01503-6}{{\em Trans. Am. Math.
  Soc.} {\bfseries 348} (1996) 873--892}.

\bibitem{LaCroix:2009}
M.~A. La~Croix, {\em {The combinatorics of the Jack parameter and the genus
  series for topological maps}}.
\newblock PhD thesis, University of Waterloo, 2009.
\newblock
  \url{https://uwspace.uwaterloo.ca/bitstream/handle/10012/4561/LaCroix_Michael.pdf}.

\bibitem{Brini:2010fc}
A.~Brini, M.~Marino, and S.~Stevan, ``{The uses of the refined matrix model
  recursion},'' \href{https://dx.doi.org/10.1063/1.3587063}{{\em J. Math.
  Phys.} {\bfseries 52} (2011) 052305},
  \href{https://arxiv.org/abs/1010.1210}{{\ttfamily arXiv:1010.1210 [hep-th]}}.

\bibitem{Witte:2013cea}
N.~S. Witte and P.~J. Forrester, ``{Moments of the Gaussian $\beta$ Ensembles
  and the large-$N$ expansion of the densities},''
  \href{https://dx.doi.org/10.1063/1.4886477}{{\em J. Math. Phys.} {\bfseries
  55} (2014) 083302}, \href{https://arxiv.org/abs/1310.8498}{{\ttfamily
  arXiv:1310.8498 [math.CA]}}.

\bibitem{Forrester:2017lar}
P.~J. Forrester, A.~A. Rahman, and N.~S. Witte, ``{Large $N$ expansions for the
  Laguerre and Jacobi $\beta$ ensembles from the loop equations},''
  \href{https://dx.doi.org/10.1063/1.4997778}{{\em J. Math. Phys.} {\bfseries
  58} no.~11, (2017) 113303},
  \href{https://arxiv.org/abs/1707.04842}{{\ttfamily arXiv:1707.04842
  [math-ph]}}.

\bibitem{Mezzadri:2017mom}
F.~Mezzadri, A.~K. Reynolds, and B.~Winn, ``{Moments of the eigenvalue
  densities and of the secular coefficients of $\beta$-ensembles},''
  \href{https://dx.doi.org/10.1088/1361-6544/aa518c}{{\em Nonlinearity}
  {\bfseries 30} no.~3, (2017) 1034},
  \href{https://arxiv.org/abs/1510.02390}{{\ttfamily arXiv:1510.02390
  [math-ph]}}.

\bibitem{Goulden:1996b}
I.~P. Goulden and D.~M. Jackson, ``{Maps in Locally Orientable Surfaces, the
  Double Coset Algebra, and Zonal Polynomials},''
  \href{https://dx.doi.org/10.4153/CJM-1996-029-x}{{\em Can. J. Math.}
  {\bfseries 48} no.~3, (1996) 569--584}.

\bibitem{CHAPUY2022108645}
G.~Chapuy and M.~Dołęga, ``{Non-orientable branched coverings, $b$-Hurwitz
  numbers, and positivity for multiparametric Jack expansions},''
  \href{https://dx.doi.org/10.1016/j.aim.2022.108645}{{\em Adv. Math.}
  {\bfseries 409} (2022) 108645},
  \href{https://arxiv.org/abs/2004.07824}{{\ttfamily arXiv:2004.07824
  [math.CO]}}.

\bibitem{10.1093/imrn/rnac177}
V.~Bonzom, G.~Chapuy, and M.~Dołęga, ``{$b$-Monotone Hurwitz Numbers:
  Virasoro Constraints, BKP Hierarchy, and $O(N)$-BGW Integral},''
  \href{https://dx.doi.org/10.1093/imrn/rnac177}{{\em Int. Math. Res. Not.}
  {\bfseries 2023} no.~14, (07, 2022) 12172--12230},
  \href{https://arxiv.org/abs/2109.01499}{{\ttfamily arXiv:2109.01499
  [math.CO]}}.

\bibitem{Morozov:2018eiq}
A.~Morozov, A.~Popolitov, and S.~Shakirov, ``{On (q,t)-deformation of Gaussian
  matrix model},''
  \href{https://dx.doi.org/10.1016/j.physletb.2018.08.006}{{\em Phys. Lett. B}
  {\bfseries 784} (2018) 342--344},
  \href{https://arxiv.org/abs/1803.11401}{{\ttfamily arXiv:1803.11401
  [hep-th]}}.

\bibitem{Cassia:2021uly}
L.~Cassia and M.~Zabzine, ``{On refined Chern--Simons and refined ABJ matrix
  models},'' \href{https://dx.doi.org/10.1007/s11005-022-01518-1}{{\em Lett.
  Math. Phys.} {\bfseries 112} no.~2, (2022) 21},
  \href{https://arxiv.org/abs/2107.07525}{{\ttfamily arXiv:2107.07525
  [hep-th]}}.

\bibitem{Cunden:2021int}
F.~D. Cunden, A.~Dahlqvist, and N.~O’Connell, ``{Integer moments of complex
  Wishart matrices and Hurwitz numbers},''
  \href{https://dx.doi.org/10.4171/aihpd/103}{{\em {Ann. Inst. Henri Poincaré
  Comb. Phys. Interact.}} {\bfseries 8} no.~2, (2021) 243--268},
  \href{https://arxiv.org/abs/1809.10033}{{\ttfamily arXiv:1809.10033
  [math-ph]}}.

\bibitem{Gisonni:2020lag}
M.~Gisonni, T.~Grava, and G.~Ruzza, ``{Laguerre Ensemble: Correlators, Hurwitz
  Numbers and Hodge Integrals},''
  \href{https://dx.doi.org/10.1007/s00023-020-00922-4}{{\em Ann. Henri
  Poincaré} {\bfseries 21} no.~10, (2020) 3285--3339},
  \href{https://arxiv.org/abs/1912.00525}{{\ttfamily arXiv:1912.00525
  [math-ph]}}.

\bibitem{Mironov:2022xjc}
A.~Mironov and A.~Morozov, ``{Spectral curves and W-representations of matrix
  models},'' \href{https://dx.doi.org/10.1007/JHEP03(2023)116}{{\em JHEP}
  {\bfseries 03} (2023) 116},
  \href{https://arxiv.org/abs/2210.09993}{{\ttfamily arXiv:2210.09993
  [hep-th]}}.

\bibitem{Macdonald:book}
I.~G. Macdonald, {\em {Symmetric Functions and Hall Polynomials}}.
\newblock Oxford Mathematical Monographs. Oxford University Press, 2~ed., 1995.

\end{thebibliography}\endgroup

\end{document}